\newcommand{\RN}[1]{  \textup{\uppercase\expandafter{\romannumeral#1}}}
\newtheorem{claim}{Claim}
\newtheorem{definition}{Definition}
\newtheorem{proposition}{Proposition}
\newtheorem{theorem}{Theorem}
\newtheorem{corollary}{Corollary}
\newtheorem{lemma}{Lemma}
\newtheorem{assumption}{Assumption}
\newtheorem{remark}{Remark}
\DeclareMathOperator*{\argmin}{argmin}
\DeclareMathOperator*{\diag}{diag}
\theoremstyle{definition} 
\newtheorem{example}{Example}
\title{Overabundant Information and Learning Traps\footnote{We thank Vasilis Syrgkanis for insightful comments in early conversations about this project. We are also grateful to Aislinn Bohren, Ben Golub, Paul Milgrom, Andrew Postlewaite, Ilya Segal, Carlos Segura, Rajiv Sethi, Andrzej Skrzypacz, and Yuichi Yamamoto for comments that improved this paper.}}
\author{Annie Liang\thanks{University of Pennsylvania} \quad \quad Xiaosheng Mu\thanks{Harvard University}}
\begin{document}
\maketitle

\begin{abstract}
We develop a model of social learning from overabundant information: Short-lived agents sequentially choose from a large set of (flexibly correlated) information sources for prediction of an unknown state. Signal realizations are public.  We demonstrate two starkly different long-run outcomes: (1) efficient information aggregation, where the community eventually learns as fast as possible; (2) ``learning traps," where the community gets stuck observing suboptimal sources and learns inefficiently. Our main results identify a simple property of the signal correlation structure that separates these outcomes. In both regimes, we characterize which sources are observed in the long run and how often.
\end{abstract}

%%%%%%%%
\section{Introduction}
In many learning problems, agents cannot design their information in a completely flexible way. Instead, they choose from a given, finite (though often large) set of information sources. For instance, a researcher studying depression cannot access arbitrarily precise signals about the  role of stress. He can however acquire many kinds of information related to this question; for example, he might acquire neurochemical and genetic data from individuals undergoing stressful life changes, or collect observational data about simultaneous occurrences of depression and stress.

The value to acquiring new information depends on how that information relates to what is already known. Thus, past information acquisitions can shape the perception of which kinds of information are currently most useful. For example, a better understanding of the relationship between serotonin and depression increases the value to comparing serotonin levels across individuals (as opposed to other neurochemicals whose roles are less well-understood). Models of information acquisition often avoid explicitly describing these informational complementarities, as they complicate the analysis.\footnote{Exceptions include \citet{Borgers}, \citet{ChenWaggoner}, and \citet{Chade} among others.} But relationships across diverse sources  can have significant implications for behavior, in particular in dynamic learning environments where information is passed down over time.

The main contribution of this paper is to identify a new externality driven by complementarities across sources, and to characterize the consequences of this externality for long-run aggregation of information. We show that past information acquisitions have the potential to shape future acquisitions in two starkly different ways.
\begin{itemize}
\item \emph{Efficient information aggregation:} Past information helps agents to identify the ``best" kinds of information. At all sufficiently late periods, a social planner cannot improve on the history of acquisitions.
\item \emph{Learning traps:} Past information pushes agents to acquire information that leads to inefficiently slow learning. Early suboptimal choices propagate over time.
\end{itemize}
Relationships across the entire \emph{set} of information sources are relevant to which of these outcomes emerges, and our main results identify a key property that determines the outcome. 

In our model, agents are indexed by (discrete) time and sequentially choose from a large number of information sources, each  associated with a signal about a payoff-relevant state. We allow for flexible correlation across the sources by modeling each kind of information as a (noisy) linear combination of the payoff-relevant state and a set of ``confounding" variables. After acquiring information, each agent predicts the payoff-relevant state.

In contrast to the classic sequential learning model \citep{Hirshleifer,Banerjee,SmithSorenson}, we suppose that signal realizations are public, so that predictions are based on the history of realizations thus far. This departure permits us to focus on the externalities created by agents' choice of \emph{kind} of information, as opposed to the more frequently studied frictions that emerge from inference. 

We are mainly interested in settings with many sources of information, including some that are redundant. Formally, agents can completely learn the payoff-relevant state from exclusive (repeated) observation of signals from each of several possible subsets. As a benchmark, we first derive the \emph{optimal} long-run frequency of signal acquisitions. These correspond to the choices that maximize information revelation about the payoff-relevant state, and also to the choices that maximize a discounted sum of agent payoffs (in a patient limit).

Our main results demonstrate that whether society's acquisitions converge to this optimal long-run frequency depends critically on how many signals are needed to identify the payoff-relevant state. The key intuition refers back to an observation made in \citet{SethiYildiz}: An agent who repeatedly observes a source confounded by an unknown parameter learns \emph{both} about the payoff-relevant state and also about the confounding term, and hence improves his interpretation of \emph{this} source over time. In our setting, where a single confounding term can affect multiple sources, there is a further spillover effect: Learning from one source helps agents to interpret information from \emph{all} sources confounded by the same parameters. 

Suppose that in order to learn the payoff-relevant state, agents must observe a set of sources that reveals all of the confounding terms as well. Then endogenously, agents will acquire information that (collectively) reveals all of the unknowns. This will lead agents  to evaluate \emph{all} sources by an ``objective" asymptotic criterion, which then reveals the best set of sources. More formally, we obtain the following result: If $K$ sources are required to recover the payoff-relevant state (where $K$ is also the number of unknown states), then long-run acquisitions are optimal starting from any prior belief. 

In contrast, if it is possible to learn the payoff-relevant state without recovering all of the confounding terms, then agents can persistently undervalue sources that provide information confounded by these remaining variables. Thus, long-run learning may be inefficient.  Our second main result says that any set of fewer than $K$ sources that recovers the payoff-relevant state creates a ``learning trap" under some set of prior beliefs. We further show that the long-run inefficiency under a learning trap\textemdash measured as the ratio of optimal aggregated payoffs and achieved aggregated payoffs (under a social planner criterion)\textemdash can be arbitrarily large. 

The basic friction here is that investment in learning about confounding terms is socially beneficial, but not necessarily optimal for individuals.  Our main results show that this wedge between individual incentives and social objectives does not preclude long-run efficient learning. When the available kinds of information are related in certain ways, individual incentives will nevertheless endogenously drive individuals to acquire information in a way that is socially efficient. 
%For example, better brain imaging technology may allow researchers to make significant advances towards understanding depression. If development of these technologies requires long-term investments, and researchers are rewarded only for their immediate contributions to knowledge, they may prefer instead to exploit existing technologies.

In the remaining cases, interventions may be needed to transition agents towards better sets of sources. In the final part of our paper, we study possible such interventions. We show that policymakers can restore efficient information aggregation by providing certain kinds of free information (that we characterize), or by reshaping the reward structure so that agents' predictions are based on information that they acquire over many periods. The success of these interventions depends on specific features of the informational environment, as we will discuss.

%%%%
\subsection{Related Literature}\label{sec:relatedLit}
A recent literature considers choice from different kinds of information sources \citep{SethiYildiz,CheMierendorff,FudenbergStrackStrzalecki,Mayskaya,LiangMuSyrgkanis,SethiYildiz2}. We build upon \cite{LiangMuSyrgkanis} in particular, which introduced the framework we use here (see Section \ref{sec:model}) under the restriction that all of the sources are required to learn the payoff-relevant state. This restriction rules out the possibility of overabundant information, which is the focus of the present paper. 

\citet{SethiYildiz,SethiYildiz2} study long-run (myopic) acquisitions from a large number of Gaussian sources, as we do. Our model differs from this work in a few key ways: First, \citet{SethiYildiz,SethiYildiz2} consider stochastic error variances, so that the ``best" sources vary from period to period; in contrast, we fix error variances, so there is (generically) a unique ``best" asymptotic set. Second, \citet{SethiYildiz,SethiYildiz2} focus on correlation structures that fall under our ``learning traps" result (part (a) of Theorem \ref{thm:general}), while we explore arbitrary correlation structures and show that many  lead to optimal learning.

Our model builds on the social learning and herding literatures \citep{SmithSorenson,Banerjee,Hirshleifer}, which consider information aggregation by short-lived agents who sequentially acquire information.  At a high level, the externality identified in our paper relates to the classic externality from this literature: In both settings, the precision of public information can grow inefficiently slowly because of endogenous information acquisitions driven by past choices. But in the present paper, all signal realizations are publicly and perfectly observed, which turns off the inference problem  essential to the existence of cascades in standard herding models. Our focus is on a new mechanism, in which externalities arise through choice of \emph{kind} of information; as we will see, this externality has a rather different structure.

Our setting with choice of information connects to the work by \citet{BurguetVives}, \cite{FrankPai}, and \citet{Ali}, which introduced endogenous information acquisition to social learning models. Relative to this work, our paper considers  choice from a fixed set of information sources (with a capacity constraint), in contrast to choice from a flexible set of information sources (with a cost on precision). Our results focus on the \emph{speed} of learning, as in \citet{Vives}, \citet{GolubJackson}, \citet{Tamuz}, and \citet{StrackTamuz} among others.

Our motivation of the optimal frequency in Section \ref{sec:optimalAsymp} is related to the experimental design literature in statistics, and in particular to the notion of $\mathbf{c}$-optimality (choice of $t$ experiments to minimize the posterior variance of an unknown state). \citet{Chaloner} showed that a $c$-optimal design exists on at most $K$ points. Our Theorem \ref{thm:totalOpt} extends this result, supplying a characterization of the optimal design itself and demonstrating uniqueness.\footnote{Another difference is that \citet{Chaloner} studies the optimal continuous design, while we impose an integer constraint on signal counts.} We further show that the optimal frequency maximizes a discounted objective in the patient limit. Finally, our main results admit a re-interpretation that connects to a literature on learning convergence in potential games, with more detail given in Section \ref{sec:intuition}.

%%%%%%%%
\section{Framework}\label{sec:model}
There are $K$ persistent unknown states: a payoff-relevant state $\omega$ and $K-1$ confounding states $b_1, \dots, b_{K-1}$. We assume that the state vector $\theta:=(\omega, b_1,\dots,b_{K-1})'$ follows a multivariate normal distribution $\mathcal{N}(\mu^0, \Sigma^0)$,\footnote{All vectors in this paper are column vectors.} where the prior covariance matrix $\Sigma^0$ has full rank.\footnote{The full rank assumption is without loss of generality: If there is linear dependence across the states, the model can be mapped into an equivalent setting that has a lower dimensional state space and that satisfies the full rank condition.}

Agents have access to $N$ \emph{kinds} or \emph{sources} of information. Observation of source $i$ produces an independent realization of the random variable
\[
X_i = \langle c_i, \theta \rangle + \epsilon_i, \quad \epsilon_i \sim \mathcal{N}(0,1)
\]
where $c_i = (c_{i1}, \dots, c_{iK})'$ is a vector of constants, and the error terms $\epsilon_i^t$ are independent from each other and over time. It is without loss to normalize the error terms as we do above, since the coefficients $c_i$ are unrestricted (thus, signals can be of differing precision levels). Throughout, we take $C$ to be the $N \times K$ coefficient matrix whose $i$-th row is $c_i'$. 

The payoff-irrelevant states produce correlations across the sources, and we can interpret these states for example as:
\begin{itemize}
\item \emph{Confounding explanatory variables}: Observation of signal $i$ produces the (random) outcome $y= \omega c^1_i + b_1 c^2_i + \dots b_{K-1} c^K_i +\epsilon_i$, which depends linearly on an observable characteristic vector $c_i$. For example, $y$ might be the average incidence of depression in a group of individuals with characteristics $c_i$. The state of interest $\omega$ is the coefficient on a given characteristic $c^1_i$, and the payoff-irrelevant states are the unknown coefficients on the auxiliary characteristics. Different sources represent subpopulations with different characteristics.

\item \emph{Knowledge and technologies that aid interpretation of information}: Interpret each signal as a measurement. For example, researchers studying depression can acquire measurements of various neurochemicals from lab subjects. Neurochemicals differ in how precisely they can be measured (measurement error) and how well their role is understood (precision of interpretation). The confounding terms represent the quality of these different measurement technologies, and also the degree of background contextual knowledge.
\end{itemize}

Agents indexed by (discrete) time $t \in \mathbb{N}$ move sequentially. Each agent $t$ chooses one of the $N$ signals and observes an independent realization. He then predicts the state $\omega$, receiving a payoff of $-(a-\omega)^2$. We assume throughout that all signal realizations are public; thus, predictions are based on the entire history of signal acquisitions and realizations.

 Throughout, the set of signals is indexed by $[N] = \{1, \dots, N\}$. We call a subset of signals $\mathcal{S} \subset [N]$ \emph{spanning} if the vectors $\{c_i: i \in \mathcal{S}\}$ span the coordinate vector $e_1=(1,0,\dots,0)' \in \mathbb{R}^{K}$, so that it is possible to learn the payoff-relevant state $\omega$ by repeatedly observing signals from only $\mathcal{S}$. We call $\mathcal{S}$ \emph{minimally spanning} if it is spanning, and no proper subset of $\mathcal{S}$ is spanning.
 
We assume in this paper that the complete set of signals $[N]$ is spanning, so that the payoff-relevant state can be recovered by observing all signals infinitely often.\footnote{This assumption is without loss, and our results do extend to situations where $\omega$ is \emph{not} identified from the available signals. To see this, we first take a linear transformation and work with the following equivalent model: The state vector $\tilde{\theta}$ is $K$-dimensional \emph{standard Gaussian}, each signal $X_i = \langle \tilde{c_i}, \tilde{\theta} \rangle + \epsilon_i$, and the payoff-relevant parameter is $\langle u, \tilde{\theta} \rangle$ for some fixed vector $u$. Let $R$ be the subspace of $\mathbb{R}^K$ spanned by $\tilde{c}_1, \dots, \tilde{c}_N$. Then project $u$ onto $R$: $u = r + w$ with $r \in R$ and $w$ orthogonal to $R$. Thus $\langle u, \tilde{\theta} \rangle = \langle r, \tilde{\theta} \rangle + \langle w, \tilde{\theta} \rangle$. By assumption, the random variable $\langle w, \tilde{\theta} \rangle$ is independent from any random variable $\langle c, \tilde{\theta} \rangle$ with $c \in R$ (because they have zero covariance). Thus the uncertainty about $\langle w, \tilde{\theta} \rangle$ cannot be reduced upon any signal observation. Consequently, agents only seek to learn about $\langle r, \tilde{\theta} \rangle$, returning to the case where the payoff-relevant parameter \emph{is} identified.} This assumption nests two interesting cases. Say that the informational environment has \emph{exactly sufficient information} if $[N]$ is minimally spanning.  Then, it is possible to recover $\omega$ by observing each information source infinitely often, but not by exclusively observing any proper subset of sources. 

We are primarily interested in settings of \emph{informational overabundance}, where $[N]$ is spanning but not minimally spanning. Multiple different subsets of signals allow for recovery of $\omega$, and a key point of our analysis is to compare the set of sources that ``should" be observed in the long run with the set of sources that is in fact observed in the long run. Except for trivial cases, informational overabundance corresponds to $N>K$ (more signals than states).\footnote{It is possible for $\omega$ to be ``overidentified" from a set of $N\leq K$ signals, e.g.\ $X_1 = \omega + \epsilon_1$, $X_2=\omega + b_1 + b_2 + \epsilon_2$, and $X_3 = b_1 + b_2 +\epsilon_3$. In this case, the set $\{X_1,X_2,X_3\}$ is spanning, but not minimally spanning since both of its subsets $\{X_1\}$ and $\{X_2,X_3\}$ are also spanning. Although $N=K=3$ in this example, it is equivalent to a model in which there is a single confounding term $\tilde{b}_1= b_1 + b_2$, and the three signals are rewritten $X_1 = \omega + \epsilon_1$, $X_2 = \omega + \tilde{b}_1 + \epsilon_2$ and $X_3 =  \tilde{b}_1 + \epsilon_3$. Then we do have $N>K$ in this equivalent model.}

%%%%%%%%
\section{Preliminaries}\label{sec:prelim}

Each agent $t$ faces a history $h^{t-1} \in ([N] \times \mathbb{R})^{t-1} = H^{t-1}$ consisting of all past signal choices and their realizations. Write $\theta \sim \mathcal{N}(\mu^{t-1}, \Sigma^{t-1})$ for the agent's beliefs about the state vector, prior to making his own signal choice. Given an observation of signal $i$, his posterior beliefs become $\theta \sim \mathcal{N}(\mu^t, \Sigma^t)$. The posterior covariance matrix $\Sigma^t$ is a \emph{deterministic} function of the prior covariance matrix $\Sigma^{t-1}$ and the signal choice $X_i$, and the posterior mean is the random vector $\mu^t \sim \mathcal{N}(\mu^{t-1}, \Sigma^{t-1} - \Sigma^t)$. 

Agent $t$'s posterior belief about \emph{the payoff-relevant state} $\omega$ (after observing his signal)  is given by $\omega \sim \mathcal{N}(\mu^t_1,\Sigma^t_{11})$,\footnote{Subscripts indicate particular entries of a vector or matrix.} and his maximum expected payoff (corresponding to prediction of his posterior mean) is $-\Sigma^t_{11}$. Thus, the signal acquisition that maximizes agent $t$'s payoffs is the one that minimizes his posterior variance about $\omega$. 

We can track society's acquisitions as a sequence of division vectors 
\[m(t)=(m_1(t), \dots, m_N(t))\]
where $m_i(t)$ is the number of times that signal $i$ has been observed up to and including time $t$. Let $V(q_1, \dots, q_N)$ denote posterior variance about $\omega$, given the initial prior covariance matrix $\Sigma^0$ and $q_i$ observations of each signal $i$.\footnote{For a normal prior and normal signals, the posterior covariance matrix does not depend on signal realizations. See Appendix \ref{appx:prelim} for the complete (closed-form) expression for $V$.} Then, $m(t)$ evolves deterministically according to the following rule: $m(0)$ is the zero vector, and for each time $t \geq 0$ and signal $i$, 
\[
m_i(t+1) = \left\{ \begin{array}{cl} m_i(t) + 1  & \text{if }  V(m_i(t)+1, m_{-i}(t)) \leq V(m_j(t)+1, m_{-j}(t))~ \forall j. \\
 m_i(t) & \mbox{otherwise.} \end{array} \right.
\]
That is, in each period $t$ the division vector increases by $1$ in exactly one coordinate, corresponding to the signal that allows for the greatest immediate reduction in posterior variance. We allow ties to be broken arbitrarily, so there may be multiple possible paths $m(t)$. We are interested in the \emph{long-run frequencies} of observation $\lim_{t \rightarrow \infty} m_i(t)/t$ for each signal $i$. This describes the fraction of periods that is (eventually) devoted to each signal.\footnote{We show in Section \ref{sec:mainResults} that the limit exists under a mild technical assumption.} Note the possibility for some signals to have zero long-run frequency.

%%%%%%%%
\section{Optimal Benchmark} \label{sec:optimalAsymp}
We begin by characterizing the optimal frequency with which each signal \emph{should} be viewed in the long run, both from a perspective of information revelation and also from the perspective of a social planner aggregating agent payoffs. These frequencies will serve as a benchmark against which to evaluate the actual social acquisitions.

%%%%
\subsection{Definition and Motivation}
Consider the one-shot problem in which some $t$ observations can be allocated across the available signals, and the goal is to minimize posterior variance about $\omega$. Call any solution 
\[ 
n(t) \in \argmin_{(q_1, \dots,  q_K): q_i \in \mathbb{Z}^{+}, \sum_{i} q_i = t} V(q_1, \dots, q_K)
\]
a \emph{$t$-optimal division vector}. Signal acquisitions prior to time $t$ maximize information revelation about $\omega$ if and only if their composition is described by $n(t)$. Thus, we will refer to $n(t)$ as the \emph{optimal count} over signals.
%Generically, there is a unique $t$-optimal division vector for every $t$.

Define $\lambda^{OPT}=\lim_{t\rightarrow \infty} n(t)/t$ to be the limiting frequency over signals corresponding to these $t$-optimal division vectors; that is, $\lambda^{OPT}_i$ is the fraction of periods that is (asymptotically) optimally devoted to signal $i$. These frequencies are well-defined and unique under a subsequent condition (Assumption \ref{assumptionUniqueMin}), and we will refer to $\lambda^{OPT}$ from here on as the \emph{optimal (long-run) frequency vector}.

The above justification for $\lambda^{OPT}$ is purely in terms of information revelation. The frequency vector $\lambda^{OPT}$ also emerges, however, as an approximation of the (limiting patient) solution to a social planner problem, in which signals and actions are chosen to maximize a discounted aggregate of payoffs across individuals:
 \[
U_{\delta} := \mathbb{E} \left[\sum_{t = 1}^{\infty} \delta^{t-1} \cdot u(a_t, \omega) \right]. 
\]
For fixed $\delta$, let $d_{\delta}(t)$ be the vector of signal counts (up to period $t$) associated with any strategy that maximizes $U_{\delta}$. The long-run frequency vector produced by this optimal sampling procedure is $\lim_{t\rightarrow \infty} d_{\delta}(t)/t$, and the result below says that $\lambda^{OPT}$ approximates this long-run frequency vector in the patient limit $\delta \rightarrow 1$.

\begin{proposition}\label{prop:highDelta}
Under the subsequent Assumption \ref{assumptionUniqueMin}, for any $\epsilon > 0$, there exists $\underline{\delta} < 1$ such that for any $\delta \geq \underline{\delta}$ it holds that
\[
\limsup_{t \rightarrow \infty} ~\left\| \frac{d_{\delta}(t)}{t} - \lambda^{OPT} \right\| \leq \epsilon.
\]
Here $\lvert \lvert \cdot \rvert \rvert$ represents the Euclidean norm. 
\end{proposition}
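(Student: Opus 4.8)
\section*{Proof proposal}

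The plan is to recast $U_\delta$ in terms of posterior variances and then compare the discounted problem to the one-shot optimal design. Because the signals are Gaussian, the posterior covariance is a deterministic function of the count vector, so two reductions hold: the payoff-maximizing prediction is the posterior mean, giving period-$t$ expected utility $-V(q^{(t)})$ where $q^{(t)}$ is the count after $t$ observations; and adapting the sampling to realizations cannot help. Hence any maximizer of $U_\delta$ is a deterministic sampling path, and
\[
U_\delta = -\sum_{t=1}^\infty \delta^{t-1} V(q^{(t)}), \qquad \textstyle\sum_i q^{(t)}_i = t .
\]
I would first record this reduction, so that the problem becomes: choose an increasing integer path $(q^{(t)})$ to minimize the discounted cost $C_\delta := \sum_t \delta^{t-1} V(q^{(t)})$.

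Next I would import the rate estimate behind $\lambda^{OPT}$. Writing $M(\lambda) = \sum_i \lambda_i c_i c_i'$ and $f(\lambda) = e_1' M(\lambda)^{-1} e_1$ (with $f(\lambda) = +\infty$ when $\{c_i : \lambda_i > 0\}$ fails to span), the closed form for $V$ gives $t\,V(q^{(t)}) \to f(\lambda)$ whenever $q^{(t)}/t \to \lambda$, since the prior precision washes out. By Theorem \ref{thm:totalOpt} and Assumption \ref{assumptionUniqueMin}, $f$ has a unique minimizer $\lambda^{OPT}$ on the simplex, with value $f^\ast := f(\lambda^{OPT}) = \lim_t t\,V(n(t))$; moreover $f$ is lower semicontinuous, so the gap $\gamma(\epsilon) := \min\{\, f(\lambda) - f^\ast : \lambda \in \Delta,\ \|\lambda - \lambda^{OPT}\| \ge \epsilon \,\}$ is strictly positive for each $\epsilon > 0$. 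Using $V(q^{(t)}) \ge V(n(t))$ (as $n(t)$ is the budget-$t$ minimizer) together with Abel summation, $C_\delta = (1-\delta)\sum_T \delta^{T-1}\sum_{t \le T} V(q^{(t)}) \ge (1-\delta)\sum_T \delta^{T-1}\sum_{t \le T} V(n(t))$, and since $\sum_{t \le T} V(n(t)) \sim f^\ast \ln T$ this lower bound is $\sim f^\ast \ln\frac{1}{1-\delta}$. The path tracking $\lambda^{OPT}$ attains this order, so the optimal cost satisfies $C_\delta^\ast \sim f^\ast \ln\frac{1}{1-\delta}$.

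To pass from value to frequency I would use that one observation changes the empirical frequency by at most $O(1/t)$: since $\|q^{(t)}\|_2 \le \sum_i q^{(t)}_i = t$, one checks $\|q^{(t+1)}/(t+1) - q^{(t)}/t\| \le 2/(t+1)$. Consequently, whenever the optimal path reaches $\|q^{(t)}/t - \lambda^{OPT}\| \ge \epsilon$, it must stay at distance $\ge \epsilon/2$ throughout a multiplicatively long window $[t, c_\epsilon t]$, on which $V(q^{(s)}) \ge (f^\ast + \gamma(\epsilon/2))/s\,(1 - o(1))$; re-routing that window to track $\lambda^{OPT}$ strictly lowers every later variance and hence strictly lowers $C_\delta$, so optimality of $d_\delta$ precludes such windows once $\delta$ is close to $1$. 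More structurally, the same conclusion follows by analyzing the continuation from a large count: to leading order one more observation of source $i$ lowers $V$ by $\tfrac{1}{s^2}(c_i' M(\lambda)^{-1} e_1)^2$, so the forward-looking planner repeatedly samples the source with largest marginal value, and the resulting frequency can only rest at the equalization point characterized in Theorem \ref{thm:totalOpt}, namely $\lambda^{OPT}$. Either route yields $\limsup_t \|d_\delta(t)/t - \lambda^{OPT}\| \le \epsilon$ for all $\delta \ge \underline{\delta}$.

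I expect the binding difficulty to be exactly this last step: converting near-optimality of the discounted \emph{value} into closeness of the long-run \emph{frequency}. The leading term $f^\ast \ln\frac{1}{1-\delta}$ is insensitive to rare, far-out excursions — an excursion at time $t$ is discounted by $\delta^t$ and contributes only lower-order cost — so a crude value comparison cannot by itself rule out $\limsup$-deviations. The argument must therefore use the \emph{exact} optimality of $d_\delta$ together with the slow-drift geometry, or equivalently control the continuation problem uniformly in the accumulated count; making the $o(1)$ in $s\,V(q^{(s)}) \to f(\lambda)$ and the equalization dynamics uniform in $\delta$ (so that $\underline{\delta}$ is independent of how late an excursion occurs) is the delicate part.
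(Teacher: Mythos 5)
Your setup is right and you have correctly located where the difficulty lives, but the proposal does not close the gap it identifies; it remains a plan at exactly the step that carries all the weight. Two specific problems. First, the assertion that re-routing the window $[t,c_\epsilon t]$ to track $\lambda^{OPT}$ ``strictly lowers every later variance'' is false as stated: after the window the deviation and the original path hold different count vectors, and neither dominates the other coordinate-wise, so the deviation can be strictly worse in every period after the window. Any correct argument must therefore bound the post-window loss and show it is beaten by the in-window gain, uniformly over where the excursion occurs and for $\delta$ near $1$ \textemdash{} precisely the bookkeeping you defer. Second, your ``more structural'' fallback is unsound: the rule ``repeatedly sample the source with the largest marginal value'' is the myopic dynamic, and the paper's own learning-traps result (Theorem \ref{thm:general}(a)) shows that this dynamic has resting points other than $\lambda^{OPT}$ (the subspace-optimal minimal spanning sets), so ``the frequency can only rest at the equalization point $\lambda^{OPT}$'' is exactly what fails in this model.

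For comparison, the paper's proof anchors the deviation not at the excursion time $t$ but at the \emph{last} period $\tau<t$ at which the realized variance was within a factor $1+\epsilon/2$ of optimal; the deviation samples each signal of $\mathcal{S}^*$ exactly $M$ times as a burn-in (so that Lemma \ref{lemmVarReduction} applies) and then plays myopically. The precision recursion (\ref{eq:precisionIncrease}) gives an explicit upper bound on the deviation's variance throughout $[\tau+1,t]$, the definition of $\tau$ gives a lower bound on the original path's variance there, and the resulting gain is of order $\delta^{\tau}\phi(\mathcal{S}^*)^2/\tau$. The loss after period $t$ is bounded crudely by $O(\delta^{t}/(1-\delta))$, and the key inequality $\tau\leq\frac{1+\epsilon/2}{1+\epsilon}t$ forces $t-\tau$ to grow linearly in $t$, so the geometric factor $\delta^{t-\tau}$ kills the loss for each fixed $\delta$ and large $t$; a separate argument handles the case where no such $\tau$ exists. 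If you want to salvage your multiplicative-window variant, you would need to supply the analogues of these three estimates (in-window gain, post-window loss, and the comparison), which is where the proof actually is.
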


This result further justifies our interest in $\lambda^{OPT}$.\footnote{It is clear that the strategy that samples signals (randomly) according to $\lambda^{OPT}$ is best among \emph{stationary} information acquisition strategies as $\delta \rightarrow 1$. This is however weaker than our result, because the optimal strategy for any fixed $\delta$ may be far from stationary.} Throughout, we will say that \emph{efficient information aggregation} occurs if society's frequency of observations tends to $\lambda^{OPT}$. 

%%%%
\subsection{Characterization}
The main difficulty in characterizing the optimal frequency vector $\lambda^{OPT}$ is that we do not know its support. Because $\omega$ is recoverable from different proper subsets of signals, there are many signals that could conceivably receive zero frequency in the long-run.

As a first step, we thus consider a simpler version of the social planner problem in which  agents can acquire signals from (only) some minimal spanning set $\mathcal{S}$. By definition, the setting is one of exactly sufficient information, where all signals must be observed in order to recover $\omega$. It is straightforward to show that the first coordinate vector $e_1$ can be decomposed as a (unique) linear combination of signals in $\mathcal{S}$:
\[
e_1=\sum_{i \in \mathcal{S}} \beta_i^\mathcal{S} \cdot c_i
,\] 
where the coefficients $\beta_i^{\mathcal{S}}$ are non-zero.

We showed in prior work that  each signal $i \in \mathcal{S}$ should be observed (asymptotically)  in proportion to its coefficient $\beta_i^\mathcal{S}$:
   
\begin{proposition}[\citet{LiangMuSyrgkanis}]\label{prop:N=K}
Suppose agents are constrained to a minimal spanning set $\mathcal{S}$. Then, for every signal $i \in \mathcal{S}$, the optimal count satisfies
\begin{equation}\label{eq:asympFreq}
n_i^{\mathcal{S}}(t) = \frac{\lvert \beta^{\mathcal{S}}_i \rvert}{\sum_{j \in \mathcal{S}} \lvert \beta^{\mathcal{S}}_j \rvert} \cdot t + O(1)
\end{equation}
where $O(1)$ represents a residual term that remains bounded as $t \rightarrow \infty$.

\end{proposition}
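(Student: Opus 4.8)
The plan is to reduce the problem to a clean finite-dimensional optimization and then separately control the error introduced by the prior and by the integer constraint. First I would write the posterior variance in closed form. With prior precision $P_0 := (\Sigma^0)^{-1}$ and $q_i$ observations of each signal $i \in \mathcal{S}$, Gaussian updating gives posterior precision $P_0 + \sum_{i \in \mathcal{S}} q_i c_i c_i' = P_0 + C_{\mathcal{S}}' D\, C_{\mathcal{S}}$, where $C_{\mathcal{S}}$ is the $|\mathcal{S}| \times K$ matrix with rows $c_i'$ and $D := \diag(q_i)_{i \in \mathcal{S}}$; hence $V(q) = e_1'(P_0 + C_{\mathcal{S}}'DC_{\mathcal{S}})^{-1}e_1$. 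Writing $e_1 = C_{\mathcal{S}}'\beta$ with $\beta := (\beta_i^{\mathcal{S}})_{i \in \mathcal{S}}$ (the unique decomposition from the text, whose existence and nonzero coefficients reflect that minimality makes the rows of $C_{\mathcal{S}}$ linearly independent) and applying the Woodbury identity, I would collapse this to
\[
V(q) = \beta'\,(G^{-1} + D)^{-1}\,\beta, \qquad G := C_{\mathcal{S}}\,\Sigma^0\,C_{\mathcal{S}}',
\]
where $G$ is positive definite since $\Sigma^0 \succ 0$ and $C_{\mathcal{S}}$ has full row rank.

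Since $G^{-1}$ is fixed while $D$ scales with $t$, I would expand $(G^{-1}+D)^{-1} = D^{-1} - D^{-1}G^{-1}D^{-1} + O(t^{-3})$ entrywise, obtaining
\[
V(q) = \underbrace{\sum_{i \in \mathcal{S}} \frac{(\beta_i^{\mathcal{S}})^2}{q_i}}_{=:\,f(q)} \; + \; g(q), \qquad |g(q)| = O(t^{-2}),\ \ \|\nabla g(q)\| = O(t^{-3}),
\]
uniformly on the region where every $q_i = \Theta(t)$. Minimizing the leading term $f$ over the simplex $\{\sum_i q_i = t\}$ is a textbook Cauchy--Schwarz (or Lagrange) problem with unique solution $q_i^\ast = t\,|\beta_i^{\mathcal{S}}|/\sum_{j}|\beta_j^{\mathcal{S}}|$ and optimal value $\big(\sum_j |\beta_j^{\mathcal{S}}|\big)^2/t$ --- exactly the claimed frequency. (This is the classical $c$-optimal design for the functional $\omega = \langle e_1,\theta\rangle$, with support forced onto all of $\mathcal{S}$ by minimality.)

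It remains to show that the actual optimal count $n^{\mathcal{S}}(t)$ --- the integer minimizer of the \emph{full} variance $V$, not of $f$ --- lies within $O(1)$ of $q^\ast$. This is the step I expect to be the main obstacle: $g$ is of order $t^{-2}$, yet $f-f(q^\ast)$ is itself only of order $t^{-2}$ across an $O(\sqrt t)$-neighborhood of $q^\ast$, so the crude bound $|g| = O(t^{-2})$ would permit a displacement as large as $O(\sqrt t)$. The fix is that the \emph{gradient} of the prior correction is an order smaller, $\|\nabla g\| = O(t^{-3})$. Concretely, $f$ restricted to the constraint hyperplane is strongly convex with modulus $\Theta(t^{-3})$ (its tangential Hessian has entries $2(\beta_i^{\mathcal{S}})^2/q_i^3$), and one first checks that any continuous minimizer $q^{\ast\ast}$ of $V$ still has all coordinates $\Theta(t)$, since $V(q^{\ast\ast}) = \Theta(t^{-1})$ bounds each $1/q_i$ from above. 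Then from $V(q^{\ast\ast}) \le V(q^\ast)$,
\[
f(q^{\ast\ast}) - f(q^\ast) \le g(q^\ast) - g(q^{\ast\ast}) \le \|\nabla g\|\,\|q^{\ast\ast} - q^\ast\| = O\!\big(t^{-3}\big)\,\|q^{\ast\ast}-q^\ast\|,
\]
while strong convexity gives $f(q^{\ast\ast}) - f(q^\ast) = \Omega(t^{-3})\,\|q^{\ast\ast}-q^\ast\|^2$; comparing the two forces $\|q^{\ast\ast}-q^\ast\| = O(1)$. Finally, rounding $q^{\ast\ast}$ to an admissible integer vector perturbs $V$ by only $O(t^{-3})$ (the gradient of $V$ vanishes at $q^{\ast\ast}$ and its Hessian is $\Theta(t^{-3})$), so the same comparison yields $\|n^{\mathcal{S}}(t) - q^{\ast\ast}\| = O(1)$. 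Combining the two displacement bounds gives $n_i^{\mathcal{S}}(t) = q_i^\ast + O(1) = \frac{|\beta_i^{\mathcal{S}}|}{\sum_{j}|\beta_j^{\mathcal{S}}|}\,t + O(1)$, as claimed.
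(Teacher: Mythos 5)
Your argument is correct, but note that the paper does not actually prove Proposition \ref{prop:N=K}: it is imported from \citet{LiangMuSyrgkanis}, so there is no in-text proof to compare against. Your derivation is self-contained and consistent with the machinery the paper does build. The closed form $V(q)=e_1'(P_0+C_{\mathcal{S}}'DC_{\mathcal{S}})^{-1}e_1$ is the paper's Lemma \ref{lemmVar}; your Woodbury reduction to $\beta'(G^{-1}+D)^{-1}\beta$ and the Cauchy--Schwarz step recover exactly the asymptotic variance $V^*(\lambda)=\sum_i(\beta_i^{\mathcal{S}})^2/\lambda_i$ and the optimal value $\phi(\mathcal{S})^2/t$ that appear in Lemma \ref{lemmAsympVar} and Corollary 1. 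Where your proof adds genuine content is the error term: the paper's own reduction device (Lemma \ref{lemmReductionToAsympVar}) only converts unique minimization of $V^*$ into $n_i(t)\sim\lambda_i t$, and indeed Appendix \ref{appx:n(t)finite} falls back on Proposition \ref{prop:N=K} itself to upgrade $\sim$ to $O(1)$. You correctly identify that the crude bound $|g|=O(t^{-2})$ is useless against a tangential Hessian of order $t^{-3}$ (it would only give an $O(\sqrt{t})$ displacement), and the pairing of $\|\nabla g\|=O(t^{-3})$ along the segment with $\Omega(t^{-3})$-strong convexity of $f$ on the hyperplane is the right fix; the preliminary check that all coordinates of the minimizer are $\Theta(t)$ (using $V(q)\geq(\beta_i^{\mathcal{S}})^2/(q_i+c)$ and $\beta_i^{\mathcal{S}}\neq 0$ from minimality) is needed and present. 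Two points are stated loosely but are harmless: the ``gradient of $V$ vanishes at $q^{**}$'' should read the \emph{tangential} gradient on the constraint hyperplane, which suffices because the rounding displacement lies in that hyperplane; and linear independence of $\{c_i\}_{i\in\mathcal{S}}$ deserves the one-line justification that a dependency would let you delete a signal while preserving spanning, contradicting minimality. With those caveats the proof is complete.
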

\noindent 
Thus, the optimal long-run frequency with which signal $i$ is viewed is $\lvert \beta^{\mathcal{S}}_i \rvert / \left(\sum_{j \in \mathcal{S}} \lvert \beta^{\mathcal{S}}_j \rvert \right)$.
\bigskip

To understand these critical coefficients $\beta_i^\mathcal{S}$, consider (for simplicity) the case in which the set $\mathcal{S}$ has size $K$. The coefficient vectors associated with signals in $\mathcal{S}$ have full rank,\footnote{Otherwise, $\mathcal{S}$ would not be a \emph{minimal} spanning set.} and we let $C_{\mathcal{S}}$ denote the matrix of these coefficient vectors. The (random) vector of realizations corresponding to one observation of each signal in this set can be written as
\[Y = (y_1, \dots, y_K)' = C_{\mathcal{S}}\theta + \varepsilon\]
where  $\varepsilon$ is a $K \times 1$ vector of error terms. Given these realizations, the best linear unbiased estimate for $\omega$ is $\hat{\omega}= \left[C_{\mathcal{S}}^{-1} Y\right]_{11}.$
Perturbing the realization of signal $i$ by $\delta_i$ changes this estimate by $[C_{\mathcal{S}}^{-1}]_{1i}\cdot \delta_i$. One can show that the coefficients $\beta_i^{\mathcal{S}}= \lvert [C_{\mathcal{S}}^{-1}]_{1i} \rvert$, so the larger $\beta_i^{\mathcal{S}}$ is, the more $\hat{\omega}$ responds to changes in the realization of signal $i$. Proposition \ref{prop:N=K} thus says that agents should observe more frequently those signals whose realizations more strongly influence the best linear estimate of $\omega$. 

This proposition additionally implies the following corollary regarding the speed of learning from $\mathcal{S}$: 

\begin{corollary}
Under optimal sampling from any minimal spanning set $\mathcal{S}$, the minimum achievable posterior variance after $t$ observations satisfies: 
\[
V\left(n^{\mathcal{S}}(t)\right) \sim \phi(\mathcal{S})^2/t:= \left(\sum_{i \in \mathcal{S}} \lvert \beta^{\mathcal{S}}_i \rvert \right)^2/t.
\]
where the notation ``$F(t) \sim G(t)$" means $\lim_{t \rightarrow \infty} \frac{F(t)}{G(t)} = 1$. 
\end{corollary}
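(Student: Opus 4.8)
The plan is to obtain the exact $1/t$ asymptotics of $V(n^{\mathcal{S}}(t))$ by reducing the $K$-dimensional learning problem to a fixed quadratic form in a well-chosen coordinate system, and then substituting the optimal counts from Proposition~\ref{prop:N=K}. First I would record the standard closed form for Gaussian updating: after $q_i$ observations of each signal $i$, the precision matrix is $\Sigma(q)^{-1} = (\Sigma^0)^{-1} + \sum_{i \in \mathcal{S}} q_i c_i c_i' = (\Sigma^0)^{-1} + C_{\mathcal{S}}' \diag(q) C_{\mathcal{S}}$, and $V(q) = e_1' \Sigma(q) e_1$. The key structural observation is that every signal in $\mathcal{S}$ depends on $\theta$ only through the vector $z := C_{\mathcal{S}} \theta = (\langle c_i, \theta \rangle)_{i \in \mathcal{S}}$, and that the target is itself a linear function of $z$: since $e_1 = \sum_{i \in \mathcal{S}} \beta^{\mathcal{S}}_i c_i$, we have $\omega = \langle e_1, \theta \rangle = \langle \beta^{\mathcal{S}}, z \rangle$.

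This lets me pass to the $z$-coordinates. Because $\mathcal{S}$ is minimally spanning, the vectors $\{c_i\}_{i \in \mathcal{S}}$ are linearly independent (if they were dependent, one could perturb the representation $e_1 = \sum \beta^{\mathcal{S}}_i c_i$ to annihilate a coefficient, contradicting minimality), so $C_{\mathcal{S}}$ has full row rank and $z$ has a nondegenerate prior covariance $\Sigma^0_z := C_{\mathcal{S}} \Sigma^0 C_{\mathcal{S}}'$. Observing signal $i$ a total of $q_i$ times is equivalent, via sufficiency of the sample mean, to a single observation of $z_i$ with noise variance $1/q_i$, and these are independent across $i$; hence the posterior covariance of $z$ is $\Sigma_z(q) := \big[ (\Sigma^0_z)^{-1} + \diag(q) \big]^{-1}$. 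Since $\omega = \langle \beta^{\mathcal{S}}, z \rangle$ is an exact linear function of $z$, its posterior variance is computed entirely within this block:
\[
V(q) = (\beta^{\mathcal{S}})' \Sigma_z(q)\, \beta^{\mathcal{S}} = (\beta^{\mathcal{S}})' \big[ (\Sigma^0_z)^{-1} + \diag(q) \big]^{-1} \beta^{\mathcal{S}}.
\]

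Now I would substitute the optimal counts $q_i = n^{\mathcal{S}}_i(t) = \lambda_i t + O(1)$, where $\lambda_i := |\beta^{\mathcal{S}}_i| / \sum_{j} |\beta^{\mathcal{S}}_j|$, from Proposition~\ref{prop:N=K}. Writing $\diag(q) = t \cdot \diag(\lambda) + O(1)$ and factoring out $t$ gives $t \cdot \Sigma_z(q) = \big[ \diag(\lambda) + \tfrac{1}{t}((\Sigma^0_z)^{-1} + O(1)) \big]^{-1}$. Since every $\lambda_i > 0$ (each $\beta^{\mathcal{S}}_i \neq 0$), $\diag(\lambda)$ is invertible, so by continuity of matrix inversion the right-hand side converges to $\diag(\lambda)^{-1}$. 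Therefore
\[
t \cdot V(n^{\mathcal{S}}(t)) \longrightarrow (\beta^{\mathcal{S}})' \diag(\lambda)^{-1} \beta^{\mathcal{S}} = \sum_{i \in \mathcal{S}} \frac{(\beta^{\mathcal{S}}_i)^2}{\lambda_i} = \Big( \sum_{j} |\beta^{\mathcal{S}}_j| \Big) \sum_{i} |\beta^{\mathcal{S}}_i| = \phi(\mathcal{S})^2,
\]
which is exactly the claimed equivalence $V(n^{\mathcal{S}}(t)) \sim \phi(\mathcal{S})^2 / t$.

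The main obstacle is the limiting step: one must verify that the prior precision $(\Sigma^0_z)^{-1}$ and the bounded $O(1)$ slack in the counts are genuinely lower order and do not disturb the leading $1/t$ coefficient. This is exactly the continuity-of-inversion argument above, and it is clean precisely because all optimal frequencies are strictly positive, keeping $\diag(\lambda)$ bounded away from singularity. The reduction to $z$-coordinates is what makes the target variance a fixed quadratic form in the single well-conditioned matrix $\diag(\lambda) + o(1)$, thereby avoiding any delicate treatment of the rank-deficient aggregate precision $C_{\mathcal{S}}' \diag(q) C_{\mathcal{S}}$ that arises when $|\mathcal{S}| < K$.
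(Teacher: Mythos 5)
Your proof is correct. The reparametrization to $z = C_{\mathcal{S}}\theta$ is legitimate: minimality does force linear independence of $\{c_i\}_{i\in\mathcal{S}}$ (your perturbation of the representation of $e_1$ is the right argument), the signals depend on $\theta$ only through $z$, and $\omega = \langle \beta^{\mathcal{S}}, z\rangle$ exactly, so the posterior variance of $\omega$ is the stated quadratic form in the full-rank matrix $\bigl[(\Sigma^0_z)^{-1} + \diag(q)\bigr]^{-1}$; the limit then follows from continuity of inversion at the nonsingular $\diag(\lambda)$, and the algebra $\sum_i (\beta^{\mathcal{S}}_i)^2/\lambda_i = \bigl(\sum_i |\beta^{\mathcal{S}}_i|\bigr)^2$ closes it. This is a genuinely different route from the paper's. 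The paper treats the corollary as a consequence of Proposition \ref{prop:N=K} via its general asymptotic machinery: Lemma \ref{lemmAsympVar} gives $t\cdot V(\lambda t) \to V^*(\lambda) = [(C'\Lambda C)^{-1}]_{11}$, where for $|\mathcal{S}|<K$ the matrix $C'\Lambda C$ is singular and the paper must invoke its continuously extended definition of $[X^{-1}]_{11}$ for positive semi-definite matrices, together with the perturbation argument of Lemma \ref{lemmAsd} to identify the limit with $\phi(\mathcal{S})^2$. Your change of coordinates collapses the problem onto the $|\mathcal{S}|$-dimensional block where the precision matrix is genuinely invertible, so you never touch the rank-deficient case or the generalized inverse; the price is that your computation is specific to the optimal frequency vector on a single minimal spanning set, whereas the paper's formula for $V^*(\lambda)$ at arbitrary $\lambda$ is needed elsewhere for the main theorems. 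For this corollary in isolation, your argument is the more elementary and self-contained of the two.
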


%\noindent Thus, sampling according to the frequencies $n^{\mathcal{S}}(t)/t$ will approximate (at large period $t$) a posterior variance of 
%\[
%\phi(\mathcal{S})^2/t:= \left(\sum_{i \in \mathcal  \mathcal{S}}\lvert \beta^{\mathcal{S}}_i \rvert \right)^2/t.
%\]

In what follows, we work with the simpler statistic $\phi(\mathcal{S})$ (roughly an \emph{asymptotic standard deviation}), noting that the smaller $\phi(\mathcal{\mathcal{S}})$ is, the faster the community learns from $\mathcal{S}$. We assume throughout that there is a \emph{best} minimal spanning set according to $\phi$:

\begin{assumption}[Unique Minimizer]\label{assumptionUniqueMin}
$\phi(\mathcal{S})$ has a unique minimizer $\mathcal{S}^*$ among minimal spanning sets $\mathcal{S} \subset [N]$. 
\end{assumption}

\noindent This assumption is a restriction on the coefficient matrix $C$, and it rules out examples such as the following:

%\begin{example} \label{ex:duplicate} 
%The signals are
%$X_1 = \omega + \epsilon_1$ and 
%$X_2 = \omega + \epsilon_2$. 
%Unique Minimizer fails, because learning occurs equally fast from either of the minimal spanning sets $\{X_1\}$ or $\{X_2\}$. 
%\end{example}

\begin{example} \label{ex:duplicate2} 
The available signals are
$X_1 = \omega + b_1+ \epsilon_1$,
$X_2 = b_1 +\epsilon_2$,
$X_3 = \omega + b_2 + \epsilon_3$, and
$X_4 = b_2 + \epsilon_4$.
Assumption \ref{assumptionUniqueMin} fails, because learning occurs equally fast from either of the minimal spanning sets $\{X_1,X_2\}$ and $\{X_3,X_4\}$.
\end{example}

\noindent  We note that Assumption \ref{assumptionUniqueMin} holds generically\textemdash for example, it holds given arbitrarily small perturbations of the above environment.\footnote{Throughout the paper, ``generic" means with probability $1$ for signal coefficients $c_{ij}$ randomly drawn from a full support distribution on $\mathbb{R}^{NK}$.}

\bigskip

If we restrict agents to sample exclusively from a single minimal spanning set, then the optimal sampling rule (under Assumption \ref{assumptionUniqueMin}) is clearly the frequency vector $\lambda^* \in \Delta^{N-1}$ satisfying
\begin{equation}\label{eq:lambda*}
\lambda^*_i = \left\{\begin{array}{cl}
\frac{\lvert \beta_i^{\mathcal{S}^*} \rvert }{\sum_{j \in \mathcal{S}^*} \lvert \beta_j^{\mathcal{S}^*} \rvert} &\quad \forall \,\, i \in \mathcal{S}^* \\
 0& \quad \forall \,\, i \notin \mathcal{S}^*   \end{array}\right.
\end{equation}
This sampling rule assigns zero frequency to signals outside of the set $\mathcal{S}^*$, and samples signals within $\mathcal{S}^*$ according to the frequencies given in Proposition \ref{prop:N=K}. 

In principle, a social planner may improve on $\lambda^*$ by sampling from multiple spanning sets. Our first theorem shows to the contrary that $\lambda^*$ remains optimal when arbitrary sampling procedures are permitted. So long as $C$ satisfies Unique Minimizer, the best long-run strategy is to restrict to the best minimal spanning set and sample from that set as in Proposition \ref{prop:N=K}.

\begin{theorem}\label{thm:totalOpt}
Let $\lambda^*$ be given by (\ref{eq:lambda*}). Under Unique Minimizer, the optimal long-run frequency vector satisfies $\lambda^{OPT}=\lambda^*$.\footnote{This theorem equivalently states that $n_i(t) \sim \lambda^*_i \cdot t$. In Appendix \ref{appx:n(t)finite}, we show the stronger result that $n_i(t) = \lambda^*_i \cdot t + O(1)$.}
\end{theorem}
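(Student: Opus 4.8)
The plan is to reduce the discrete, dynamic optimal‑count problem to a static convex optimization over frequency vectors, solve that optimization in closed form, and then transfer the solution back to the optimal counts $n(t)$ by a semicontinuity argument. For $\lambda\in\Delta^{N-1}$ write $M(\lambda)=\sum_i\lambda_i c_i c_i'$ and let $\psi(\lambda)=\lim_t t\,V(\lfloor\lambda_1 t\rfloor,\dots)$ be the leading coefficient of the posterior variance when signals are sampled at frequency $\lambda$. Using the closed form $V(q)=e_1'\big[(\Sigma^0)^{-1}+\sum_i q_i c_i c_i'\big]^{-1}e_1$, I would first record two representations of $\psi$ that are dual to each other. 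The \emph{variational formula}
\[
\psi(\lambda)=\sup_{v}\ \frac{(v'e_1)^2}{v'M(\lambda)v}
\]
exhibits $\psi$ as a supremum of functions each continuous in $\lambda$ where the denominator is positive; this shows $\psi$ is lower semicontinuous and equals $+\infty$ exactly when $\mathrm{supp}(\lambda)$ fails to span $e_1$. The \emph{least‑squares formula}
\[
\psi(\lambda)=\min\Big\{\textstyle\sum_i w_i^2/\lambda_i:\ \sum_i w_i c_i=e_1\Big\},
\]
obtained by Lagrangian duality for the GLS estimator of $\omega$ (with the convention that $\lambda_i=0$ forces $w_i=0$), is what makes the optimization tractable.

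Next I would minimize $\psi$ over the simplex by interchanging the two minimizations in the least‑squares formula:
\[
\min_{\lambda}\psi(\lambda)=\min_{C'w=e_1}\ \min_{\lambda\in\Delta}\sum_i\frac{w_i^2}{\lambda_i}=\min_{C'w=e_1}\Big(\sum_i|w_i|\Big)^2,
\]
where the inner minimization is evaluated by Cauchy--Schwarz and attained uniquely at $\lambda_i\propto|w_i|$. This turns the design problem into the minimum‑$\ell_1$‑norm problem $\min\{\|w\|_1:\ C'w=e_1\}$. Casting this as a linear program ($w=w^+-w^-$) shows it has a basic optimal solution supported on a linearly independent—hence minimal spanning—set $\mathcal S$, on which $w=\beta^{\mathcal S}$ and $\|w\|_1=\phi(\mathcal S)$. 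Therefore $\min_\lambda\psi(\lambda)=\min_{\mathcal S}\phi(\mathcal S)^2=\phi(\mathcal S^*)^2$. Unwinding the chain of equalities then pins down uniqueness: an optimal $\lambda$ must satisfy $\|w\|_1=\phi(\mathcal S^*)$ (so $w$ is an $\ell_1$‑minimizer) and the Cauchy--Schwarz equality $\lambda_i=|w_i|/\phi(\mathcal S^*)$; since Unique Minimizer leaves $\mathcal S^*$ as the only minimal spanning set attaining $\phi(\mathcal S^*)$, the $\ell_1$‑minimizer is the single basic solution $\beta^{\mathcal S^*}$, and hence $\lambda^*$ is the \emph{unique} minimizer of $\psi$.

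Finally I would show every limit point of $n(t)/t$ minimizes $\psi$. Along any subsequence with $n(t_k)/t_k\to\lambda'$, the variational formula gives $\liminf_k t_k V(n(t_k))\ge\psi(\lambda')$ by lower semicontinuity; conversely, $n^{\mathcal S^*}(t)$ is a feasible division of size $t$, so $t$‑optimality of $n(t)$ together with the Corollary yields $\limsup_t t\,V(n(t))\le\phi(\mathcal S^*)^2=\min\psi$. Combining, $\psi(\lambda')\le\min\psi$, so $\lambda'$ is a minimizer and equals $\lambda^*$ by the previous step. Since this holds for every limit point in the compact simplex, $n(t)/t\to\lambda^*$, i.e.\ $\lambda^{OPT}=\lambda^*$.

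The main obstacle is the interaction between uniqueness and the discontinuity of the functional: $\psi$ jumps to $+\infty$ as supports lose the spanning property and the pseudoinverse $M(\lambda)^\dagger$ is discontinuous where the rank of $M(\lambda)$ drops, so one cannot simply invoke continuity to pass from the frequency optimization to the optimal counts. The variational representation is exactly what supplies the lower semicontinuity needed to make the optimality comparison go through, and establishing that the minimizer is unique requires both the sparsity of basic $\ell_1$‑optimal solutions and the Unique Minimizer assumption to rule out ties among competing minimal spanning sets.
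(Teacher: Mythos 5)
Your proof is correct, and the central step is carried out by a genuinely different route than the paper's. Both arguments share the same outer shell (reduce to minimizing the asymptotic variance $V^*$ over the simplex, then transfer back to $n(t)$ by a subsequence argument; your lower-semicontinuity argument via the variational formula $\sup_v (v'e_1)^2/(v'M(\lambda)v)$ is a clean substitute for the paper's monotonicity trick of inflating the limit frequency to $\lambda+\epsilon$). Where you diverge is in proving that $\lambda^*$ is the \emph{unique} minimizer of $V^*$. The paper splits into cases: when $\lvert\mathcal{S}^*\rvert=K$ it uses differentiability of $V^*$ near $\lambda^*$ together with a combinatorial ``crucial lemma'' on linear combinations to verify the first-order inequality $\partial_K V^*(\lambda^*)<\partial_iV^*(\lambda^*)\le 0$; when $\lvert\mathcal{S}^*\rvert<K$ it resorts to a separate perturbation argument (scaling the off-$\mathcal{S}^*$ signals by $1+\eta$) to manufacture the strict lower bound. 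You instead invoke the Gauss--Markov/Elfving duality $V^*(\lambda)=\min\{\sum_i w_i^2/\lambda_i:\ \sum_i w_ic_i=e_1\}$, interchange the two minimizations via Cauchy--Schwarz, and land on the $\ell_1$ program $\min\{\|w\|_1:\ C'w=e_1\}$, whose extreme optimal solutions are supported on linearly independent (hence minimal spanning) sets; Unique Minimizer then forces a unique optimizer $\beta^{\mathcal{S}^*}$ and the Cauchy--Schwarz equality condition forces $\lambda_i\propto|w_i|$. This treats $\lvert\mathcal{S}^*\rvert=K$ and $\lvert\mathcal{S}^*\rvert<K$ uniformly, makes rigorous the $\ell_1$ characterization the paper only mentions in a footnote, and connects directly to classical $\mathbf{c}$-optimal design. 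What it does not deliver is the quantitative lower bound $V^*(\lambda)\ge \phi(\mathcal{S}^*)^2/(1-\eta\sum_{i\notin\mathcal{S}^*}\lambda_i)$ that the paper extracts from its perturbation argument and reuses later for the $n_i(t)=\lambda_i^*t+O(1)$ refinement and for Theorem \ref{thm:general}; but that bound is not needed for the statement of Theorem \ref{thm:totalOpt} itself. One small point worth spelling out if you write this up: the optimal face of the $\ell_1$ program is a compact convex set, so uniqueness of its extreme point (guaranteed by Unique Minimizer) does imply the face is a singleton --- that is the step that rules out non-basic optimal $w$'s.
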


\noindent The conclusion can be loosely interpreted as stating that $\lambda^*$ is the ``most efficient linear representation'' of the payoff-relevant state in terms of the signal coefficients.\footnote{Specifically, consider the following constrained minimization problem: 
$\min \sum_{i = 1}^{N} \lvert \beta_i \rvert$ subject to $\sum_{i=1}^{N} \beta_i \cdot c_i = e_1.$
It can be shown by linear programming that the minimum is attained exactly when $\beta_i = \beta^{\mathcal{S}^*}_i$ (that is, when focusing on a single minimal spanning set).}

We show in Appendix \ref{appx:examples} that Assumption \ref{assumptionUniqueMin} is necessary for this result: In the environment described in Example \ref{ex:duplicate2}, there are priors given which it is strictly optimal to observe all four available signals with positive frequency.

%%%%%%%%
\section{Main Results}\label{sec:mainResults}
In general, we may expect a difference between the best one-shot allocation of $t$ acquisitions, described in the previous section, and the set of $t$ acquisitions chosen by sequential (short-lived) decision-makers. We turn now to characterizing the latter. 

Sections \ref{sec:learningTraps} and \ref{sec:efficient} focus on welfare evaluation\textemdash that is, will society's acquisitions $m(t)$ eventually approximate the optimal acquisitions $n(t)$? We show that signal correlation structures can be classified into two kinds\textemdash those for which efficient information aggregation is guaranteed (starting from all prior beliefs), and those for which ``learning traps" are possible (depending on the prior belief). Separation of these two classes depends critically on how many signals are required to identify $\omega$.

For a given signal structure, the set of signals that are observed in the long run potentially depends on the prior belief. A related question is then \emph{which} long-run outcomes are possible for a given signal structure, when we allow for arbitrary priors. We provide a complete characterization in Section \ref{sec:general}.

%We first present our main results under the following generic technical assumption:

%We impose this restriction in Sections \ref{sec:learningTraps} and \ref{sec:efficient} to allow for a simpler exposition of the main forces. Our results extend beyond Strong Linear Independence, and we characterize the general setting  in Section \ref{sec:general}.

%%%%
\subsection{Learning Traps} \label{sec:learningTraps}
The following simple example demonstrates that sequential information acquisition need not lead to efficient information aggregation. Indeed, the set of signals that are observed in the long run can be disjoint from the optimal set.

\begin{example}\label{ex:LT}
There are three available signals:
\begin{align*}
X_1 &= \omega + \epsilon_1\\
X_2 &= 3\omega + b_1 +\epsilon_2\\
X_3 &= b_1 + \epsilon_3
\end{align*}
Both $\{X_1\}$ and $\{X_2,X_3\}$ are minimal spanning sets, but $\{X_2,X_3\}$ is optimal.\footnote{It is straightforward to verify that $\phi(\{X_1\}) = 1 > 2/3= \phi(\{X_2,X_3\})$. Note also that $X_2+X_3$ is an unbiased signal about $\omega$, and it is more informative than two realizations of $X_1$; this demonstrates that $\{X_2,X_3\}$ is the best minimal spanning set without direct computation of $\phi$.}

Consider a prior where $\omega$ and $b_1$ are independent, and the prior variance of $b_1$ is large (exceeds $8$). In the first period, observation of $X_1$ is most informative about $\omega$, since $X_2$ is perceived as a noisier signal \emph{about $\omega$} than $X_1$, and observations of $X_3$ provide information only about the confounding term $b_1$ (which is uncorrelated with $\omega$).  Thus the best choice is to observe $X_1$. This observation does not affect the variance of $b_1$, so the same argument shows that every agent observes signal $X_1$.\footnote{Note that the existence of learning traps is not special to the assumption of normality. We report a related example with non-normal signals in Appendix \ref{appx:non-normalLT}.} We refer to $\{X_1\}$ in this example as a \emph{learning trap}.
\end{example} 

Generalizing this example, the result below (stated as a corollary, since it will follow from the subsequent Theorem \ref{thm:general}) gives a sufficient condition for learning traps. We impose the following (generic) assumption on the informational environment, which requires that every set of $k \leq K$ signals is linearly independent:

\begin{assumption}[Strong Linear Independence] 
$N\geq K$ and every $K\times K$ submatrix of $C$ is of full rank.\footnote{Besides trivial cases with redundant signals, Strong Linear Independence also rules out settings such as the following:
$
X_1 = \omega + b_1+ \epsilon_1$, $X_2 = b_1 +\epsilon_2$,
$X_3 = 2\omega + b_2 + \epsilon_3$, $X_4 = b_2 + \epsilon_4$, 
$X_5 = 3\omega + b_3 + \epsilon_5$, and $X_6 = b_3 + \epsilon_6$.
Then $K = 4$ but the four signals $X_1, X_2, X_3, X_4$ are \emph{not} linearly independent.} 
 \label{ass:SLI}
\end{assumption}

\begin{corollary}\label{corMyopic}
Assume Strong Linear Independence. For every minimal spanning set $\mathcal{S}$ with $\vert \mathcal{S} \vert < K$, there exists an open set of prior beliefs given which agents exclusively observe signals from $\mathcal{S}$. 
\end{corollary}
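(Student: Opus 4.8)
The plan is to show that for a minimal spanning set $\mathcal{S}$ with $|\mathcal{S}| < K$, one can choose a prior belief under which $\mathcal{S}$ becomes a learning trap in the sense of Example \ref{ex:LT}: every agent finds it myopically optimal to sample only from $\mathcal{S}$, and this property is self-reinforcing over time. The key structural feature driving the trap is that the signals in $\mathcal{S}$ involve confounding terms that are \emph{not} revealed by observing $\mathcal{S}$ alone (this is possible precisely because $|\mathcal{S}| < K$, so $\mathcal{S}$ cannot span all $K$ states). The idea is to place enormous prior variance on exactly those confounding directions that are orthogonal to the span of $\{c_i : i \in \mathcal{S}\}$, so that any signal outside $\mathcal{S}$ — which by Strong Linear Independence must load nontrivially on some such direction — is perceived as hopelessly noisy about $\omega$, while signals inside $\mathcal{S}$ remain useful.

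First I would set up the geometry. Let $W = \mathrm{span}\{c_i : i \in \mathcal{S}\}$, a proper subspace of $\mathbb{R}^K$ of dimension $|\mathcal{S}| < K$, containing $e_1$ since $\mathcal{S}$ is spanning. Decompose $\mathbb{R}^K = W \oplus W^\perp$. The prior I would construct makes the components of $\theta$ along $W^\perp$ have very large variance while keeping the $W$-components governed by a fixed, well-conditioned covariance. Concretely, take $\Sigma^0 = \Sigma_W \oplus \sigma^2 \cdot I_{W^\perp}$ in an orthonormal basis adapted to this splitting, and let $\sigma \to \infty$. Under this prior, observation of any signal $j \notin \mathcal{S}$ must, by Strong Linear Independence, have a coefficient vector $c_j$ with nonzero projection onto $W^\perp$; this makes $X_j$ a signal whose realization is dominated by the high-variance $W^\perp$-component and hence provides essentially no reduction in the posterior variance of $\omega$. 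Meanwhile signals inside $\mathcal{S}$ reduce variance about $\omega$ at a rate bounded away from zero uniformly in $\sigma$.

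The next step is to verify the self-reinforcing property. The crucial observation (as in Example \ref{ex:LT}) is that sampling from $\mathcal{S}$ only reduces uncertainty within $W$ and leaves the $W^\perp$-marginal of the belief \emph{unchanged} — because signals in $\mathcal{S}$ are informative only about $\langle c_i, \theta\rangle$ with $c_i \in W$, which are uncorrelated with the $W^\perp$-components under the block-diagonal prior. Therefore the large variance on $W^\perp$ persists for all time along any history that stays within $\mathcal{S}$. I would then argue that for $\sigma$ large enough, at \emph{every} history confined to $\mathcal{S}$, the immediate variance reduction from some signal in $\mathcal{S}$ strictly exceeds that from any signal outside $\mathcal{S}$; by induction on $t$ this forces $m(t)$ to be supported on $\mathcal{S}$ for all $t$. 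Finally, I would note that these myopic comparisons are strict and depend continuously on $\Sigma^0$, so the conclusion holds not just for the limiting construction but for an \emph{open} set of priors (e.g. all priors with $W^\perp$-variance sufficiently large and $W$-$W^\perp$ cross-covariances sufficiently small), giving the open-set claim.

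The main obstacle is making the ``provides essentially no reduction'' estimate uniform and quantitative: I need that the myopic value of an outside signal $j$ about $\omega$ vanishes as $\sigma \to \infty$ while the value of the best inside signal stays bounded below, and that this gap is preserved after arbitrarily many updates within $\mathcal{S}$. The cleanest way is to use the closed-form expression for $V$ from Appendix \ref{appx:prelim} together with a matrix-inversion (Woodbury / Schur-complement) argument: the variance reduction from $X_j$ scales like $1/\sigma^2$ in the offending $W^\perp$-direction, whereas the inside signals' reduction is governed by the fixed block $\Sigma_W$. Controlling this uniformly over all reachable $\Sigma^{t-1}$ — rather than just at the prior — is the delicate point, and it is exactly where the invariance of the $W^\perp$-marginal under $\mathcal{S}$-sampling does the essential work, since it guarantees the bad direction never gets resolved no matter how long agents remain trapped.
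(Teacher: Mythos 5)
Your architecture is the right one and is essentially the paper's (Appendix \ref{appx:myopicExistence} proves part (a) of Theorem \ref{thm:general}, of which Corollary \ref{corMyopic} is the special case $\overline{\mathcal{S}}=\mathcal{S}$): put very large prior variance on the directions of $\theta$ that signals in $\mathcal{S}$ cannot resolve, observe that sampling from $\mathcal{S}$ leaves that block of the belief untouched, and induct. But the step you flag as ``delicate'' is where the argument as written has a genuine hole, and the reason you offer for why it closes\textemdash that the $W^\perp$-marginal is invariant, so ``the bad direction never gets resolved''\textemdash is not sufficient. That invariance only controls the \emph{denominator} of the outside signal's value: writing $c_j=c_j^W+c_j^\perp$, the reduction in $\Var(\omega)$ from one draw of $X_j$ at a history with posterior $\Sigma_W^t\oplus\sigma^2I_{W^\perp}$ is $\bigl((c_j^W)'\Sigma_W^te_1\bigr)^2/\bigl(1+(c_j^W)'\Sigma_W^tc_j^W+\sigma^2\Vert c_j^\perp\Vert^2\bigr)$, which is indeed $O(1/\sigma^2)$\textemdash but its numerator \emph{also} shrinks to zero as agents keep sampling $\mathcal{S}$, and the inside signals' reductions shrink to zero at the same time (on the order of $1/t^2$). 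So the comparison ``outside is $O(1/\sigma^2)$ while inside is bounded below'' is only valid at the prior; for any fixed $\sigma$ it eventually says nothing, and you cannot choose $\sigma$ after seeing $t$. What is missing is a bound showing that the outside signal's covariance with $\omega$ is dominated by the best inside signal's covariance with $\omega$ uniformly over all reachable posteriors.

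The fix is short but it is an idea that must be stated: since $c_j^W\in W=\mathrm{span}\{c_i:i\in\mathcal{S}\}$, write $c_j^W=\sum_{i\in\mathcal{S}}\alpha_{ji}c_i$; then at every history confined to $\mathcal{S}$ the block structure gives $\mathrm{Cov}(X_j,\omega)=(c_j^W)'\Sigma_W^te_1=\sum_{i}\alpha_{ji}\,\mathrm{Cov}(X_i,\omega)$, hence $\lvert\mathrm{Cov}(X_j,\omega)\rvert\le\left(\sum_i\lvert\alpha_{ji}\rvert\right)\max_{i\in\mathcal{S}}\lvert\mathrm{Cov}(X_i,\omega)\rvert$, while every inside denominator is bounded by $1+\max_{i}c_i'\Sigma_W^0c_i$ because $\Sigma_W^t\preceq\Sigma_W^0$. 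This makes the ratio of the best outside reduction to the best inside reduction at most a constant (depending only on $C$ and $\Sigma_W^0$) divided by $\sigma^2$, uniformly in $t$, so one fixed large $\sigma$ works forever; Strong Linear Independence is exactly what guarantees $c_j^\perp\neq0$ for every $j\notin\mathcal{S}$, since $\lvert\mathcal{S}\cup\{j\}\rvert\le K$. The paper's own proof takes a parallel route: it changes coordinates so the $\mathcal{S}$-signals become coordinate signals with an independent prior, and additionally requires the within-$\mathcal{S}$ variances to be small with $\lambda_i^*v_i$ roughly constant, maintained over time via Proposition \ref{prop:N=K}. Those extra conditions are what handle competitors inside the subspace $\overline{\mathcal{S}}$ in the general Theorem \ref{thm:general}(a); they are vacuous here because $\overline{\mathcal{S}}=\mathcal{S}$ under Strong Linear Independence, so your simpler block-diagonal prior does suffice for this corollary once the numerator bound above is supplied.
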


\noindent Thus, every small set (fewer than $K$ signals) that identifies $\omega$ is a candidate learning trap. 

In special environments, simple bounds on the extent of inefficiency are possible. For example, if there is an unbiased signal $c\omega+\epsilon$, then the posterior variance at each time $t$ cannot exceed $1/(c^2t)$, and so aggregated payoffs must be at least $-\sum_{t\geq 1} \delta^{t-1}/(c^2 t)$.\footnote{This is because each agent can at least sample this unbiased signal and improve the posterior precision (i.e., inverse of the posterior variance) by $c^2$. We thank Andrzej Skrzypacz for this observation.}  The size of inefficiency cannot be uniformly bounded across environments, however. Specifically, for any positive number $L$, there exists a set of signals and a prior belief given which 
\[
\frac{\phi(\mathcal{S})}{\phi(\mathcal{S}^*)}> L
\]
where $\mathcal{S}$ is the set of signals observed in the long run with positive frequency, and $\mathcal{S}^*$ is the optimal set. This can be shown by direct construction: Modify the example above so that $X_2 = \alpha \omega + b_1 +\epsilon_2$, with $\alpha$ sufficiently large. For every choice of $\alpha$, there is a set of priors given which $X_1$ is again exclusively observed.\footnote{The region of inefficient priors (that result in suboptimal learning) does decrease in size as the level of inefficiency increases. As $\alpha$ increases, the prior variance of $b_1$ has to increase correspondingly in order for the first agent to choose $X_1$.} Thus, society's long-run speed of learning can be arbitrarily slower than the optimal speed. This also implies that\textemdash from the perspective of a social planner who maximizes $\delta$-discounted payoffs\textemdash the payoff ratio between society's sequential acquisitions and optimal acquisitions can be arbitrarily large as $\delta \rightarrow 1$.\footnote{To derive this payoff comparison, note that $\phi(S)/\phi(\mathcal{S}^*) > L$ implies the ratio of flow payoffs in any late period $t$ is larger than $L$. Using the fact that the harmonic series diverges, we know that as $\delta \rightarrow 1$, these (later) payoffs dominate the total payoffs from the initial periods.}

Finally, note that learning traps can emerge even if the best spanning set has the smallest size among all minimal spanning sets. We demonstrate this below.

\begin{example} The available signals are:
\begin{align*}
X_1 & =10\omega + b_1 + \epsilon_1 \\
X_2 &= b_1 + \epsilon_2 \\
X_3 &= \omega+b_2+\epsilon_3\\
X_4&=b_2 + b_3 + \epsilon_4\\
X_5&=b_3+\epsilon_5
\end{align*}
Here, the best set is $\{X_1,X_2\}$, which is also the minimal spanning set of lowest cardinality. But suppose the prior belief is such that $\omega, b_1, b_2, b_3$ are independent, and there is initially high uncertainty about $b_1$ and low uncertainty about $b_2$ and $b_3$. Then, agents suboptimally acquire only observations of $X_3$, $X_4$, and $X_5$.
\end{example}

%%%%
\subsection{Efficient Information Aggregation} \label{sec:efficient}
Suppose in contrast to the previous section that repeated observation of $K$ sources is  required to recover $\omega$. Our next result shows that a very different long-run outcome obtains: Starting from \emph{any} prior, information acquisition eventually approximates the optimal frequency. Thus, even though agents are short-lived (``myopic"), they will end up acquiring information in a way that is socially best.

\begin{corollary}\label{corEfficient}
Under Unique Minimizer, if every minimal spanning set has size $K$, then starting from any prior belief, it holds that $m_i(t) \sim \lambda^*_i \cdot t$ for every signal $i$.\footnote{In Appendix \ref{appx:m(t)finite}, we further show $m_i(t) = \lambda^*_i(t) + O(1)$ holds in this case. That result implies the difference between society's acquisitions $m(t)$ and optimal acquisitions $n(t)$ remains bounded as $t$ increases.}
\end{corollary}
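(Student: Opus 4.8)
The plan is to combine three ingredients: a \emph{full-learning} step showing that the size-$K$ hypothesis forces the whole state vector $\theta$ to be learned in the long run; a \emph{stationarity} step showing that any limiting frequency the myopic process can exhibit satisfies a first-order condition; and an \emph{LP-duality} step showing that the only frequency meeting this condition is the global optimum $\lambda^*$. Throughout I would use the closed form $\Sigma^t = [(\Sigma^0)^{-1} + \sum_i m_i(t)\,c_i c_i']^{-1}$ together with the Sherman--Morrison expression for the one-step reduction, $V(m(t)) - V(m(t)+e_i) = (e_1'\Sigma^t c_i)^2/(1 + c_i'\Sigma^t c_i)$, so that each agent's myopic choice maximizes this quantity over $i$.

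First I would establish full learning. Let $\mathcal{T}$ be the set of signals observed infinitely often and let $\Sigma^\infty = \lim_t \Sigma^t$ (the limit exists as $\Sigma^t$ is decreasing in the positive-semidefinite order); then $\Sigma^\infty c_i = 0$ for every $i \in \mathcal{T}$, so every signal in $\mathcal{T}$ has reduction tending to $0$. Suppose for contradiction that $\mathcal{T}$ is not spanning, i.e.\ $e_1 \notin R := \mathrm{span}\{c_i : i \in \mathcal{T}\}$. Writing $e_1 = e_1^R + e_1^\perp$ with $e_1^\perp \neq 0$, one checks that $\Sigma^\infty$ has range $R^\perp$ and $v := \Sigma^\infty e_1^\perp \neq 0$. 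Since $[N]$ is spanning, not every $c_j$ can be orthogonal to $v$ (otherwise $e_1 \in \mathrm{span}\{c_j\} \subseteq v^\perp$, contradicting $e_1'v = \Sigma^\infty_{11} > 0$); any such $c_j$ has $j \notin \mathcal{T}$ and limiting reduction $(e_1'\Sigma^\infty c_j)^2/(1+c_j'\Sigma^\infty c_j) > 0$. Hence from some time on a signal outside $\mathcal{T}$ always attains the maximal reduction and is chosen, contradicting that such signals are observed only finitely often. So $\mathcal{T}$ is spanning; because \emph{every} minimal spanning set has size $K$, any minimal spanning subset of $\mathcal{T}$ consists of $K$ vectors that must be linearly independent (otherwise a strictly smaller subset would already span $e_1$), so $\{c_i : i \in \mathcal{T}\}$ spans all of $\mathbb{R}^K$ and $\Sigma^t \to 0$.

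Next I would extract stationarity and close via duality. Once $\Sigma^t \to 0$ the denominators $1 + c_i'\Sigma^t c_i \to 1$, so the chosen signal asymptotically maximizes $|e_1'\Sigma^t c_i|$. Fix a subsequential limit $\lambda$ of $m(t)/t$ with support $P$, set $M_\lambda = \sum_i \lambda_i c_i c_i'$, $z = M_\lambda^{-1} e_1$, and $\rho_i = e_1' M_\lambda^{-1} c_i = z'c_i$; along the subsequence $\Sigma^t \approx \tfrac1t M_\lambda^{-1}$, so the myopic rule forces $|\rho_i| = \rho_{\max} := \max_j |\rho_j|$ for all $i \in P$ and $|\rho_j| \le \rho_{\max}$ for all $j$. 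Using $\sum_i \lambda_i \rho_i c_i = M_\lambda z = e_1$ and $g(\lambda) := e_1' M_\lambda^{-1} e_1 = \sum_i \lambda_i \rho_i^2 = \rho_{\max}^2$, the vector $y := z/\rho_{\max}$ is \emph{exactly} feasible for the dual $\max e_1'y$ s.t.\ $|c_j'y| \le 1\ \forall j$ of the program $\min \sum_i |\beta_i|$ s.t.\ $\sum_i \beta_i c_i = e_1$, attaining value $e_1'y = g(\lambda)/\rho_{\max} = \rho_{\max}$. Weak duality gives $\rho_{\max} \le \sum_i|\beta_i^{\mathcal{S}^*}| = \phi(\mathcal{S}^*)$, so $g(\lambda) = \rho_{\max}^2 \le \phi(\mathcal{S}^*)^2$; but $\phi(\mathcal{S}^*)^2$ is the minimum of $g$, uniquely attained at $\lambda^*$ by Theorem \ref{thm:totalOpt} under Unique Minimizer. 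Hence every subsequential limit equals $\lambda^*$, and since the frequencies lie in a compact simplex, $m(t)/t \to \lambda^*$, i.e.\ $m_i(t) \sim \lambda^*_i \cdot t$. I emphasize that the size-$K$ hypothesis is used only to guarantee full learning; given that, the duality argument pins down $\lambda^*$ regardless of $|P|$.

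The hard part will be the stationarity step, and specifically the two technical points compressed into ``$\Sigma^t \approx \tfrac1t M_\lambda^{-1}$'': justifying the asymptotic expansion of the marginal reductions uniformly enough that the discrete argmax dynamics genuinely inherit the continuous first-order condition, and handling subsequential limits whose support $P$ fails to span—which can arise a priori if a signal in $\mathcal{T}$ is observed infinitely often but at vanishing frequency, making $M_\lambda$ singular and breaking the clean expansion. I would address the latter either by ruling out such signals directly (showing signals outside $\mathcal{S}^*$ are observed only $O(1)$ times, the sharper statement deferred to the appendix) or by a perturbation argument that keeps the under-sampled directions explicit; this is where the fine counting estimates behind Proposition \ref{prop:N=K} are needed.
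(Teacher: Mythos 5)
Your full-learning step is correct and essentially matches the paper's preliminary argument (the paper shows the infinitely-observed set $\mathcal{A}$ is spanning via convexity and differentiability of $V$; your $\Sigma^\infty$ computation is an equivalent route), and your LP-duality endgame is also sound: given that a subsequential limit $\lambda$ with invertible $M_\lambda$ satisfies $\lvert \rho_i\rvert = \rho_{\max}$ on its support and $\lvert\rho_j\rvert\le\rho_{\max}$ everywhere, the dual feasibility of $z/\rho_{\max}$ does force $V^*(\lambda)\le\phi(\mathcal{S}^*)^2$ and hence $\lambda=\lambda^*$ by unique minimization (this is in effect a repackaging of the paper's computation $\partial_i V^*(\lambda^*)=-\gamma_i^2$ together with Lemma \ref{lemmLinearCombination}). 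The gap is the stationarity step, and it is not merely technical. First, as you note, a subsequential limit can place zero frequency on signals observed infinitely often, so its support need not span $\mathbb{R}^K$ and $M_\lambda^{-1}$ need not exist; but ruling this out is itself essentially as hard as the corollary\textemdash the paper proves it only in Online Appendix \ref{appx:m(t)finite}, \emph{after} and \emph{using} the convergence result, so invoking it here would be circular. Second, even when $M_\lambda$ is invertible, the first-order condition $\lvert\rho_i\rvert=\rho_{\max}$ for $i$ in the support does not follow from the argmax dynamics as stated: a signal can accumulate its positive limiting frequency at times when the empirical frequency vector is far from the particular subsequential limit $\lambda$, so one cannot conclude that signal $i$ is ever chosen while $m(t)/t\approx\lambda$. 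This is the standard pitfall in identifying limit points of best-response-type dynamics with stationary points, and closing it requires quantitative control of the path, not just the expansion $\Sigma^t\approx\frac1t M_\lambda^{-1}$.

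The paper sidesteps both issues with a different mechanism that your proposal does not contain: it lower-bounds the \emph{per-period} precision gain of the myopic agent by the directional derivative of $V$ along $\lambda^*$ (Lemma \ref{lemmDirectionalDer}, obtained by aggregating $\sum_i\lambda^*_iY_i$ into the unbiased signal $\omega/\phi(\mathcal{S}^*)+\mathcal{N}(0,1)$), uses the everywhere-differentiability of $V$ (in contrast to $V^*$) so that $\max_j\lvert\partial_jV\rvert\ge\lvert\partial_{\lambda^*}V\rvert$, and telescopes the resulting inequality $g(t+1)\le g(t)-g(t)^2$ to get $tV(m(t))\to\phi(\mathcal{S}^*)^2$; convergence $m(t)/t\to\lambda^*$ then follows from unique minimization of $V^*$ via a monotonicity argument that is valid even for non-spanning subsequential limits. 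To repair your proof you would need to either import this Lyapunov-type estimate (at which point the duality step becomes redundant) or develop a genuinely new argument that the limit set of the dynamics consists of KKT points, which is the actual content of the theorem.
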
 

\noindent A detailed intuition for this result appears in the subsequent Section \ref{sec:intuition}.

The condition that all minimal spanning sets have size $K$ is generically satisfied.\footnote{We point out that the set of coefficient matrices satisfying Unique Minimizer is ``generic" in the following stronger sense: Fix the \emph{directions} of coefficient vectors, and suppose that the \emph{precisions} are drawn at random; then, generically different minimal spanning sets correspond to different speed of learning. In contrast, whether every minimal spanning set has size $K$ is a condition on the \emph{directions} themselves.} However, if we expect that sources are endogenous to design or strategic motivations, the relevant informational environments may not fall under this condition. For example, the existence of an unbiased signal about $\omega$ (that is, $X= c \omega + \epsilon$) is non-generic, but plausible in practice. Sets of signals that partition into different groups (with group-specific confounding terms) are also economically interesting but non-generic. The previous Corollary \ref{corMyopic} shows that inefficiency is a possible outcome in these cases. 

%%%%
\subsection{Characterization of Long-Run Outcomes} \label{sec:general}
We now provide a complete characterization of the possible long-run observation sets for any environment. Here we need to consider \emph{subspaces spanned by different signal sets}. Formally, for any spanning set of signals $\mathcal{A}$, let $\overline{\mathcal{A}}\subseteq [N]$ be the set of available signals whose coefficient vectors belong to the subspace spanned by signals in $\mathcal{A}$. We say that a minimal spanning set $\mathcal{S}$ is \emph{subspace-optimal} if it uniquely maximizes the speed of learning among feasible sets of signals within its subspace.  
\begin{definition} \label{def:subspace}  A minimal spanning set $\mathcal{S}$ is \emph{subspace-optimal} if it uniquely minimizes $\phi$ among all subsets of $\overline{\mathcal{S}}$ that are minimally spanning. 
\end{definition} 

\begin{example} 
Suppose the available signals are $X_1 = \omega + \epsilon_1$ and $X_2 = 2\omega + \epsilon_2$, and define $\mathcal{S}=\{X_1\}$. Then, $\overline{\mathcal{S}}=\{X_1,X_2\}$. Since $\{X_2\}$ permits faster speed of learning than $\{X_1\}$, the set $\mathcal{S}$ is not subspace-optimal.
\end{example}

We introduce one final assumption, which strengthens Unique Minimizer to require the existence of a best minimal spanning set $\mathcal{S}$ within every subspace.

\begin{assumption}[Unique Minimizer in Every Subspace]\label{assumptionUniqueMinStrong}
For every $\mathcal{A} \subset [N]$, there exists a unique minimal spanning set $\mathcal{S}$ that minimizes $\phi$ among subsets of $\overline{\mathcal{A}}$. 
\end{assumption}
\noindent This assumption is guaranteed if different minimal spanning sets correspond to different $\phi$-values, and thus holds generically. 

Our next result generalizes both the learning traps result and also the efficient information aggregation result from the previous sections. Theorem \ref{thm:general} says that long-run information acquisitions eventually concentrate on a set $\mathcal{S}$ (starting from some prior belief) \emph{if and only if} $\mathcal{S}$ is a subspace-optimal minimal spanning set.

\begin{theorem}\label{thm:general}
(a) Suppose $\mathcal{S}$ is a subspace-optimal minimal spanning set. Then, there exists an open set of prior beliefs given which agents exclusively observe signals from $\mathcal{S}$. Long-run frequencies are positive precisely for those signals in $\mathcal{S}$, and they are given by Proposition \ref{prop:N=K}. 

(b) Under Assumption \ref{assumptionUniqueMinStrong}, long-run frequencies exist given any prior belief. Moreover, if $\mathcal{S}$ denotes the signals viewed with positive long-run frequency, then $\mathcal{S}$ is a minimal spanning set that is subspace-optimal, and the long-run frequencies are given by Proposition \ref{prop:N=K}.
\end{theorem}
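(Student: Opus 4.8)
The plan is to reduce the myopic dynamics to the asymptotics of a single scalar quantity and then read both parts of the theorem off its limiting form. Writing $\Sigma(q) = \big((\Sigma^0)^{-1} + \sum_i q_i c_i c_i'\big)^{-1}$ for the posterior covariance after the count vector $q$, the Sherman--Morrison identity gives the exact one-step reduction in the variance of $\omega$ from one further observation of signal $i$,
\[
V(q) - V(q + e_i) = \frac{\big(e_1' \Sigma(q) c_i\big)^2}{1 + c_i' \Sigma(q) c_i} =: g_i(q),
\]
so that agent $t$ selects $\argmax_i g_i(m(t-1))$. Everything hinges on the limit of $g_i$ along the myopic path.

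First I would fix a path $m(t)$, let $\mathcal{S}$ be the set of signals observed infinitely often and $R = \mathrm{span}\{c_i : i \in \mathcal{S}\}$, and establish the block behavior of $\Sigma(q)$ in an orthonormal basis adapted to $R \oplus R^{\perp}$. The accumulated precision grows at rate $t$ on the $R$-block and stays at the prior level on $R^{\perp}$, so $\Sigma(q)$ vanishes like $1/t$ on $R$, converges to the prior Schur-complement covariance on $R^{\perp}$, and has $1/t$ cross terms. One must first check that $\mathcal{S}$ is spanning (else a signal reducing the residual variance of $\omega$ would be chosen infinitely often), whence $e_1 \in R$; this gives $e_1'\Sigma(q) c_i = O(1/t)$ for every signal, and I would show $g_i(q) = \Theta(1/t^2)$ both for $c_i \in R$ (numerator small, denominator $\to 1$) and for $c_i \notin R$ (numerator small, denominator converging to a positive constant set by the frozen $R^{\perp}$ uncertainty). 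The payoff is an explicit constant: with limiting frequency $\lambda$ and $\bar A = \sum_{i\in\mathcal{S}} \lambda_i c_i c_i'$ (as an operator on $R$), one obtains $t^2 g_i(q) \to \big(e_1' \bar A^{-1} c_i\big)^2$ for $c_i \in R$.

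Next I would exploit that along the myopic path the constant $t^2 g_i$ must be asymptotically equalized across $\mathcal{S}$ and weakly dominant there. Substituting $e_1 = \sum_{i\in\mathcal{S}} \beta_i^{\mathcal{S}} c_i$ and $\bar A = C_{\mathcal{S}} \Lambda C_{\mathcal{S}}'$ with $\Lambda = \diag(\lambda)$ collapses the limit to $t^2 g_i \to (\beta_i^{\mathcal{S}}/\lambda_i)^2$; equalization then forces $\lambda_i \propto |\beta_i^{\mathcal{S}}|$, exactly the frequencies of Proposition \ref{prop:N=K}, with common value $\phi(\mathcal{S})^2$. For $j \in \overline{\mathcal{S}}$, writing $c_j = \sum_{i\in\mathcal{S}} w_i c_i$ yields $t^2 g_j \to \phi(\mathcal{S})^2 \big(\sum_i \mathrm{sign}(\beta^{\mathcal{S}}_i) w_i\big)^2$, so the requirement $g_j \le g_i$ becomes $|\sum_i \mathrm{sign}(\beta^{\mathcal{S}}_i) w_i| \le 1$ for every such $j$ — precisely the dual optimality condition for the $\ell_1$ program $\min \sum|\beta_i|$ subject to $\sum \beta_i c_i = e_1$ restricted to $\overline{\mathcal{S}}$ (the program in the footnote to Theorem \ref{thm:totalOpt}). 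Hence the support must be the $\phi$-minimizing minimal spanning subset of $\overline{\mathcal{S}}$, i.e. subspace-optimal; minimality (equivalently, linear independence) of $\mathcal{S}$ follows since a dependent support admits a zero coefficient in the decomposition and collapses to a strict subset. This gives part (b), once existence of the limit is settled: I would use Assumption \ref{assumptionUniqueMinStrong} to make the equalizing frequency the unique solution of the fixed-point condition and to rule out convergence to two distinct limit sets.

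For part (a) I would run this in reverse: given a subspace-optimal $\mathcal{S}$ with span $R$, I would inflate the prior variance in the $R^{\perp}$ directions (keeping $\omega$ uncorrelated with $R^{\perp}$) so that the denominator $1 + c_j'\Sigma c_j$ is large for every $c_j \notin R$, suppressing off-subspace signals for all time and confining acquisitions to $\overline{\mathcal{S}}$; and I would choose the within-$R$ prior (high uncertainty on the confounding directions of $\mathcal{S}$) so that the within-subspace dynamics converge to the $\phi$-optimal set, which by hypothesis is $\mathcal{S}$. Openness follows from the strictness of all the inequalities above. The main obstacle I anticipate is exactly the regime where within- and off-subspace signals both have marginal value of order $1/t^2$: the separation lives entirely in the constants, and those constants depend on the endogenous limiting frequency $\lambda$, so the statement is a self-consistent fixed point rather than a term-by-term comparison. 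Controlling the $o(1/t^2)$ remainders uniformly and — relatedly — proving that the frequency limit exists at all (the myopic rule is a discrete best response that could a priori cycle) is where the real work lies; the uniqueness from Assumption \ref{assumptionUniqueMinStrong} is what I would lean on to exclude cycling.
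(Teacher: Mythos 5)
Your reduction of the dynamics to the marginal-value function $g_i(q)$ and the computation $t^2 g_i \to (\beta_i^{\mathcal S}/\lambda_i)^2$ is correct, and it is a genuinely different route from the paper's: the paper never equalizes marginal values along the path. Instead it lower-bounds the one-step variance reduction by the directional derivative of $V$ along $\lambda^*$ (Lemmas \ref{lemmDiscreteDer}--\ref{lemmVarReduction}), turns this into the recursion $V(m(t+1))\le V(m(t))-\tfrac{M}{M+1}V(m(t))^2/\phi(\mathcal S^*)^2$, concludes $V(m(t))\lesssim \phi(\mathcal S^*)^2/t$, and only then invokes the fact that $\lambda^*$ is the \emph{unique} minimizer of the convex function $V^*$ (established in the proof of Theorem \ref{thm:totalOpt}) to force every subsequential limit of $m(t)/t$ to equal $\lambda^*$. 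The point of that detour is exactly the issue you defer to the last sentence: existence of the frequency limit. Your equalization step presupposes that $m_i(t)/t$ converges (otherwise $\bar A$ and the limits $t^2 g_i$ are not defined), and Assumption \ref{assumptionUniqueMinStrong} alone does not rule out non-convergent oscillation of the empirical frequencies --- it is a condition on $\phi$-values of sets, not on the dynamics. So the proposal is circular at its central step, and the paper's variance-rate argument is precisely the device that breaks the circle. Your $\ell_1$-dual reading of the support condition is a nice complementary lens --- it is essentially the content of Lemma \ref{lemmLinearCombination} and inequality (\ref{eq:partialsIneq}) --- but it characterizes the fixed point rather than proving convergence to it.

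For part (a) your skeleton (inflate the prior on $R^{\perp}$ to suppress off-subspace signals, then control the within-$R$ dynamics) matches the paper, but two things are missing. First, confining agents to $\overline{\mathcal S}$ does not by itself deliver convergence to $\mathcal S$: the within-subspace acquisitions could concentrate on a proper subspace of $R$ and land on a different subspace-optimal set, so you cannot simply cite ``the within-subspace dynamics converge to the $\phi$-optimal set.'' Second, the prior you suggest is not the one that works: the paper changes coordinates so that the signals in $\mathcal S$ read $\tilde\theta_i+\mathcal N(0,1)$ and then takes \emph{small} variances $v_1,\dots,v_k$ with the products $\lambda^*_i v_i$ approximately equal; this balance is what makes the variance reduction of any other in-subspace signal scale as $(\sum_i\alpha_i)^2$ with $\lvert\sum_i\alpha_i\rvert<1$ by Lemma \ref{lemmLinearCombination}, so that \emph{every} agent, not just the first, picks from $\mathcal S$. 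One must also verify that these conditions on the posterior are reproduced period by period, which is the inductive heart of the paper's argument and is absent from the sketch.
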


This theorem directly implies our previous Corollaries \ref{corMyopic} and \ref{corEfficient}. To see this, observe that under Strong Linear Independence, $\overline{\mathcal{S}}=\mathcal{S}$ for every minimal spanning set $\mathcal{S}$ with fewer than $K$ signals.\footnote{Suppose $\lvert \mathcal{S} \rvert < K$, then by assumption of Strong Linear Independence, every signal not in $\mathcal{S}$ is linearly independent from the signals in $\mathcal{S}$. Hence $\overline{\mathcal{S}}$ cannot contain any other signal.}  This implies that every minimal spanning set with fewer than $K$ signals is (trivially) optimal in its subspace, producing Corollary \ref{corMyopic} from part (a) of the theorem.

On the other hand, if every minimal spanning set has size $K$, then all minimal spanning sets belong to the same subspace. Under Unique Minimizer, there can only be one minimal spanning set that is optimal in this subspace, and  this must also be the best set overall (in the sense of Section \ref{sec:optimalAsymp}). This yields Corollary \ref{corEfficient} from part (b) of the theorem above.

We collect below a few additional implications of Theorem \ref{thm:general}:

\begin{corollary} 
Under Unique Minimizer, if the best set $\mathcal{S}^*$ is of size 1 (equivalently, if $\mathcal{S}^*$ consists of a single unbiased signal $\alpha \omega + \epsilon$), then learning traps cannot emerge.\footnote{This is because the unbiased signal belongs to every subspace spanned by a minimal spanning set.}
\end{corollary}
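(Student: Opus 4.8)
The plan is to read the statement through the lens of Theorem~\ref{thm:general}, which identifies the possible long-run observation sets with exactly the subspace-optimal minimal spanning sets (Definition~\ref{def:subspace}). Under this characterization a learning trap is precisely a subspace-optimal minimal spanning set $\mathcal{S}\neq\mathcal{S}^*$: by part (a) any such $\mathcal{S}$ attracts an open set of priors and traps society on a suboptimal set, while conversely (as in the derivations of Corollaries~\ref{corMyopic} and~\ref{corEfficient}) every prior's long-run outcome is one of these sets. It therefore suffices to prove the purely structural claim that, when $\mathcal{S}^*=\{i^*\}$ with $c_{i^*}=\alpha e_1$, the \emph{only} subspace-optimal minimal spanning set is $\mathcal{S}^*$ itself; efficient aggregation on $\mathcal{S}^*$ then follows for every prior, and no trap can form.

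The key observation I would use is geometric, and it is the one flagged in the footnote. Since the unbiased signal has coefficient vector $c_{i^*}=\alpha e_1$ proportional to $e_1$, and since \emph{every} spanning set by definition has coefficient vectors spanning $e_1$, the vector $c_{i^*}$ lies in the subspace spanned by any minimal spanning set $\mathcal{S}$. Hence $i^*\in\overline{\mathcal{S}}$ for every such $\mathcal{S}$. Moreover $\{i^*\}$ is itself minimally spanning (it spans $e_1$ because $\alpha\neq0$, and it is minimal trivially). Consequently $\{i^*\}=\mathcal{S}^*$ is a minimally spanning \emph{subset} of $\overline{\mathcal{S}}$ for every minimal spanning $\mathcal{S}$: it always sits in the field of competitors that Definition~\ref{def:subspace} ranks by $\phi$.

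To finish, I would suppose $\mathcal{S}$ is subspace-optimal and show $\mathcal{S}=\mathcal{S}^*$. If instead $\mathcal{S}\neq\{i^*\}$, then because $\{i^*\}\subseteq\overline{\mathcal{S}}$ is a minimally spanning subset distinct from $\mathcal{S}$, subspace-optimality forces the strict inequality $\phi(\mathcal{S})<\phi(\{i^*\})=\phi(\mathcal{S}^*)$. But Unique Minimizer (Assumption~\ref{assumptionUniqueMin}) says $\mathcal{S}^*$ is the unique global minimizer of $\phi$ over all minimal spanning sets, giving $\phi(\mathcal{S}^*)\leq\phi(\mathcal{S})$ — a contradiction. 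Hence $\mathcal{S}=\{i^*\}=\mathcal{S}^*$, so $\mathcal{S}^*$ is the sole subspace-optimal minimal spanning set, and by Theorem~\ref{thm:general} society converges to $\mathcal{S}^*$ from every prior.

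I expect the only genuine subtlety, rather than any computational obstacle, to lie in the first and last steps: correctly equating ``learning trap'' with ``subspace-optimal minimal spanning set other than $\mathcal{S}^*$'' through the characterization of Theorem~\ref{thm:general}, and verifying that the singleton $\{i^*\}$ really does enter the subspace comparison for \emph{every} candidate set $\mathcal{S}$ — which is where the proportionality $c_{i^*}=\alpha e_1$ does all the work. Once these are in place, the contradiction with Unique Minimizer is immediate and requires no calculation.
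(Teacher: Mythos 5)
Your argument is exactly the paper's (the corollary's entire proof is the footnote you cite): since $c_{i^*}=\alpha e_1$ lies in the span of every spanning set, $\{i^*\}$ competes in every subspace comparison, so Unique Minimizer rules out any other subspace-optimal set, and Theorem \ref{thm:general} does the rest. The only point left implicit in both your write-up and the paper's is that invoking part (b) formally requires Assumption \ref{assumptionUniqueMinStrong}, which in this case follows automatically from Unique Minimizer because $\{i^*\}$ is the unique $\phi$-minimizer within every subspace.
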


\begin{corollary} 
Under Unique Minimizer, learning traps of size $K$ cannot emerge.\footnote{If $\mathcal{S}$ is a learning trap of size $K$, then it spans the whole space. But part (b) of Theorem \ref{thm:general} shows $\mathcal{S}$ must be the best set, leading to a contradiction.}
\end{corollary}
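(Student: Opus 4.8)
The plan is to argue by contradiction, reducing the claim to the characterization in Theorem \ref{thm:general}(b) together with the elementary observation that a size-$K$ minimal spanning set already exhausts the whole coefficient space. Suppose toward a contradiction that some minimal spanning set $\mathcal{S}$ with $\lvert \mathcal{S} \rvert = K$ is a learning trap: there is a prior under which agents eventually observe only signals in $\mathcal{S}$, yet $\mathcal{S} \neq \mathcal{S}^*$, so that learning is inefficient.

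First I would establish the linear-algebraic fact that $\overline{\mathcal{S}} = [N]$. Because $\mathcal{S}$ is minimal spanning, the vectors $\{c_i : i \in \mathcal{S}\}$ must be linearly independent: any nontrivial dependence $\sum_{i \in \mathcal{S}} a_i c_i = 0$ could be added (scaled) to a representation $e_1 = \sum_{i \in \mathcal{S}} \beta_i c_i$ to zero out one coefficient, exhibiting a proper spanning subset of $\mathcal{S}$ and contradicting minimality. Hence the $K$ vectors $\{c_i : i \in \mathcal{S}\}$ form a basis of $\mathbb{R}^K$, so every available coefficient vector $c_j$ lies in their span, giving $\overline{\mathcal{S}} = [N]$.

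Next I would invoke Theorem \ref{thm:general}(b): since $\mathcal{S}$ is precisely the set of signals observed with positive long-run frequency under the given prior, $\mathcal{S}$ is subspace-optimal, i.e.\ it uniquely minimizes $\phi$ among all minimal spanning subsets of $\overline{\mathcal{S}}$. But $\overline{\mathcal{S}} = [N]$, so the minimal spanning subsets of $\overline{\mathcal{S}}$ are exactly \emph{all} minimal spanning subsets of $[N]$. Thus $\mathcal{S}$ minimizes $\phi$ globally, which under Unique Minimizer forces $\mathcal{S} = \mathcal{S}^*$, contradicting $\mathcal{S} \neq \mathcal{S}^*$. This completes the argument.

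The argument is short, and I expect no serious obstacle; the only point requiring genuine care is the reduction $\overline{\mathcal{S}} = [N]$, which hinges on the (easy but essential) fact that minimality of a spanning set implies linear independence of its coefficient vectors, so that a size-$K$ set necessarily spans all of $\mathbb{R}^K$. Once that is in place, subspace-optimality over the full space coincides verbatim with global $\phi$-optimality, and the contradiction with Unique Minimizer is immediate.
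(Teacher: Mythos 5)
Your proposal is correct and follows essentially the same route as the paper's own (footnote) proof: a size-$K$ minimal spanning set has linearly independent coefficient vectors, hence spans all of $\mathbb{R}^K$ so that $\overline{\mathcal{S}}=[N]$, and then Theorem \ref{thm:general}(b) forces $\mathcal{S}$ to be subspace-optimal in the full space, i.e.\ equal to $\mathcal{S}^*$ under Unique Minimizer. You simply make explicit the linear-independence step that the paper leaves implicit, and note (correctly) that when the relevant subspace is all of $\mathbb{R}^K$, subspace-optimality coincides with global $\phi$-optimality, so only Unique Minimizer (rather than its subspace strengthening) is needed.
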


%%%%%%%%
\section{Intuitions for Main Results} \label{sec:intuition}

We provide intuitions for Corollaries \ref{corMyopic}-\ref{corEfficient} and Theorem \ref{thm:general} together in the sections below.

%%%%
\subsection{High-Level Argument}
Agents choose signals by comparing the marginal value of different observations. Thus, signal acquisitions eventually concentrate on a set $\mathcal{S}$ if and only if the marginal values of signals in that set become persistently higher than those of other signals. 

In settings of \emph{exactly sufficient information}, in which agents must observe all available signals in order to learn $\omega$, it can be shown that agents will eventually observe all signals and learn all states (including all of the confounding terms). Thus, agents will come to evaluate signals by a prior-independent ``asymptotic" valuation, which also allows them to identify the best set of signals and approximate the optimal frequency. 

When information is overabundant, agents can learn $\omega$ from many different (proper) subsets of signals, and there is no guarantee that agents will observe signals in the best set at all. This complicates the analysis, since the marginal value of any given signal depends critically on which signals have been observed previously. It is exactly this difference that leads to our learning traps result (Corollary \ref{corMyopic}): Observation of different minimal spanning sets in the long run can be sustained by prior beliefs (and resulting posterior beliefs) that overvalue the signals within the set relative to signals outside of the set. 

However, we show that our previous argument for the exactly sufficient information case holds ``subspace by subspace." That is, as agents repeatedly acquire signals from any fixed subspace of signals, they will eventually discover the asymptotic marginal values of each signal \emph{in that subspace}. In the long run, agents identify and choose from the best set of signals within that subspace. Thus, only those sets of signals that are best in their subspace are potentially ``self-sustaining." And if all sets of signals that reveal $\omega$ span the entire space, agents will identify the best set of signals overall and achieve efficient information aggregation. 

%%%%
\subsection{Gradient Descent Dynamics}
To provide further intuition, we introduce the following ``normalized" asymptotic posterior variance function $V^*$ (which takes as input frequency vectors $\lambda \in \Delta^{N-1}$):
\[
V^*(\lambda) = \lim_{t\rightarrow \infty} t \cdot V(\lambda t).
\]
We establish the following relationships between $V^*$ and $V$. First, signal acquisitions chosen according to a frequency vector that minimizes $V^*$ will asymptotically also minimize the posterior variance function $V$ (Lemma \ref{lemmReductionToAsympVar}); this justifies our interest in $V^*$. Second, $V^*(\lambda)$ is convex in $\lambda$ and its unique minimum is the optimal frequency vector $\lambda^*$ (Lemma \ref{lemmUniqueMinimizer}).  So the question of whether efficient information aggregation obtains is equivalent to the question of whether the frequency vector $m(t)/t$ comes to minimize $V^*$. 

By construction, each agent chooses the signal that minimizes the value of $V$. Under a certain condition, this is also the signal that roughly achieves the greatest reduction in $V^*$. Thus, we can think of society's frequency vector $\lambda(t):= m(t)/t$ as evolving according to a gradient descent dynamic: At each (large) $t$, $\lambda(t)$ moves in a direction that minimizes $V^*$. Since $V^*$ is a convex function, we might expect gradient descent to be well-behaved, eventually converging to the global minimizer $\lambda^*$. 

While the above argument appears to be in conflict with our learning traps result, our problem has the special feature that descent can only occur along $N$ directions (indexed by the available signals).\footnote{This constraint corresponds to our assumption that each agent acquires a single, discrete, observation of a chosen signal.} This limitation is without loss whenever $V^*$ is differentiable, since all directional derivatives can then be rewritten as convex combinations of the partial derivatives along basis vectors. The function $V^*$, however, is \emph{not} differentiable everywhere. Consider our learning trap example with signals
\begin{align*}
X_1 &= \omega + \epsilon_1\\
X_2 &= 3\omega + b_1 +\epsilon_2\\
X_3 &= b_1 + \epsilon_3
\end{align*}
and set the frequency vector to be $\lambda=(1,0,0)$. It is easy to verify that beliefs are made less precise if we re-assign weight from $X_1$ to $X_2$, or from $X_1$ to $X_3$. But beliefs are made more precise if we simultaneously re-assign weight from $X_1$ to \emph{both} $X_2$ and $X_3$. This means that the derivative of $V^*$ in either direction $(-1, 1, 0)$ or $(-1, 0, 1)$ is positive, while its derivative in the direction $(-1, \frac{1}{2}, \frac{1}{2})$ is in fact negative. Hence, $V^*$ is not differentiable at $\lambda$.

Our constrained version of gradient descent can become stuck at vectors $\lambda$ such as this, so that agents repeatedly sustain the frequency vector $\lambda$ instead of moving to another frequency vector with smaller $V^*$. This is reflected in our learning traps result (Corollary \ref{corMyopic} and part (a) of Theorem \ref{thm:general}). On the other hand, it can be shown that $V^*$ is differentiable at $\lambda$ whenever $\lambda$ places nonzero weight on a spanning set of $K$ signals. This explains why efficient information aggregation occurs under the assumption of Corollary \ref{corEfficient}, where the need to recover $\omega$ requires society to learn all the states (and hence to observe a set of signals that spans all of $\mathbb{R}^K$). Repeating this argument for each subspace yields part (b) of Theorem \ref{thm:general}. 

\begin{remark} 
The above intuition connects to a literature on learning convergence in potential games \citep{MondererShapley,Sandholm}. Define an $N$-player game where each player $i$ chooses a number $\lambda_i \in \mathbb{R}_{+}$ and receives payoff $-\left(\sum_{j = 1}^{N} \lambda_j\right) \cdot V^*(\lambda) = -V^*\left(\lambda/\sum_{j = 1}^{N} \lambda_j\right)$.\footnote{Thus players' actions are normalized to a frequency vector $\lambda/\sum_{j = 1}^{N} \lambda_j$.} Then, we have a potential game with (exact) potential function $-V^*$, and our long-run observation sets correspond to equilibria of this game. In finite potential games and infinite games with a differentiable potential function, pure-strategy Nash equilibria can be related to the extreme points of the potential function. However, our game described above is an infinite potential game with a non-differentiable potential function. It is known that Nash equilibria in such games need not occur at extreme points, and this is consistent with our result on learning traps.

We note that the connection to potential games is not sufficient to derive our main results, since our agents receive payoff $-V(\lambda t)$ rather than its asymptotic variant $V^*$. This difference is nontrivial, because whether we can substitute analysis of $V^*$ for analysis of $V$ depends on the prior belief as well as the endogenous path of signal acquisitions.\footnote{Note that the prior belief\textemdash which is outside of the description of the asymptotic potential game\textemdash influences which outcome agents will converge to. This is also a feature special to our setting.} 
\end{remark}

%%%%%%%%
\section{Interventions}\label{sec:intervention}
Section \ref{sec:mainResults} demonstrated the possibility for sequential information acquisition to result in inefficient learning. We ask now whether it is possible for a policymaker to influence agents towards efficient learning. Naturally, this question applies only when agents would otherwise potentially achieve a suboptimal speed of learning (with conditions given in part (a) of Theorem \ref{thm:general}). 

We compare several possible policy interventions: Increasing the \emph{quality} of information acquisition (so that each signal acquisition is more informative); restructuring incentives so that agents' payoffs are based on information obtained over several periods (equivalent to acquisition of \emph{multiple signals} each period); and providing a one-shot release of \emph{free information}, which can then guide subsequent acquisitions. These interventions have different levels of effectiveness, as we explain below. 

%%%%
\subsection{More Precise Information}\label{sec:preciseInfo}
Consider first an intervention in which the precision of signal draws is increased uniformly over signals. For example, if different signals correspond to measurement of different neurochemicals in a group of lab subjects, a government agency can provide researchers with funding that permits recruitment of more subjects. This improves the quality of the estimate regardless of which neurochemical the researcher chooses to measure.

We model this intervention by supposing that each signal acquisition produces $B$ independent observations from that source (with the main model corresponding to $B=1$). Our result below, which follows from part (a) of Theorem \ref{thm:general}, says that providing more informative signals is of limited effectiveness: Any set of signals that is a potential learning trap given $B=1$ remains a potential learning trap under arbitrary improvements to signal precision.  

\begin{corollary} 
Suppose that for $B=1$, there is a set of priors given which signals in $\mathcal{S}$ are (exclusively) viewed in the long run. Then, for every $B \in \mathbb{Z}_+$, there is a set of priors given which $\mathcal{S}$ is exclusively viewed in the long run. 
\end{corollary}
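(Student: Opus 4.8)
The plan is to recast the $B$-observation model as the baseline framework with a rescaled coefficient matrix, and then to show that the property of being a potential learning trap is invariant under this rescaling, so that the conclusion reduces to applying Theorem \ref{thm:general} in the rescaled model.

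First I would establish the informational reduction. Acquiring signal $i$ once under the $B$-intervention produces $B$ independent draws of $X_i = \langle c_i, \theta\rangle + \epsilon_i$; the sufficient statistic for these draws carries exactly the Fisher information $B\, c_i c_i'$, which coincides with the information delivered by a single draw of the signal $\langle \sqrt{B}\, c_i, \theta\rangle + \tilde\epsilon$ with $\tilde\epsilon \sim \mathcal{N}(0,1)$. Hence the $B$-model is literally the baseline model with coefficient matrix $C_B := \sqrt{B}\, C$: the deterministic posterior-variance map $V$, and therefore the myopic recursion defining $m(t)$, are identical in the two descriptions. The careful point here is to check that this equivalence holds not just for a single acquisition but along the entire history, so that the dynamics are genuinely unchanged.

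Next I would record that the structural objects governing Theorem \ref{thm:general} are scale-invariant. Because the rows of $C_B$ are the positive multiples $\sqrt{B}\, c_i$ of the rows of $C$, every signal set spans the same subspace, so $\overline{\mathcal{A}}$, the minimal spanning sets, and the containments $\mathcal{S} \subset \overline{\mathcal{A}}$ are all unchanged. For a minimal spanning set $\mathcal{S}$, the decomposition $e_1 = \sum_{i \in \mathcal{S}} \beta_i^{\mathcal{S}} c_i$ rewrites as $e_1 = \sum_{i \in \mathcal{S}} (\beta_i^{\mathcal{S}}/\sqrt{B})\,(\sqrt{B}\, c_i)$, so under $C_B$ the critical coefficients become $\beta_i^{\mathcal{S}}/\sqrt{B}$ and hence $\phi_B(\mathcal{S}) = \phi(\mathcal{S})/\sqrt{B}$. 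Since every $\phi$-value is multiplied by the common factor $1/\sqrt{B}$, the $\phi$-ranking of minimal spanning sets is preserved within $[N]$ and within every subspace; consequently Assumptions \ref{assumptionUniqueMin} and \ref{assumptionUniqueMinStrong} transfer, the normalized frequencies $\lambda^*$ of Proposition \ref{prop:N=K} are unchanged (the $1/\sqrt{B}$ factors cancel), and $\mathcal{S}$ is subspace-optimal under $C_B$ if and only if it is subspace-optimal under $C$.

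Finally I would assemble these facts through Theorem \ref{thm:general}. By the theorem, the hypothesis that $\mathcal{S}$ is exclusively viewed under some prior at $B=1$ is equivalent to $\mathcal{S}$ being a subspace-optimal minimal spanning set for $C$; by the scale-invariance just shown it is therefore subspace-optimal for $C_B$ as well, for every $B$. Applying part (a) of Theorem \ref{thm:general} to the baseline model with coefficient matrix $C_B$ then produces an open set of priors under which agents exclusively observe $\mathcal{S}$ in the $B$-model, which is exactly the claim. I expect the main obstacle to be the first step — rigorously certifying the history-wide informational equivalence between $B$ scaled-up draws and a single $\sqrt{B}$-rescaled signal — since everything downstream is a clean consequence of the uniform scaling $\phi_B = \phi/\sqrt{B}$.
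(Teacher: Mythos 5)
Your proposal is correct and is essentially the argument the paper intends: the paper offers no separate proof beyond asserting that the corollary follows from part (a) of Theorem \ref{thm:general}, and your reduction of the ``$B$ identical draws per acquisition'' model to the baseline model with coefficient matrix $\sqrt{B}\,C$, together with the observation that $\phi_B(\mathcal{S}) = \phi(\mathcal{S})/\sqrt{B}$ uniformly so that minimal spanning sets, subspaces, $\lambda^*$, and subspace-optimality are all preserved, is exactly the substance behind that assertion. The only loose end, which you share with the paper's own informal presentation, is that passing from the hypothesis (``$\mathcal{S}$ is exclusively viewed under some prior at $B=1$'') to ``$\mathcal{S}$ is a subspace-optimal minimal spanning set'' invokes part (b) of Theorem \ref{thm:general} and hence implicitly Assumption \ref{assumptionUniqueMinStrong}, which the corollary does not state explicitly.
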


However, the sets of prior beliefs corresponding to different values of $B$ need not be the same. For a \emph{fixed} prior belief, subsidizing higher quality acquisitions may or may not move the community out of a learning trap. To see this, consider first the informational environment and prior belief from Example \ref{ex:LT}. Increasing the precision of signals is ineffective here: The first agent chooses $X_1$ regardless of the value of $B$, and our previous logic again implies that each subsequent agent also chooses signal $X_1$. Thus, the set $\{X_1\}$ remains a learning trap. In Appendix \ref{appx:preciseInfo}, we provide a contrasting example in which increasing the precision of signals can indeed break agents out of a learning trap from a specified prior belief. Which of these examples is relevant depends on fine details of the informational environment as well as the prior, which the policymaker may not know in practice.

%%%%
\subsection{Batches of Signals}
Another possibility is to restructure the incentive scheme so that agents' payoffs are based on information obtained over several periods, equivalent to acquisition of a batch of signals each period. For example, evaluation of researchers can be based on a set of papers, or researchers can be given labs and permitted to direct the work of multiple individuals simultaneously.

Both of these approaches for restructuring the environment can be modeled as permitting each agent to allocate $B$ observations across the sources (where $B=1$ returns the main model). Note the key difference from the previous intervention: Here it is possible for the $B$ observations to be allocated across \emph{different} signals. We show that it is possible to guarantee efficient information aggregation in this case:

\begin{proposition}\label{prop:batch}
Under Unique Minimizer, there is a $B$ such that given acquisition of $B$ signals every period, long-run frequency is $\lambda^*$ starting from every prior belief.
\end{proposition}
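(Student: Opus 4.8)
The plan is to reduce the batch dynamics to a constrained descent on the asymptotic potential $V^*$, and then exploit the fact that larger batches enlarge the set of feasible descent directions enough to eliminate every stationary point except $\lambda^*$. First I would set up the reduction. With batch size $B$, agent $t$ chooses a nonnegative integer vector $k$ with $\sum_i k_i = B$ to minimize the posterior variance $V(m(t)+k)$, so $m(t+1)=m(t)+k$. Since the total number of observations $\|m(t)+k\|_1 = B(t+1)$ does not depend on the choice of $k$, Lemma \ref{lemmReductionToAsympVar} lets me replace $V$ by its normalized asymptotic version $V^*$: the chosen batch asymptotically minimizes $V^*$ of the resulting frequency vector. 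Writing $\lambda(t) = m(t)/(Bt)$ and $D_B := \{k/B : k \in \mathbb{Z}_{\ge 0}^N,\ \sum_i k_i = B\}$ for the $B$-grid on the simplex, a direct computation gives $\lambda(t+1) = \tfrac{t}{t+1}\lambda(t) + \tfrac{1}{t+1} v(t)$, where $v(t)\in D_B$ is the (asymptotically) minimizing atom. Thus the frequency evolves by a vanishing step $1/(t+1)$ toward the grid point $v(t)$ that most decreases $V^*$, a conditional-gradient (Frank--Wolfe) descent on the convex function $V^*$ whose feasible move directions at $\lambda$ are exactly $\{v-\lambda : v\in D_B\}$.

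Second I would identify the candidate limits. By Lemma \ref{lemmUniqueMinimizer}, $V^*$ is convex with unique minimizer $\lambda^*$. A frequency $\lambda\neq\lambda^*$ can be a stationary point of the descent only if no feasible direction strictly decreases $V^*$, i.e.\ $DV^*(\lambda; v-\lambda)\ge 0$ for every $v\in D_B$, where $DV^*(\lambda;\cdot)$ denotes the one-sided directional derivative. Wherever $V^*$ is differentiable this forces the gradient to vanish along the simplex, hence $\lambda=\lambda^*$; so for any $B\ge 1$ the only possible non-optimal stationary points are the non-differentiable points of $V^*$. As explained in Section \ref{sec:intuition}, these are precisely the frequency vectors whose support fails to span a $K$-dimensional configuration, and by Proposition \ref{prop:N=K} the relevant candidate limits are the finitely many frequency vectors $\lambda^{\mathcal{S}}$ attached to minimal spanning sets $\mathcal{S}$ with $|\mathcal{S}|<K$. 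This candidate set is geometric and does not depend on $B$.

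Third, and this is the crux, I would show that a single finite $B$ destabilizes every non-optimal candidate. Fix such a $\lambda=\lambda^{\mathcal{S}}\neq\lambda^*$. Since $V^*$ is convex with unique minimizer, the direction $\lambda^*-\lambda$ satisfies $DV^*(\lambda; \lambda^*-\lambda)<0$; because $w\mapsto DV^*(\lambda;w)$ is continuous and positively homogeneous, the set of strict descent directions is a nonempty open cone. As $B\to\infty$ the grid $D_B$ becomes dense in $\Delta^{N-1}$, so some $v\in D_B$ eventually lands with $v-\lambda$ inside this open cone; this is exactly the step that the single-signal dynamics ($B=1$) cannot take, since at $\lambda$ the vertex directions $e_i-\lambda$ are all ascent (the non-differentiability illustrated by the learning-trap example). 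Because there are only finitely many candidates $\lambda^{\mathcal{S}}$, I can choose one $B$ large enough to supply a strict descent atom simultaneously for all of them.

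Finally I would conclude convergence. For the chosen $B$, the only stationary point of the constrained descent is $\lambda^*$. The myopic agent selects the feasible atom giving the steepest decrease of $V^*$, so $V^*(\lambda(t))$ serves as a Lyapunov function that is asymptotically nonincreasing and strictly decreasing away from $\lambda^*$; combined with the vanishing step sizes $1/(t+1)$ and the convergence framework behind Theorem \ref{thm:general}, this yields $\lambda(t)\to\lambda^*$ from every prior belief, which is the claim. The main obstacle is the third step together with its interface with the fourth: I must pass from ``the grid is dense as $B\to\infty$'' and ``there are finitely many traps'' to ``one finite $B$ works uniformly,'' and I must ensure that the best \emph{finite} (shrinking) batch step genuinely reduces $V^*$ at a non-differentiable point rather than merely having a negative directional derivative. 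For this I would use monotonicity of the difference quotients of a convex function (a strictly negative directional derivative yields a strict decrease for all sufficiently small steps, uniformly over the finite candidate set) and rule out stalling or cycling near a trap via the same Lyapunov argument used in the main theorems.
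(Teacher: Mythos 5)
Your overall strategy---treat the batch dynamics as constrained descent on $V^*$ and argue that a large enough grid $D_B$ supplies a strict descent atom at every non-optimal stationary point---is a genuinely different route from the paper's, but it has a gap at its foundation: the substitution of $V^*$ for $V$ in describing what the myopic agent does. Lemma \ref{lemmReductionToAsympVar} concerns the planner's one-shot $t$-optimal division, not the greedy recursion; the agent at history $m(t)$ minimizes the exact posterior variance $V(m(t)+k)$, and whether this is approximately the $V^*$-minimizing atom depends on the uniformity of $t\cdot V(\lambda t)\to V^*(\lambda)$ along the \emph{endogenous} path---precisely near candidate traps, where the realized frequency concentrates on a lower-dimensional support and the prior's contribution $(\Sigma^0)^{-1}/t$ to the precision matrix is not negligible in the unlearned directions. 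The paper flags exactly this in the Remark of Section \ref{sec:intuition}: the potential-game/descent picture ``is not sufficient to derive our main results, since our agents receive payoff $-V(\lambda t)$ rather than its asymptotic variant $V^*$.'' Your third step establishes a negative directional derivative of $V^*$ at the finitely many candidates $\lambda^{\mathcal{S}}$, but you still need (i) that these are the only possible limit points of the \emph{batch} dynamics, which is itself the content of the subspace-by-subspace analysis behind Theorem \ref{thm:general}(b) and would have to be re-established for batches, and (ii) a version of the descent that is uniform over neighborhoods of each candidate and survives the passage back from $V^*$ to $V$; neither is supplied.

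The paper's proof avoids all of this with one constructive observation: whatever the history, the feasible batch that allocates $\lfloor B\lambda^*_i\rfloor$ draws to each $i\in\mathcal{S}^*$ increases the posterior \emph{precision} about $\omega$ by at least $(1-\epsilon)B/\phi(\mathcal{S}^*)^2$. This follows from the signal-aggregation argument of Lemma \ref{lemmDirectionalDer}, which is an exact finite-sample statement about $V$, not an asymptotic one about $V^*$. Since the myopic batch does at least as well as this feasible one, precision grows at rate at least $(1-\epsilon)B/\phi(\mathcal{S}^*)^2$ per period from \emph{every} prior, so the achieved speed of learning is within a factor $(1-\epsilon)^{-1}$ of optimal; combined with the lower bound (\ref{eq:V*lowerBound}) this forces the long-run frequencies toward $\lambda^*$, and the restated Theorem \ref{thm:general}(b) finishes. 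If you want to salvage your approach, the missing ingredient is exactly such a history-uniform lower bound on the variance reduction achievable by some feasible batch---that is what lets one bypass the stationary-point analysis of the non-differentiable potential entirely.
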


\noindent Thus, given sufficiently many observations each period, agents will allocate observations in a way that eventually approximates the optimal frequency.

The number of observations needed, however, depends on subtle details of the informational environment. In particular, the required $B$ cannot be uniformly bounded over all environments of fixed size (number of states $K$ and number of signals $N$).
The required $B$ instead depends on two properties: First, it depends on how well the optimal frequency $\lambda^*$ can be approximated via allocation of $B$ observations.\footnote{For example, $\lambda^*=(1/2,1/2)$ can be achieved exactly using two observations, while $\lambda^*=(3/8, 5/8)$ cannot.}
Second, it depends on the difference in learning speed between the best set and the next best minimal spanning set; this difference determines the ``slack" that is permitted in the approximation of $\lambda^*$. Thus, small batch sizes $B$ are sufficient when the optimal frequency $\lambda^*$ can be well-approximated using a small number of observations, or when there are large efficiency gains from observing the best set. See Appendix \ref{appx:batch} for further details. 

%%%%
\subsection{Free Information}
Finally, we consider provision of free information to the community.  We can think of this as releasing information that a policymaker knows, or as a reduced form for funding specific kinds of research, the results of which are made public.

Formally, the policymaker chooses $M$ signals
$
\langle p_j, \theta \rangle + \mathcal{N}(0,1)
$, 
where each $\Vert p_j \Vert_2 \leq \gamma$, so that signal precisions are bounded by $\gamma^2$. At time $t=0$, independent realizations of these signals are made public. All subsequent agents update their prior beliefs based on this free information in addition to the history of signal acquisitions thus far. 

We show that given a sufficient number of (kinds of) signals, of sufficiently high precision,  efficient learning can be guaranteed. Specifically, if $k \leq K$ is the size of the optimal set $\mathcal{S}^*$, then $k-1$ precise signals are sufficient to guarantee efficient learning: 

\begin{proposition}\label{prop:freeSignal} 
Let $k:=\vert \mathcal{S}^* \vert$. Under Unique Minimizer, there exists a $\gamma<\infty$, and $k-1$ signals with $\Vert p_j \Vert_2 \leq \gamma$, such that with these free signals provided at $t = 0$, society's long-run frequency is $\lambda^*$ starting from every prior belief. 
\end{proposition}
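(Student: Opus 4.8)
\textbf{Proof proposal for Proposition \ref{prop:freeSignal}.}

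The plan is to use the free signals to force agents into the subspace spanned by the optimal set $\mathcal{S}^*$, so that the ``subspace-by-subspace'' logic of Theorem \ref{thm:general}(b) then identifies $\mathcal{S}^*$ as the unique self-sustaining set. Write $k=\vert \mathcal{S}^*\vert$, and note that $\mathcal{S}^*$ is minimally spanning, so its coefficient vectors $\{c_i: i\in\mathcal{S}^*\}$ are linearly independent and span a $k$-dimensional subspace $R^*$ of $\mathbb{R}^K$ that contains $e_1$. The key observation is that $R^*$ has dimension $k$, so $k-1$ additional very precise signals, together with the one residual degree of freedom pinned down by $e_1$, can in effect collapse the relevant uncertainty onto $R^*$. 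Concretely, I would choose the $k-1$ free signal directions $p_1,\dots,p_{k-1}$ to be (normalized) vectors spanning a complement within $R^*$ of the line through $e_1$; that is, I would pick them so that $\{e_1,p_1,\dots,p_{k-1}\}$ spans exactly $R^*$. Providing these signals at precision $\gamma^2$ drives the prior (posterior after $t=0$) variance along all of $R^*$ except the $e_1$-direction down to order $1/\gamma^2$.

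Next I would argue that, for $\gamma$ large enough, the resulting effective environment is asymptotically equivalent to one of \emph{exactly sufficient information} on the subspace $R^*$. Intuitively, once the $k-1$ directions within $R^*$ are known with near-certainty, the only residual uncertainty that signals in $\overline{\mathcal{S}^*}$ can reduce is along $e_1$ (the payoff-relevant direction) and whatever confounding directions lie outside $R^*$; but by construction every signal in $\mathcal{S}^*$ lies in $R^*$, so observing $\mathcal{S}^*$ already spans the $\omega$-direction. The claim to establish is that the induced posterior-variance dynamics on frequency vectors behave like the differentiable case analyzed in Section \ref{sec:intuition}: since the frequency vector now effectively places weight on a set that spans the relevant subspace $R^*$, the asymptotic variance function $V^*$ restricted to this environment is differentiable at the relevant points, and constrained gradient descent converges to its unique minimizer. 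By Lemma \ref{lemmUniqueMinimizer} and Theorem \ref{thm:totalOpt}, that minimizer over $R^*$ is $\lambda^*$ (recall $\mathcal{S}^*$ minimizes $\phi$, and the free signals do not change the identity of the best minimal spanning set, only accelerate learning within $R^*$). A uniform-over-priors conclusion follows because the precision $\gamma$ can be taken large enough that the free information swamps the prior belief in all the directions of $R^*$ orthogonal to $e_1$.

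The main obstacle I anticipate is making precise the sense in which ``large $\gamma$'' restores differentiability of $V^*$ uniformly over all prior beliefs, and then transferring convergence for $V^*$ back to convergence of the actual variance $V$ (as flagged in the Remark following Section \ref{sec:intuition}). The subtlety is that the non-differentiability exploited in the learning-traps examples arises exactly when the frequency vector fails to place weight on a spanning set of the ambient subspace; I must verify that the free signals genuinely eliminate this failure mode, i.e.\ that after the free release no agent can be induced (by any prior) to concentrate on a \emph{proper} subset of $\overline{\mathcal{S}^*}$ that spans only a strict subspace of $R^*$. This requires a quantitative comparison: the marginal value of the remaining signals in $\mathcal{S}^*$, once the $k-1$ complementary directions are nearly resolved, must persistently dominate the marginal value of any signal pulling outside $R^*$. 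I would handle this by a continuity/perturbation argument showing that as $\gamma\to\infty$ the first-period (and hence every-period) comparison of marginal variance reductions converges to the comparison in the exactly-sufficient environment on $R^*$, where Corollary \ref{corEfficient} already guarantees the optimal outcome; the finite-$\gamma$ statement then follows by choosing $\gamma$ past the threshold where these strict inequalities are preserved uniformly over the (compact, after normalization) set of relevant prior configurations. The remaining steps—invoking Proposition \ref{prop:N=K} for the within-$\mathcal{S}^*$ frequencies and Theorem \ref{thm:general}(b) for the long-run identification—are then routine.
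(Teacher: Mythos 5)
Your construction of the free signals is essentially the paper's: after a linear change of coordinates, the paper provides $k-1$ precise signals about exactly the confounding terms appearing in $\mathcal{S}^*$, i.e.\ directions that together with $e_1$ span the subspace of $\mathcal{S}^*$. The gap is in how you get from there to convergence. First, a one-shot release of free information does not change the asymptotic function $V^*$ at all (it washes out in the $t\to\infty$ normalization), so it cannot ``restore differentiability of $V^*$''; the non-differentiability at frequency vectors supported on fewer than $K$ independent signals persists. What the free signals actually buy is a statement about the finite-$t$ function $V$: since precisions add, the effective prior precision after $t=0$ satisfies $(\Sigma^0)^{-1}+\sum_j p_jp_j' \succeq \gamma^2\sum_{i=2}^{k}\Delta_{ii}$ \emph{for every prior} $\Sigma^0$, and since the set of signals observed infinitely often is spanning, $C'QC\succeq \gamma^2\Delta_{11}$ at all late periods; together these give $(\Sigma^0)^{-1}+C'QC\succeq M c_jc_j'$ for every $j\in\mathcal{S}^*$. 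That is precisely the hypothesis of Lemma \ref{lemmVarReduction}, which lower-bounds the variance reduction of the \emph{best} signal in $\mathcal{S}^*$ by $\frac{M}{M+1}\cdot\frac{V(q)^2}{\phi(\mathcal{S}^*)^2}$ whatever the history. Since each agent myopically picks a signal at least this good, one gets $V(m(t))\le \frac{M+1}{M}\cdot\frac{\phi(\mathcal{S}^*)^2}{t-t_0}$, and choosing $M$ (hence $\gamma$) large relative to the gap between $\phi(\mathcal{S}^*)$ and the next-best minimal spanning set pins down $\lambda^*$ via uniqueness of the minimizer of $V^*$ and Theorem \ref{thm:general}(b).

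Second, your substitute for this step\textemdash a continuity argument showing that for large $\gamma$ the per-period comparison of marginal values matches the exactly-sufficient environment ``uniformly over the (compact, after normalization) set of relevant prior configurations''\textemdash does not go through. The set of priors is not compact in any useful sense here: the free signals leave the prior untouched in every direction outside the subspace of $\mathcal{S}^*$, so for any fixed $\gamma$ there are priors under which agents sample signals outside that subspace for arbitrarily many early periods, and no threshold on $\gamma$ controls which signal is chosen period by period. The paper's argument deliberately avoids controlling agents' actual choices; it needs only a lower bound on the variance reduction of whatever they choose, which is why the required $\gamma$ depends only on the environment and not on the prior. To complete your proof, replace the attempt to show that agents remain in $\overline{\mathcal{S}^*}$ with the quantitative bounds of Lemmas \ref{lemmDirectionalDer} and \ref{lemmVarReduction}.
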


The proof is by construction. We show that as long as agents understand those confounding terms that appear in the best set of signals (these parameters have dimension $k-1$), they will come to evaluate the signals in the best set according to their asymptotic marginal values.\footnote{This intervention requires knowledge of the full correlation structure, and also which set $\mathcal{S}^*$ is best. An alternative intervention, with higher demands on information provision but lower demands on knowledge of the environment, is to provide $K-1$ (sufficiently precise) signals about all of the confounding terms.}

This intervention is most relevant in settings in which a technological advance could greatly speed up progress, but development of the technology is slow and tedious. For example, suppose that high-resolution brain scans would allow for rapid understanding of depression, but the current imaging technology is very poor. Researchers working to understand depression may prefer to exploit existing technologies, rather than contribute to development of this new technology. The government can intervene by funding preliminary development of brain imaging, which then encourages researchers to begin using brain scans. Once low-resolution brain scans are common, the payoff to advancing the imaging technology increases, and even short-sighted researchers may contribute to this agenda. In this way, provision of free information can nudge agents onto the right path of learning.

%%%%%%%%
\section{Extensions} \label{sec:extensions}

%%%%
\subsection{General Payoff Functions}
All of our main results extend when each agent $t$ chooses an action to maximize an arbitrary individual payoff function $u_t(a_t,\omega)$ (recall that previously we restricted to $u_t(a_t,\omega)=-(a_t-\omega)^2$). We require only that these payoff functions are nontrivial in the following sense:
\begin{assumption}[Payoff Sensitivity to Mean]\label{assumptionSensitivity}
For every $t$, any variance $\sigma^2 > 0$ and any action $a^* \in A$, there exists a positive Lebesgue measure of $\mu$ for which $a^*$ does \emph{not} maximize $\mathbb{E}[u_t(a, \omega) \mid \omega \sim \mathcal{N}(\mu, \sigma^2)]$.
\end{assumption}
\noindent That is, for every belief variance, the expected value of $\omega$ affects the optimal action to take. This rules out cases with a ``dominant" action and ensures that each agent \emph{strictly} prefers to choose the most informative signal. Since the signal that minimizes the posterior variance about $\omega$ Blackwell-dominates every other signal,\footnote{See, e.g., \cite{HansenTorgersen}.} each agent's information acquisition remains unchanged. 

However, the interpretation of the optimal benchmark (that we defined in Section \ref{sec:optimalAsymp}) is more limited. Specifically, while the optimal frequency can still be interpreted as maximizing information revelation, the relationship to the social planner problem (Proposition \ref{prop:highDelta}) may fail. A detailed discussion is relegated to Appendix \ref{appx:genpayoff}.

%%%%
\subsection{Low Altruism} 
So far we have assumed that agents care only to maximize the accuracy of their own prediction of the payoff-relevant state. Consider a generalization in which agents are slightly altruistic; that is, each agent $t$ chooses a signal as well as an action $a_t$ to maximize discounted payoffs $\mathbb{E} \left[\sum_{t' \geq t} \delta^{t' - t} \cdot u(a_t, \omega) \right]$, assuming that future agents will behave similarly. Note that $\delta = 0$ returns our main model. We show in Appendix \ref{appx:lowDelta} that for $\delta$ sufficiently small, part (a) of Theorem \ref{thm:general} continues to hold (in every equilibrium of this game). So the existence of learning traps is robust to a small degree of altruism. 

%%%%
\subsection{Multiple Payoff-Relevant States} 
In our main model, only one of the $K$ persistent states is payoff-relevant. Consider a generalization in which each agent predicts (the same) $r \leq K$ unknown states and his payoff is determined via a weighted sum of quadratic losses. We show in Appendix \ref{appx:multiStates} that all of our main results extend to this setting. The possibility for agents to have payoffs that depend on \emph{heterogeneous} states is also interesting, and we leave this for future work.

%%%%%%%%
\section{Conclusion}
We study a model of sequential learning, where short-lived agents choose what kind of information to acquire from a large set of available information sources. Because agents do not internalize the impact of their information acquisitions on later decision-makers, they may acquire information inefficiently (from a social perspective). Inefficiency is not guaranteed, however: Depending on the informational environment, myopic concerns can endogenously push agents to identify and observe only the most informative sources.

Our main results separate these possibilities, and reveal that the extent of \emph{learning spillovers} is essential to determining which outcome emerges. Specifically, does information about unknowns of immediate societal interest (i.e., the payoff-relevant state) also teach about unknowns that are only of indirect value (i.e., the confounding terms)?
 
When such spillovers are present, simple incentive schemes for information acquisition\textemdash in which agents care only about immediate contributions to knowledge\textemdash are sufficient for efficient long-run learning. When these spillovers are not built into the environment, other incentives are needed.  For example, forward-looking funding agencies can encourage investment in the confounding terms (our ``free information" intervention). Alternatively, agents can be evaluated on the basis of a body of work (our ``multiple signal" intervention). These observations are consistent with practices that have arisen in academic research, including the establishment of third-party funding agencies (e.g.\ the NSF) to support basic science and methodological research, and the evaluation of researchers based on advancements developed across several papers (e.g.\ tenure and various prizes).

We conclude below with brief mention of additional directions. So far we have considered the demand for information given an exogenous set of information sources. In a complementary problem to ours, information sources choose the information they provide in order to maximize demand. Our Theorem \ref{thm:totalOpt} implies the following comparative static: If signal $i$ is viewed with positive frequency in the optimal benchmark, then this frequency is (locally) decreasing in its precision. Thus, if demand is interpreted as $\lambda_i^*$ (the long-run frequency with which source $i$ is optimally viewed), sources face conflicting incentives: They want to provide information sufficiently precise to be included in the best set and receive viewership at all, but subject to this, they want to provide signals as imprecise as possible. These conflicting forces suggest that characterization of the equilibrium provisions of information precision is subtle.

Finally, while we have described our setting as choice between information sources, our model may apply more generally to choice between actions with complementarities. For example, suppose a sequence of managers take actions that have externalities for future managers, and each manager seeks to maximize performance of the company during his tenure. The concepts we have developed here of \emph{efficient information aggregation} and \emph{learning traps} have natural analogues in that setting (actions that maximize the company's long-term welfare, versus those that do not). Relative to the general setting, we study here a class of complementarities that are micro-founded in correlated signals. It is an interesting question of whether and how the forces we find here generalize to other sorts of complementarities.

%%%%%%%%%%%%%%%%%%%%%%%%%%%%%%%%%%%%%%%%%%%%%%%%%%%%

\appendix
%\linespread{1}
%\small

%%%%%%%%
\section{Appendix}
The structure of the appendix follows that of the paper. We provide proofs for the results listed in the main text, in the order in which they appeared; the only exception is that the proof of Proposition \ref{prop:highDelta} relies on tools we develop in the other proofs, and so it is given at the end. Other results and examples are deferred to a separate Online Appendix. 

%%%%
\subsection{Posterior Variance Function} \label{appx:prelim}

\subsubsection{Monotonicity and Convexity}
Here we review and extend a basic result from \citet{LiangMuSyrgkanis}. Specifically, we show that the posterior variance about $\omega$ weakly decreases over time, and the marginal value of any signal decreases in its signal count. 

\begin{lemma}\label{lemmVar}
Given prior covariance matrix $\Sigma^0$ and $q_i$ observations of each signal $i$, society's posterior variance about $\omega$ is
\begin{equation}\label{eq:f}
V(q_1, \dots, q_N) = \left[ ((\Sigma^0)^{-1} + C'QC)^{-1} \right]_{11}
\end{equation}
where $Q = \diag(q_1, \dots, q_N)$. The function $V$ is decreasing and convex in each $q_i$ whenever these arguments take non-negative real values.
\end{lemma}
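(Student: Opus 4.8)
The plan is to first establish the closed-form expression \eqref{eq:f} by standard conjugate Gaussian updating, and then read off monotonicity and convexity from elementary matrix calculus applied to that expression. For the formula, I would stack all observations into a single linear-Gaussian model: the prior $\theta \sim \mathcal{N}(\mu^0, \Sigma^0)$ together with independent observations $X_i = \langle c_i, \theta\rangle + \epsilon_i$, $\epsilon_i \sim \mathcal{N}(0,1)$, forms a conjugate setup in which the posterior precision equals the prior precision plus the total information contributed by the observations. A single observation of signal $i$ contributes information $c_i c_i'$, so $q_i$ such observations contribute $q_i c_i c_i'$; summing over signals gives $\sum_i q_i c_i c_i' = C'QC$ with $Q = \diag(q_1,\dots,q_N)$. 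Hence the posterior precision is $(\Sigma^0)^{-1} + C'QC$, the posterior covariance is its inverse, and the posterior variance of $\omega$ is the $(1,1)$ entry, yielding \eqref{eq:f}. Since only the aggregate information matrix $C'QC$ enters, the same expression defines $V$ for real (non-integer) arguments $q_i \geq 0$.

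For the second part, I would write $A(q) := (\Sigma^0)^{-1} + C'QC = (\Sigma^0)^{-1} + \sum_i q_i c_i c_i'$, so that $V(q) = e_1' A(q)^{-1} e_1$. Because $\Sigma^0$ has full rank, $(\Sigma^0)^{-1} \succ 0$, and adding the positive semidefinite matrices $q_i c_i c_i'$ for $q_i \geq 0$ keeps $A(q)$ positive definite; in particular $A(q)^{-1}$ exists throughout the non-negative orthant. Using $\partial A^{-1}/\partial q_i = -A^{-1}(c_i c_i')A^{-1}$, I compute
\[
\frac{\partial V}{\partial q_i} = -\, e_1' A^{-1} c_i c_i' A^{-1} e_1 = -\big(c_i' A^{-1} e_1\big)^2 \leq 0,
\]
giving that $V$ is decreasing in each $q_i$. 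Differentiating once more,
\[
\frac{\partial^2 V}{\partial q_i^2} = 2\,\big(c_i' A^{-1} e_1\big)^2 \big(c_i' A^{-1} c_i\big) \geq 0,
\]
where the inequality uses $A^{-1} \succ 0$ (so $c_i' A^{-1} c_i \geq 0$), giving convexity in each $q_i$. If joint convexity is desired, I would instead invoke convexity of the matrix-fractional map $A \mapsto e_1' A^{-1} e_1$ on the positive-definite cone — its epigraph is the spectrahedron $\{(A,s): \left(\begin{smallmatrix} A & e_1 \\ e_1' & s\end{smallmatrix}\right) \succeq 0\}$ by the Schur complement — composed with the affine map $q \mapsto A(q)$.

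The calculation itself is routine, so there is no deep obstacle. The two points needing care are the aggregation step identifying the total information matrix as $C'QC$ (which is what lets repeated and fractional counts be handled by a single formula), and the verification that $A(q)$ remains positive definite on the non-negative orthant, since this is exactly what licenses the matrix-inversion derivatives above. Both follow immediately from the full-rank assumption on $\Sigma^0$ and the fact that each $c_i c_i' \succeq 0$.
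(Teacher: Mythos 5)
Your proof is correct. The derivation of the closed form \eqref{eq:f} is identical in substance to the paper's (posterior precision equals prior precision plus the aggregated signal information $C'QC$), but your treatment of monotonicity and convexity takes a genuinely different route. The paper argues via the Loewner order: increasing $q_i$ increases $C'QC$ in the positive-semidefinite order, hence decreases the posterior covariance matrix and its $(1,1)$ entry; for convexity it establishes midpoint convexity of $q \mapsto \left[((\Sigma^0)^{-1}+C'QC)^{-1}\right]_{11}$ directly from the AM--HM inequality for positive-definite matrices and then invokes continuity. You instead differentiate, obtaining $\partial_i V = -(c_i'A^{-1}e_1)^2 \leq 0$ and $\partial_{ii}V = 2(c_i'A^{-1}e_1)^2(c_i'A^{-1}c_i) \geq 0$. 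One caution: those second-derivative computations only deliver \emph{coordinate-wise} convexity, whereas the paper proves (and later uses, e.g.\ to deduce convexity of $V^*$ on the simplex in Lemma \ref{lemmUniqueMinimizer}) \emph{joint} convexity of $V$ in $(q_1,\dots,q_N)$; you correctly anticipate this and close the gap with the Schur-complement/matrix-fractional argument, so your proof is complete provided that supplementary step is treated as part of the main argument rather than an optional remark. A side benefit of your calculus route is that the exact expressions for $\partial_i V$ and $\partial_{ii}V$ are precisely what the paper needs again later (Lemma \ref{lemmSecondDerSmall}), whereas the paper's order-theoretic argument is cleaner but yields no quantitative information; conversely, the paper's AM--HM argument gives joint matrix convexity of the whole posterior covariance in one line without any differentiability bookkeeping.
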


\begin{proof}
Note that $(\Sigma^0)^{-1}$ is the prior precision matrix and $C'QC = \sum_{i = 1}^{N} q_i \cdot [c_i c_i']$ is the total precision from the observed signals. Thus (\ref{eq:f}) simply represents the fact that for Gaussian prior and signals, the posterior precision matrix is the sum of the prior and signal precision matrices. To prove the monotonicity of $V$, consider the partial order $\succeq$ on positive semi-definite matrices where $A \succeq B$ if and only if $A - B$ is positive semi-definite. As $q_i$ increases, the matrix $Q$ and $C'QC$ increase in this order. Thus the posterior covariance matrix $((\Sigma^0)^{-1} + C'QC)^{-1}$ decreases in this order, which implies that the posterior variance about $\omega$ decreases.

To prove that $V$ is convex, it suffices to prove that $V$ is \emph{midpoint-convex} since the function is clearly continuous.\footnote{A function $V$ is midpoint-convex if the inequality $V(a) + V(b) \geq 2 V(\frac{a+b}{2})$ always holds. Every continuous function that is midpoint-convex is also convex.} Take $q_1, \dots, q_N$, $r_1, \dots, r_N \in \mathbb{R}_{+}$ and let $s_i = \frac{q_i + r_i}{2}$. Define the corresponding diagonal matrices to be $Q$, $R$, $S$. Observe that $Q + R = 2S$. Thus by the AM-HM inequality for positive-definite matrices, we have 
\[
((\Sigma^0)^{-1} + C'QC)^{-1} + ((\Sigma^0)^{-1} + C'RC)^{-1} \succeq 2((\Sigma^0)^{-1} + C'SC)^{-1}.
\]
Using (\ref{eq:f}), we conclude that
\[
V(q_1, \dots, q_N) + V(r_1, \dots, r_N) \geq 2V(s_1, \dots, s_N). 
\]
This proves the (midpoint) convexity of $V$. 
\end{proof}

\subsubsection{Inverse of Positive Semi-definite Matrices}
For future use, we provide a definition of $[X^{-1}]_{11}$ for positive \emph{semi-definite} matrices $X$. When $X$ is positive definite, its eigenvalues are strictly positive, and its inverse matrix is defined as usual. In general, we can apply the Spectral Theorem to write 
\[
X = U D U', 
\]
where $U$ is a $K \times K$ orthogonal matrix whose columns are eigenvectors of $X$, and $D = \diag (d_1, \dots, d_K)$ is a diagonal matrix consisting of non-negative eigenvalues. Even if some of these eigenvalues are zero, we can think of $X^{-1}$ as
\[
X^{-1} = (U D U')^{-1} = U D^{-1} U' = \sum_{j = 1}^{K} \frac{1}{d_j} \cdot [u_j u_j']
\]
where $u_j$ is the $j$-th column vector of $U$. We thus define 
\begin{equation}\label{eq:matrixInverse}
[X^{-1}]_{11} := \sum_{j = 1}^{K} \frac{(\langle u_j, e_1 \rangle)^2}{d_j},
\end{equation}
with the convention that $\frac{0}{0} = 0$ and $\frac{z}{0} = \infty$ for any $z > 0$. Note that by this definition, 
\[
[X^{-1}]_{11} = \lim_{\epsilon \rightarrow 0_{+}} \left(\sum_{j = 1}^{K} \frac{(\langle u_j, e_1 \rangle)^2}{d_j + \epsilon} \right) = [(X + \epsilon I_K)^{-1}]_{11},
\]
since the matrix $X + \epsilon I_K$ has the same set of eigenvectors as $X$, with eigenvalues increased by $\epsilon$. Hence our definition of $[X^{-1}]_{11}$ is a continuous extension of the usual definition to positive semi-definite matrices.

%%%%
\subsection{Proof of Theorem \ref{thm:totalOpt}} \label{appx:totalOpt}

\subsubsection{Asymptotic Posterior Variance Function}
We first approximate the posterior variance as a function of the frequencies with which each signal is observed. Specifically, for any $\lambda \in \mathbb{R}_{+}^{N}$, define
\[
V^*(\lambda) := \lim_{t \rightarrow \infty} t \cdot V(\lambda t).
\]
The following result shows $V^*$ to be well-defined and computes its value:
\begin{lemma}\label{lemmAsympVar} 
Let $\Lambda = \diag(\lambda_1, \dots, \lambda_N)$. Then 
\begin{equation}\label{eq:f*}
\begin{split}
V^*(\lambda) = [(C'\Lambda C)^{-1}]_{11}
\end{split}
\end{equation}
The value of $[(C'\Lambda C)^{-1}]_{11}$ is well-defined, see (\ref{eq:matrixInverse}). 
\end{lemma}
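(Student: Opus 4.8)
The plan is to read the result directly off the closed-form expression for $V$ in Lemma~\ref{lemmVar}. Setting $q_i = \lambda_i t$, so that $Q = t\Lambda$, equation~(\ref{eq:f}) gives $V(\lambda t) = [((\Sigma^0)^{-1} + t\,C'\Lambda C)^{-1}]_{11}$. I would then factor $t$ out of the matrix being inverted, writing $(\Sigma^0)^{-1} + t\,C'\Lambda C = t\big(\tfrac{1}{t}(\Sigma^0)^{-1} + C'\Lambda C\big)$; inverting and pulling the resulting scalar $\tfrac1t$ through the (linear) $(1,1)$-entry operation yields
\[
t\cdot V(\lambda t) = \Big[\big(\tfrac{1}{t}(\Sigma^0)^{-1} + C'\Lambda C\big)^{-1}\Big]_{11}.
\]
The claim $V^*(\lambda) = [(C'\Lambda C)^{-1}]_{11}$ then amounts to showing that the right-hand side converges, as $t\to\infty$, to the $(1,1)$-entry of the (generalized) inverse of $C'\Lambda C$, the vanishing perturbation being $\tfrac1t(\Sigma^0)^{-1}$.

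For each finite $t$ the matrix $\tfrac{1}{t}(\Sigma^0)^{-1} + C'\Lambda C$ is positive definite (since $(\Sigma^0)^{-1}\succ 0$ and $C'\Lambda C\succeq 0$), so every term in the sequence is well-defined via the ordinary inverse. The subtlety is that $C'\Lambda C$ itself may be singular — indeed it is singular precisely when the signals carrying positive weight fail to span — in which case $[(C'\Lambda C)^{-1}]_{11}$ must be read through the extended definition~(\ref{eq:matrixInverse}) and may equal $+\infty$. The continuity statement established immediately after~(\ref{eq:matrixInverse}) covers only the particular perturbation $\epsilon I_K$, whereas here the perturbation is $\tfrac1t(\Sigma^0)^{-1}$, so I would bridge the gap with a squeeze. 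Let $c>0$ and $C_0$ denote the smallest and largest eigenvalues of $(\Sigma^0)^{-1}$, so that $\tfrac{c}{t}I_K \preceq \tfrac1t(\Sigma^0)^{-1}\preceq \tfrac{C_0}{t}I_K$ in the positive semi-definite order. Since $M\mapsto [M^{-1}]_{11}$ is antitone in this order (if $M_1\succeq M_2\succ 0$ then $M_1^{-1}\preceq M_2^{-1}$, hence $[M_1^{-1}]_{11}\le [M_2^{-1}]_{11}$), the quantity of interest is trapped between $[(C'\Lambda C + \tfrac{C_0}{t}I_K)^{-1}]_{11}$ and $[(C'\Lambda C + \tfrac{c}{t}I_K)^{-1}]_{11}$. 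Both bounds are of the form $[(C'\Lambda C + \epsilon I_K)^{-1}]_{11}$ (with $\epsilon = C_0/t$ and $\epsilon=c/t$) and hence both converge to $[(C'\Lambda C)^{-1}]_{11}$ by the continuity already noted after~(\ref{eq:matrixInverse}); the squeeze then delivers the limit, read in the extended reals when $e_1$ fails to lie in the range of $C'\Lambda C$.

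The one genuine obstacle is exactly this singular case, together with the need for the limit to be independent of the path along which the perturbation vanishes; the monotonicity-plus-squeeze argument is what resolves it cleanly, reducing everything to the single-path continuity already in hand. An alternative that sidesteps that lemma entirely is the variational identity $[M^{-1}]_{11} = \sup_{v}\big(2\langle e_1,v\rangle - \langle v, Mv\rangle\big)$, valid for positive semi-definite $M$ under the same conventions: adding a vanishing positive term to $M$ raises the quadratic penalty by a vanishing amount, so the suprema converge monotonically to the supremum for $C'\Lambda C$. Given the machinery already developed in the excerpt, I expect the squeeze to be the most economical route.
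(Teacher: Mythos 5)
Your proof follows essentially the same route as the paper's: both start from the closed form in Lemma \ref{lemmVar}, factor out $t$ to obtain $t\cdot V(\lambda t) = \bigl[\bigl(\tfrac{1}{t}(\Sigma^0)^{-1} + C'\Lambda C\bigr)^{-1}\bigr]_{11}$, and pass to the limit. The paper disposes of the limit in one line by invoking ``continuity of $[X^{-1}]_{11}$ in $X$,'' whereas you correctly observe that the continuity established after (\ref{eq:matrixInverse}) covers only the $\epsilon I_K$ path and supply the antitonicity-plus-squeeze argument for the perturbation $\tfrac{1}{t}(\Sigma^0)^{-1}$ --- a worthwhile refinement, since $[X^{-1}]_{11}$ is only lower semicontinuous on the positive semi-definite cone and your squeeze is what actually justifies the paper's one-line claim.
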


\begin{proof}
Recall that $V(q_1, \dots, q_N) = \left[ ((\Sigma^0)^{-1} + C'QC)^{-1} \right]_{11}$ with $Q = \diag(q_1, \dots, q_N)$. Thus
\[
t \cdot V(\lambda_1 t, \dots, \lambda_N t) = \left[ \left(\frac{1}{t}(\Sigma^0)^{-1} + C' \Lambda C \right)^{-1} \right]_{11}.
\] 
Hence the lemma follows from the continuity of $[X^{-1}]_{11}$ in the matrix $X$.
\end{proof}

We note that $C'\Lambda C$ is the Fisher Information Matrix when signals are observed according to frequencies $\lambda$. Thus the above lemma can also be seen as an application of the Bayesian Central Limit Theorem.

\subsubsection{Reduction to the Study of $V^*$}
The development of the function $V^*$ is useful for the following reason:

\begin{lemma}\label{lemmReductionToAsympVar}
Suppose $\tilde{\lambda}$ uniquely minimizes $V^*(\cdot)$ subject to $\lambda \in \Delta^{N-1}$ (the $(N-1)$-dimensional simplex). Then, the $t$-optimal divisions satisfy $n_i(t) \sim \tilde{\lambda}_i \cdot t$ for all $i$. 
\end{lemma}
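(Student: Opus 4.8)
The plan is to reduce the asymptotic claim $n_i(t)\sim\tilde\lambda_i t$ to the convergence of frequency vectors $n(t)/t \to \tilde\lambda$ inside the compact simplex $\Delta^{N-1}$; since the simplex is compact, it suffices to show that every subsequential limit of $n(t)/t$ equals $\tilde\lambda$. The engine throughout is the identity behind Lemma \ref{lemmAsympVar}, namely $t\cdot V(q) = \big[\big(\tfrac1t(\Sigma^0)^{-1} + C'\tfrac{Q}{t}C\big)^{-1}\big]_{11}$, which says that $t\cdot V$ evaluated at a count vector whose frequencies approach $\lambda$ is controlled by $[(C'\Lambda C)^{-1}]_{11} = V^*(\lambda)$. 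I would first pin down the optimal value $\lim_t t\cdot V(n(t)) = m^* := V^*(\tilde\lambda) = \min_{\Delta^{N-1}} V^*$ by a two-sided bound, and then let the hypothesized uniqueness of the minimizer do the rest. Here $m^* < \infty$ because $[N]$ is spanning, so $\tilde\lambda$ is itself a spanning frequency and $e_1 \in R := \mathrm{span}\{c_i : i \in \supp(\tilde\lambda)\}$.

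For the upper bound, I would exhibit a feasible integer vector $\hat q(t)$, supported on $\mathcal S^* := \supp(\tilde\lambda)$, with $\hat q_i(t) = \tilde\lambda_i t + O(1)$ and $\sum_i \hat q_i(t) = t$ (round $\tilde\lambda_i t$ within $\mathcal S^*$ and absorb the bounded deficit into coordinates of $\mathcal S^*$). Optimality of $n(t)$ gives $t\,V(n(t)) \le t\,V(\hat q(t))$, and I would show $t\,V(\hat q(t)) \to m^*$. Writing $X_t = \tfrac1t(\Sigma^0)^{-1} + C'\tfrac{\hat Q(t)}{t}C$ and decomposing $\mathbb R^K = R \oplus R^\perp$ with $e_1 \in R$, the point of restricting the support is that $C'\tfrac{\hat Q(t)}{t}C$ maps into $R$ and converges to $C'\tilde\Lambda C$ (which is positive definite on $R$), while the $R^\perp$-block and the $R$--$R^\perp$ coupling of $X_t$ arise solely from the $O(1/t)$ matrix $\tfrac1t(\Sigma^0)^{-1}$, whose $R^\perp$-block is $\succeq (c/t)I$. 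A Schur-complement computation then makes the cross-term correction $O(1/t)$, so the $R$-block of $X_t^{-1}$ tends to $\big(C'\tilde\Lambda C|_R\big)^{-1}$ and $[X_t^{-1}]_{11} \to m^*$; hence $\limsup_t t\,V(n(t)) \le m^*$.

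For the matching bound I would take any subsequence along which $n(t_k)/t_k \to \lambda'$, so that $X_{t_k} \to C'\Lambda' C$ (with $\Lambda' = \diag(\lambda')$). Here I would invoke the variational identity $[X^{-1}]_{11} = \sup_v \big(2\langle e_1, v\rangle - v'Xv\big)$, consistent with the extended definition (\ref{eq:matrixInverse}) (the supremum is $+\infty$ exactly when $e_1 \notin \mathrm{range}(X)$). As a supremum of affine functions of the matrix, $X \mapsto [X^{-1}]_{11}$ is lower semicontinuous, giving $V^*(\lambda') = [(C'\Lambda'C)^{-1}]_{11} \le \liminf_k [X_{t_k}^{-1}]_{11} = \liminf_k t_k V(n(t_k)) \le m^*$. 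Since $m^* = \min_{\Delta^{N-1}} V^*$, this forces $V^*(\lambda') = m^*$, and uniqueness of the minimizer yields $\lambda' = \tilde\lambda$. As every subsequential limit is $\tilde\lambda$, compactness gives $n(t)/t \to \tilde\lambda$; for $\tilde\lambda_i > 0$ this is exactly $n_i(t) \sim \tilde\lambda_i t$, and for $\tilde\lambda_i = 0$ it reads $n_i(t)/t \to 0$.

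The main obstacle is that $V^*$, equivalently $\lambda \mapsto [(C'\Lambda C)^{-1}]_{11}$, is in general only lower semicontinuous and not continuous precisely at frequency vectors for which $C'\Lambda C$ is singular --- and this is the relevant regime, since $\tilde\lambda = \lambda^*$ is supported on a minimal spanning set that may involve fewer than $K$ signals. Continuity can genuinely fail along carelessly chosen approximating sequences (the null directions get perturbed at the same $1/t$ order as the coupling). The two devices that sidestep this are (i) restricting the upper-bound construction to $\supp(\tilde\lambda)$, so the singular directions see only the sign-definite perturbation $\tfrac1t(\Sigma^0)^{-1}$ and the Schur correction vanishes, and (ii) using lower semicontinuity --- the inequality that always holds --- for the subsequence argument. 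I expect the Schur-complement limit in (i) to be the most delicate, though still routine, computation.
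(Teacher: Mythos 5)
Your proof is correct, and it shares the paper's overall architecture (pass to subsequential limits of $n(t)/t$ in the compact simplex, show any limit point attains $\min V^*$, invoke uniqueness), but the two halves are executed with different technical devices. For the lower bound, the paper avoids your variational formula $[X^{-1}]_{11}=\sup_v\left(2\langle e_1,v\rangle - v'Xv\right)$ and the attendant lower semicontinuity: it instead inflates the limit frequency coordinatewise to $\hat\lambda=\lambda'+\epsilon\mathbf{1}$, uses monotonicity of $V$ to get $t_mV(n(t_m))\ge t_mV(\hat\lambda t_m)\to V^*(\hat\lambda)$, and sends $\epsilon\to 0$; this is the same l.s.c.\ fact in disguise, but derived from the order structure of $V$ rather than from a sup-of-affine representation. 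For the upper bound, the paper simply compares against the (real-valued) division $\tilde\lambda t_m$ and cites $t\,V(\tilde\lambda t)\to V^*(\tilde\lambda)$ from its Lemma \ref{lemmAsympVar}, glossing over both the integer-rounding issue and the fact that $C'\tilde\Lambda C$ may be singular; your explicit rounding within $\supp(\tilde\lambda)$ and the Schur-complement computation on $R\oplus R^\perp$ fill in exactly the details the paper leaves implicit there. Your version is somewhat more self-contained and makes the discontinuity of $V^*$ at the simplex boundary (the real danger point) explicit; the paper's is shorter because it leans on monotonicity and on Lemma \ref{lemmAsympVar}. Both arguments are sound.
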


\begin{proof}
Fix any increasing sequence of times $t_1, t_2, \dots$. It suffices to show that whenever the limit $\lambda_i := \lim_{m \rightarrow \infty} \frac{n_i(t_m)}{t_m}$ exists for each $i$, this limit $\lambda$ must be $\tilde{\lambda}$. Suppose not, then by assumption $V^*(\lambda) > V^*(\tilde{\lambda})$. For $\epsilon > 0$, define another vector $\hat{\lambda} \in \mathbb{R}_{+}^{N}$ with $\hat{\lambda}_i = \lambda_i + \epsilon, \forall i$. By the continuity of $V^*$, it holds that $V^*(\hat{\lambda}) > V^*(\tilde{\lambda})$ for sufficiently small $\epsilon$. 

Since $\lambda_i = \lim_{m \rightarrow \infty} \frac{n_i(t_m)}{t_m}$, there exists $M$ sufficiently large such that $n_i(t_m) \leq (\lambda_i + \epsilon) \cdot t_m$ for each $i$ and $m \geq M$. Hence, for $m \geq M$, 
\[
t_m \cdot V(n_1(t_m), \dots, n_N(t_m)) \geq t_m \cdot V(\hat{\lambda}_1 \cdot t_m, \dots, \hat{\lambda}_N \cdot t_m) \rightarrow V^*(\hat{\lambda})
\]
where the inequality uses the monotonicity of $V$. On the other hand, 
\[
t_m \cdot V(\tilde{\lambda}_1 \cdot t_m, \dots, \tilde{\lambda}_N \cdot t_m) \rightarrow V^*(\tilde{\lambda}).
\]
Comparing the above two displays, we see that for sufficiently large $m$, 
\[
V(n_1(t_m), \dots, n_K(t_m)) > V(\tilde{\lambda}_1 \cdot t_m, \dots, \tilde{\lambda}_N \cdot t_m).
\]
But this contradicts the $t$-optimality of the division $n(t_m)$, as society could do better by following frequencies $\tilde{\lambda}$. The lemma is thus proved.  
\end{proof}

\subsubsection{Crucial Lemma}
We pause to demonstrate the following technical lemma:
\begin{lemma}\label{lemmLinearCombination}
Suppose $\mathcal{S}^* = \{1, \dots, K\}$ uniquely minimizes $\phi(\cdot)$ and let $C^*$ be the $K \times K$ submatrix of $C$ corresponding to the first $K$ signals. Further suppose $\beta_j^{\mathcal{S}^*} = [(C^*)^{-1}]_{1j}$ is positive for $1 \leq j \leq K$. Then for any signal $i > K$, if we write $c_i = \sum_{j = 1}^{K} \alpha_j \cdot c_j$ (which is a unique representation), then $\lvert \sum_{j = 1}^{K} \alpha_j \rvert < 1$. 
\end{lemma}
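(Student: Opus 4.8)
The plan is to leverage the linear-programming characterization recorded in the footnote to Theorem \ref{thm:totalOpt}: among all $\beta \in \mathbb{R}^N$ satisfying $\sum_{\ell=1}^N \beta_\ell c_\ell = e_1$, the objective $\sum_{\ell=1}^N |\beta_\ell|$ is minimized with minimum value $\phi(\mathcal{S}^*) = \sum_{j=1}^K \beta_j^{\mathcal{S}^*}$, and under Unique Minimizer this minimum is attained \emph{uniquely} at the vector supported on $\mathcal{S}^*$ with entries $\beta_j^{\mathcal{S}^*}$. Fixing $i > K$ and the unique representation $c_i = \sum_{j=1}^K \alpha_j c_j$, I will build a one-parameter family of competing feasible vectors and deduce the bound on $\sum_j \alpha_j$ from the fact that none of them can match or beat the optimum.

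Concretely, for $\tau \in \mathbb{R}$ define $\beta(\tau) \in \mathbb{R}^N$ by $\beta_j(\tau) = \beta_j^{\mathcal{S}^*} - \tau\alpha_j$ for $1 \le j \le K$, by $\beta_i(\tau) = \tau$, and by $\beta_\ell(\tau) = 0$ for the remaining indices. Substituting $\sum_{j} \alpha_j c_j = c_i$ gives $\sum_\ell \beta_\ell(\tau) c_\ell = e_1 - \tau c_i + \tau c_i = e_1$, so each $\beta(\tau)$ is feasible; moreover $\beta(\tau) \neq \beta^{\mathcal{S}^*}$ whenever $\tau \neq 0$, since the $i$-th coordinate of $\beta(\tau)$ is $\tau$ while $i \notin \mathcal{S}^*$. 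Because every $\beta_j^{\mathcal{S}^*} > 0$ by hypothesis, for all sufficiently small $|\tau|$ each $\beta_j^{\mathcal{S}^*} - \tau\alpha_j$ remains positive and the objective linearizes to $\sum_\ell |\beta_\ell(\tau)| = \phi(\mathcal{S}^*) - \tau\sum_{j=1}^K\alpha_j + |\tau|$.

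Reading off the two one-sided limits then finishes the argument. For small $\tau > 0$ the objective equals $\phi(\mathcal{S}^*) + \tau\left(1 - \sum_{j}\alpha_j\right)$; if $\sum_{j}\alpha_j \geq 1$ this is $\le \phi(\mathcal{S}^*)$ at a feasible point distinct from the unique optimum, a contradiction, so $\sum_j \alpha_j < 1$. Symmetrically, for small $\tau < 0$ the objective equals $\phi(\mathcal{S}^*) - \tau\left(1 + \sum_j\alpha_j\right)$; if $\sum_j\alpha_j \le -1$ this is again $\le \phi(\mathcal{S}^*)$ at a distinct feasible point, so $\sum_j\alpha_j > -1$. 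Combining yields $\left|\sum_{j=1}^K \alpha_j\right| < 1$, as claimed.

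The delicate point, and the only place real care is needed, is securing the \emph{strict} inequality rather than $\left|\sum_j\alpha_j\right| \le 1$. The weak inequality uses only optimality of $\mathcal{S}^*$, through the cases $\sum_j\alpha_j > 1$ and $\sum_j\alpha_j < -1$ where $\beta(\tau)$ \emph{strictly} beats the optimum. Strictness in the boundary cases $\sum_j\alpha_j = \pm 1$ is exactly where I must invoke \emph{uniqueness} of the minimizer: there $\beta(\tau)$ ties the optimal value $\phi(\mathcal{S}^*)$ yet differs from $\beta^{\mathcal{S}^*}$, which the footnote forbids. (In generic environments, where $(\mathcal{S}^*\setminus\{j_0\})\cup\{i\}$ is linearly independent, an equivalent route is to push $|\tau|$ to the first value at which some coordinate $\beta_{j_0}(\tau)$ vanishes; the resulting vector is supported on that $K$-element minimal spanning set with $\phi$-value $\phi(\mathcal{S}^*)$, contradicting Assumption \ref{assumptionUniqueMin}.) I note finally that the hypothesis $\beta_j^{\mathcal{S}^*} > 0$ is precisely what lets me discard the absolute values on the first $K$ coordinates for small perturbations; without it the clean linearization of the objective, and hence the sign bookkeeping above, would break down.
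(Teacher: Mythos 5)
Your argument is clean modulo one serious issue: the entire weight of the proof rests on the footnote to Theorem \ref{thm:totalOpt}, which asserts (without proof) both that the minimum of $\sum_\ell |\beta_\ell|$ over representations $e_1 = \sum_\ell \beta_\ell c_\ell$ equals $\phi(\mathcal{S}^*)$ and that the minimizer is unique. That footnote is offered as an \emph{interpretation of the conclusion} of Theorem \ref{thm:totalOpt}, and Lemma \ref{lemmLinearCombination} is a key ingredient in the paper's proof of that theorem; so invoking it here is circular within the paper's architecture. The content you are outsourcing is not free: the nontrivial direction is that any feasible $\beta$ with $\sum_\ell |\beta_\ell| \le \phi(\mathcal{S}^*)$ yields a \emph{minimal spanning set} attaining that value (one must pass from an arbitrary support of $K+1$ linearly dependent vectors to a basic solution supported on linearly independent columns), and uniqueness of the LP minimizer then requires combining this extreme-point structure with Assumption \ref{assumptionUniqueMin}. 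Your perturbed vector $\beta(\tau)$ is supported on $\{1,\dots,K,i\}$, which is linearly dependent, so by itself it contradicts neither the minimality nor the uniqueness of $\mathcal{S}^*$ among \emph{minimal spanning sets}\textemdash which is all that Assumption \ref{assumptionUniqueMin} actually gives you. Without an independent proof of the footnote's claim, the main route does not close.

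The good news is that your parenthetical remark is the correct repair, and in fact it is essentially the paper's own proof. Assume $\sum_j \alpha_j \ge 1$ and push $\tau>0$ up to the first value $\tau^* = \min_{j:\alpha_j>0} \beta_j/\alpha_j$ at which some coordinate $\beta_{j_0}(\tau)$ vanishes (some $\alpha_j$ must be positive, so $\tau^*$ is well defined and finite). By your linearization, the resulting representation of $e_1$ has non-negative coefficients summing to $\phi(\mathcal{S}^*) - \tau^*\bigl(\sum_j \alpha_j - 1\bigr) \le \phi(\mathcal{S}^*)$, and its support is contained in the $K$-element set $(\mathcal{S}^*\setminus\{j_0\})\cup\{i\}$. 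Contrary to your hedge, no genericity is needed here: since $\alpha_{j_0}>0$, the vector $c_{j_0}$ lies in the span of $\{c_j\}_{j\ne j_0}\cup\{c_i\}$, so that set spans $\mathbb{R}^K$ and, being $K$ vectors, is linearly independent; hence its spanning subset supporting the new representation is a minimal spanning set distinct from $\mathcal{S}^*$ with $\phi$-value at most $\phi(\mathcal{S}^*)$, contradicting Assumption \ref{assumptionUniqueMin} directly (the boundary case $\sum_j\alpha_j=1$ is covered because the contradiction is with \emph{uniqueness}, not with minimality). The paper does exactly this, phrased as an exchange argument with the extremal ratio $\alpha_1/\beta_1$ playing the role of $1/\tau^*$. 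If you promote the parenthetical to the main argument and drop the appeal to the footnote, your proof is correct and coincides with the paper's.
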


\begin{proof}
By assumption, we have the vector identity
\[
e_1 = \sum_{j = 1}^{K} \beta_j \cdot c_j \quad \text{with} ~ \beta_j = [(C^*)^{-1}]_{1j} > 0.
\]
Suppose for contradiction that $\sum_{j = 1}^{K} \alpha_j \geq 1$ (the opposite case where the sum is $\leq -1$ can be similarly treated). Then some $\alpha_j$ must be positive. Without loss of generality, we assume $\frac{\alpha_1}{\beta_1}$ is the largest among such ratios. Then $\alpha_1 > 0$ and 
\[
e_1 = \sum_{j = 1}^{K} \beta_j \cdot c_j = \left(\sum_{j = 2}^{K} \left(\beta_j - \frac{\beta_1}{\alpha_1} \cdot \alpha_j\right) \cdot c_j \right) + \frac{\beta_1}{\alpha_1} \cdot \left(\sum_{j = 1}^{K} \alpha_j \cdot c_j \right)
\]
This represents $e_1$ as a linear combination of the vectors $c_2, \dots, c_K$ and $c_i$, with coefficients $\beta_2 - \frac{\beta_1}{\alpha_1} \cdot \alpha_2, \dots, \beta_K - \frac{\beta_1}{\alpha_1} \cdot \alpha_K$ and $\frac{\beta_1}{\alpha_1}$. Observe that these coefficients are non-negative: For each $2 \leq j \leq K$, $\beta_j - \frac{\beta_1}{\alpha_1} \cdot \alpha_j$ is clearly positive if $\alpha_j \leq 0$ (since $\beta_j > 0$). And if $\alpha_j > 0$, then by assumption $\frac{\alpha_j}{\beta_j} \leq \frac{\alpha_1}{\beta_1}$ and $\beta_j - \frac{\beta_1}{\alpha_1} \cdot \alpha_j$ is again non-negative. 

By definition, $\phi(\{2, \dots, K, i\})$ is the sum of the absolute value of these coefficients. This sum is 
\[
\sum_{j = 2}^{K} \left(\beta_j - \frac{\beta_1}{\alpha_1} \cdot \alpha_j\right) + \frac{\beta_1}{\alpha_1} = \sum_{j = 1}^{K} \beta_j + \frac{\beta_1}{\alpha_1} \cdot \left(1 - \sum_{j = 1}^{K} \alpha_j\right) \leq \sum_{j = 1}^{K} \beta_j.
\]
But then $\phi(\{2, \dots, K, i\}) \leq \phi(\{1, 2, \dots, K\})$, leading to a contradiction. Hence the lemma must be true.
\end{proof}

\subsubsection{Proof of Theorem \ref{thm:totalOpt} when $\lvert \mathcal{S}^* \rvert = K$}
Given Lemma \ref{lemmReductionToAsympVar}, Theorem \ref{thm:totalOpt} will follow once we show that $\lambda^*$ uniquely minimizes $V^*(\cdot)$ over the simplex; recall that $\lambda^*$ denotes the optimal frequencies for the minimal spanning set $\mathcal{S}^*$ that minimizes $\phi$. In this section, we prove that $\lambda^*$ is indeed the unique minimizer whenever this best subset $\mathcal{S}^*$ contains exactly $K$ signals. Later on we will prove the same result even when $\lvert \mathcal{S}^* \rvert < K$, but that proof will require additional techniques. 

\begin{lemma}\label{lemmUniqueMinimizer}
For $\lambda \in \Delta^{N-1}$, the function $V^*(\lambda)$ is uniquely minimized at $\lambda = \lambda^*$. 
\end{lemma}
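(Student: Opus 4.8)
The plan is to show directly that $\lambda^*$ is the unique minimizer of $V^*(\lambda)=[(C'\Lambda C)^{-1}]_{11}$ over $\Delta^{N-1}$; by Lemma \ref{lemmReductionToAsympVar} this is exactly what Theorem \ref{thm:totalOpt} requires. Since flipping the sign of any $c_i$ leaves $C'\Lambda C=\sum_i \lambda_i c_i c_i'$ unchanged, I may assume as in Lemma \ref{lemmLinearCombination} that $\mathcal{S}^*=\{1,\dots,K\}$ and that $\beta_j:=[(C^*)^{-1}]_{1j}>0$ for $j\le K$; write $B:=\sum_{j=1}^K \beta_j=\phi(\mathcal{S}^*)$, so that $\lambda^*_j=\beta_j/B$ for $j\le K$ and $\lambda^*_i=0$ otherwise. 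The engine of the argument is the variational identity
\[
[M^{-1}]_{11}=\max_{x\in\mathbb{R}^K}\frac{\langle e_1,x\rangle^2}{x'Mx},
\]
a standard fact valid for positive (semi-)definite $M$ with $e_1$ in its range (and $[M^{-1}]_{11}=\infty$ otherwise), which I apply to the Fisher information matrix $M=C'\Lambda C$, for which $x'Mx=\sum_i \lambda_i\langle c_i,x\rangle^2$.

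The crux is to exhibit a single test vector that certifies optimality for every $\lambda$ simultaneously. Because $\{c_1,\dots,c_K\}$ is a basis of $\mathbb{R}^K$, there is a unique $x^*$ with $\langle c_j,x^*\rangle=1/B$ for all $j\le K$; then $\langle e_1,x^*\rangle=\langle\sum_j \beta_j c_j,x^*\rangle=\sum_j \beta_j/B=1$. Evaluating the variational formula at $x^*$ gives, for every $\lambda\in\Delta^{N-1}$,
\[
V^*(\lambda)\ \ge\ \frac{\langle e_1,x^*\rangle^2}{\sum_i \lambda_i\langle c_i,x^*\rangle^2}=\frac{1}{\sum_i \lambda_i\langle c_i,x^*\rangle^2}.
\]
It remains to bound the denominator. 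For $j\le K$ we have $\langle c_j,x^*\rangle^2=1/B^2$ exactly, while for $i>K$, writing $c_i=\sum_{j\le K}\alpha_j c_j$, we get $\langle c_i,x^*\rangle=(1/B)\sum_j \alpha_j$, and Lemma \ref{lemmLinearCombination} yields $\lvert\sum_j\alpha_j\rvert<1$, hence $\langle c_i,x^*\rangle^2<1/B^2$. Thus $\sum_i\lambda_i\langle c_i,x^*\rangle^2\le(1/B^2)\sum_i\lambda_i=1/B^2$, and so $V^*(\lambda)\ge B^2=V^*(\lambda^*)$. I expect the design of $x^*$ --- essentially the direction of the best linear estimator of $\omega$ at $\lambda^*$ --- together with the recognition that Lemma \ref{lemmLinearCombination} is precisely the \emph{strict} inequality needed off the optimal support, to be the main obstacle; everything else is bookkeeping.

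For uniqueness, suppose $V^*(\lambda)=B^2$. Then the chain above must hold with equality throughout; in particular $\sum_i\lambda_i\langle c_i,x^*\rangle^2=1/B^2$, and since every off-support coordinate satisfies the strict inequality $\langle c_i,x^*\rangle^2<1/B^2$, this forces $\lambda_i=0$ for all $i>K$. Hence any minimizer is supported on $\mathcal{S}^*$. Restricting to that support, $C^*$ is invertible and a direct computation using (\ref{eq:f*}) gives $V^*(\lambda)=e_1'(C^*)^{-1}\Lambda^{-1}((C^*)^{-1})'e_1=\sum_{j=1}^K \beta_j^2/\lambda_j$; this is strictly convex in $(\lambda_1,\dots,\lambda_K)$ on the simplex, so its constrained minimizer is unique, and a Lagrange-multiplier computation (equivalently, the equality case of Cauchy--Schwarz) identifies it as $\lambda_j\propto\beta_j$, i.e.\ $\lambda=\lambda^*$. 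Two technical points to flag: the variational identity must be justified for singular $M$ through the extended definition (\ref{eq:matrixInverse}), with non-spanning $\lambda$ simply giving $V^*=\infty$; and this argument covers only the case $\lvert\mathcal{S}^*\rvert=K$, the case $\lvert\mathcal{S}^*\rvert<K$ being deferred as the text indicates.
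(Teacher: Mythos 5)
Your proof is correct, but it takes a genuinely different route from the paper's. The paper proves the lemma by exploiting convexity of $V^*$ and reducing to a \emph{local} argument at $\lambda^*$: it establishes differentiability of $V^*$ near $\lambda^*$, computes the partial derivatives $\partial_i V^*(\lambda^*) = -\gamma_i^2$ with $\gamma_i = \langle v, c_i\rangle$ for an explicit vector $v$, and uses Lemma \ref{lemmLinearCombination} to show $\lvert\gamma_i\rvert < \phi(\mathcal{S}^*)$ for $i > K$, so that shifting weight from any out-of-support signal onto signal $K$ weakly decreases $V^*$. You instead produce a single dual certificate: the test vector $x^*$ with $\langle c_j, x^*\rangle = 1/\phi(\mathcal{S}^*)$ on the support, fed into the variational characterization $[M^{-1}]_{11} = \max_x \langle e_1,x\rangle^2/(x'Mx)$, yields the global bound $V^*(\lambda)\geq\phi(\mathcal{S}^*)^2$ for \emph{every} $\lambda$ in one stroke, with Lemma \ref{lemmLinearCombination} again supplying exactly the strict inequality off the support. (Your $x^*$ is in fact the paper's $v$ rescaled by $\phi(\mathcal{S}^*)^{-2}$, so the two arguments are secretly using the same object.) What your approach buys: no need to establish differentiability of $V^*$, no chain of intermediate inequalities as in the paper's display following (\ref{eq:partialsIneq}), and a cleaner equality analysis for uniqueness. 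What the paper's approach buys: the derivative formulas it develops are reused elsewhere (Remark \ref{remarkDifferentiability} and the refined estimates in the appendix on $m(t)$), so the extra work is amortized. Your handling of the edge cases — singular $C'\Lambda C$ via the extended definition (\ref{eq:matrixInverse}), and the restriction to $\lvert\mathcal{S}^*\rvert = K$ with the smaller case deferred to the perturbation argument — matches the paper's own scoping of this lemma.
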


\begin{proof}
First, we assume that $[(C^*)^{-1}]_{1i}$ is positive for $1 \leq i \leq K$. This is without loss because we can always work with the ``negative" of any signal (replace $c_i$ with $-c_i$), which does not affect agents' behavior. 

Since $V(q_1, \dots, q_N)$ is convex in its arguments, $V^*(\lambda) = \lim_{t \rightarrow \infty} t \cdot V(\lambda_1t, \dots, \lambda_Nt)$ is also convex in $\lambda$. To show $\lambda^*$ uniquely minimizes $V^*$, we only need to show $\lambda^*$ is a \emph{local minimum}. In other words, it suffices to show $V^*(\lambda^*) < V^*(\lambda)$ for any $\lambda$ that belongs to an $\epsilon$-neighborhood of $\lambda^*$. By assumption, $\mathcal{S}^*$ is minimally-spanning and so its signals are linearly independent. Thus its signals must span all of the $K$ states. From this it follows that the $K \times K$ matrix $C' \Lambda^* C$ is positive definite, and by (\ref{eq:f*}) the function $V^*$ is differentiable near $\lambda^*$ (see Remark \ref{remarkDifferentiability} below).

We claim that the partial derivatives of $V^*$ satisfy the following inequality: 
\begin{equation}\label{eq:partialsIneq}\tag{*}
\partial_K V^*(\lambda^*) < \partial_i V^*(\lambda^*) \leq 0, \forall i > K.
\end{equation}
Once this is proved, we will have, for $\lambda$ close to $\lambda^*$, 
\begin{equation}\label{eq:f*Ineq}
V^*(\lambda_1, \dots, \lambda_K, \lambda_{K+1}, \dots, \lambda_N) \geq V^*\left(\lambda_1, \dots, \lambda_{K-1}, \sum_{k=K}^N \lambda_k, 0, \dots, 0\right) \geq V^*(\lambda^*). 
\end{equation}
The first inequality is based on (\ref{eq:partialsIneq}) and differentiability of $V^*$, while the second inequality is because $\lambda^*$ uniquely minimizes $V^*$ when only the first $K$ signals are observed. Moreover, when $\lambda \neq \lambda^*$, one of these inequalities is strict so that $V^*(\lambda) > V^*(\lambda^*)$ holds strictly.

\bigskip

To prove (\ref{eq:partialsIneq}), we recall that
\[
V^*(\lambda) = e_1' (C' \Lambda C)^{-1}e_1. 
\]
Since $\Lambda = \diag(\lambda_1, \dots, \lambda_N)$, its derivative is $\partial_i \Lambda = \Delta_{ii}$, which is an $N \times N$ matrix whose $(i,i)$-th entry is $1$ with all other entries equal to zero. Using properties of matrix derivatives, we obtain
\[
\partial_i V^*(\lambda) = - e_1' (C' \Lambda C)^{-1} C' \Delta_{ii} C (C' \Lambda C)^{-1} e_1. 
\]
As the $i$-th row vector of $C$ is $c_i'$, $C' \Delta_{ii} C$ is the $K \times K$ matrix $c_i c_i'$. The above simplifies to 
\[
\partial_i V^*(\lambda) = - [e_1' (C' \Lambda C)^{-1} c_i]^2.
\]
At $\lambda = \lambda^*$, the matrix $C' \Lambda C$ further simplifies to $(C^*)' \cdot \diag(\lambda^*_1, \dots, \lambda^*_K) \cdot (C^*)$, which is a product of $K \times K$ invertible matrices. We thus deduce that 
\[
\partial_i V^*(\lambda^*) = - \left[e_1' \cdot (C^*)^{-1} \cdot \diag\left(\frac{1}{\lambda^*_1}, \dots, \frac{1}{\lambda^*_K}\right) \cdot ((C^*)')^{-1} \cdot c_i \right]^2.
\]
Crucially, note that the term in the brackets is a linear function of $c_i$. To ease notation, we write $v' = e_1' \cdot (C^*)^{-1} \cdot \diag\left(\frac{1}{\lambda^*_1}, \dots, \frac{1}{\lambda^*_K}\right) \cdot ((C^*)')^{-1}$ and $\gamma_i = \langle v, c_i \rangle$. Then 
\begin{equation}\label{eq1}
\partial_i V^*(\lambda^*) = - \gamma_i^2, ~1 \leq i \leq N.
\end{equation}

For $1 \leq i \leq K$, $((C^*)')^{-1} \cdot c_i$ is just $e_i$. Thus, using the assumption $[(C^*)^{-1}]_{1j} > 0, \forall j$, we have 
\begin{equation}\label{eq2}
\gamma_i = e_1' \cdot (C^*)^{-1} \, \cdot \, \diag\left(\frac{1}{\lambda^*_1}, \dots, \frac{1}{\lambda^*_K}\right) \cdot e_i = \frac{[(C^*)^{-1}]_{1i}}{\lambda^*_i} = \sum_{j = 1}^{K} [(C^*)^{-1}]_{1j} = \phi(\mathcal{S}^*)
\end{equation}
On the other hand, choosing any $i > K$, we can uniquely write the vector $c_i$ as a linear combination of $c_1, \dots, c_K$. By Lemma \ref{lemmLinearCombination}, for any $i > K$ we have 
\begin{equation}\label{eq3}
\gamma_i = \langle v, c_i \rangle = \sum_{j = 1}^{K} \alpha_j \cdot \langle v, c_j \rangle = \sum_{j=1}^{K} \alpha_j \cdot \gamma_j = \phi(\mathcal{S}^*) \cdot \sum_{j = 1}^{K} \alpha_j,
\end{equation}
where the last equality uses (\ref{eq2}). Since $\lvert \sum_{j = 1}^{K} \alpha_j \rvert < 1$, the absolute value of $\gamma_i$ for any $i > K$ is strictly smaller than the absolute value of $\gamma_K$. This together with (\ref{eq1}) proves the desired inequality (\ref{eq:partialsIneq}), and Lemma \ref{lemmUniqueMinimizer} follows. 
\end{proof}

\begin{remark}\label{remarkDifferentiability}
The essence of this proof is the following nontrivial property: The subset $\{1, \dots, K\}$ uniquely minimizes $\phi$ among all subsets of size $K$ if and only if
\[
\phi(\{1, \dots, K\}) < \phi(\{1, \dots, K\} \cup \{i\} \backslash \{j\}), ~ \forall 1 \leq j \leq K < i \leq N.
\]
That is, if a set of $K$ signals does not minimize $\phi$, then we can improve the speed of learning by adding \emph{one} signal to replace \emph{one} existing signal. This property enables us to reduce the general problem with $N$ signals to a much simpler problem with $K+1$ signals. We are then able to use calculus to resolve the latter problem, see (\ref{eq:partialsIneq}). 

This argument breaks down if we start with a set of less than $K$ signals; see Section \ref{sec:intuition} of the main text for an example in which the reduction above is not possible. In that example, even though the partial derivatives satisfy (\ref{eq:partialsIneq}), it does not hold that every \emph{directional} derivative similarly satisfies (\ref{eq:partialsIneq}). Thus $V^*$ can fail to be differentiable at the frequency vector of interest. It is for this reason that we need a different proof of Lemma \ref{lemmUniqueMinimizer} when $\lvert \mathcal{S}^* \rvert < K$, which we present next.
\end{remark}

\subsubsection{A Perturbation Argument}
To handle the case of $\lvert \mathcal{S}^* \rvert < K$, we first extend the definition of $\phi(\cdot)$ to arbitrary sets of siganls (not necessarily minimally-spanning) as follows. For any set $\mathcal{A}$ that contains a minimal spanning set, define $\phi(\mathcal{A}) = \min_{\mathcal{S} \subset \mathcal{A}} \phi(\mathcal{S})$, where the minimum is taken over all minimal spanning sets $\mathcal{S}$ contained in $\mathcal{A}$. If such $\mathcal{S}$ does not exist (i.e., $\mathcal{A}$ is not itself spanning), we let $\phi(\mathcal{A}) = \infty$. In particular, 
\[
\phi([N]) = \min_{\mathcal{S} \subset [N]} \phi(\mathcal{S})
\]
represents the minimum asymptotic standard deviation achievable by only observing the signals in \emph{some} minimal spanning set.

Our previous analysis shows that whenever $\phi(\mathcal{S})$ is uniquely minimized by a set $\mathcal{S}$ containing exactly $K$ signals, 
\[
\min_{\lambda \in \Delta^{N-1}} V^*(\lambda) = V^*(\lambda^*) = \min_{\mathcal{S} \subset [N]} \phi(\mathcal{S})^2 = \phi([N])^2
\]
We now show this equality holds more generally.

\begin{lemma}\label{lemmAsd} 
For any coefficient matrix $C$,
\begin{equation}\label{eq:f*=Asd}
\min_{\lambda \in \Delta^{N-1}} V^*(\lambda) = \phi([N])^2.
\end{equation}
\end{lemma}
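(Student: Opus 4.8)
The plan is to prove (\ref{eq:f*=Asd}) by establishing the two inequalities $\min_{\lambda} V^*(\lambda) \le \phi([N])^2$ and $\min_{\lambda} V^*(\lambda) \ge \phi([N])^2$ separately, for an arbitrary coefficient matrix $C$ and without invoking any uniqueness hypothesis. The upper bound is the easy direction. Pick a minimal spanning set $\mathcal{S}_0$ attaining $\phi(\mathcal{S}_0) = \phi([N])$, which exists because there are finitely many minimal spanning sets, and let $\lambda^{\mathcal{S}_0} \in \Delta^{N-1}$ be supported on $\mathcal{S}_0$ with weights proportional to $\lvert \beta_i^{\mathcal{S}_0} \rvert$ as in Proposition \ref{prop:N=K}. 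Specializing (\ref{eq:f*}) to the signals of $\mathcal{S}_0$ and using $e_1 = \sum_{i \in \mathcal{S}_0} \beta_i^{\mathcal{S}_0} c_i$, a direct computation (equivalently, the Corollary following Proposition \ref{prop:N=K}) gives $V^*(\lambda^{\mathcal{S}_0}) = \phi(\mathcal{S}_0)^2 = \phi([N])^2$, so $\min_{\lambda} V^*(\lambda) \le \phi([N])^2$.

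The substance of the lemma is the matching lower bound, and here I would first record a dual description of $\phi([N])$. The optimum of the program $\min \{ \sum_i \lvert \beta_i \rvert : \sum_i \beta_i c_i = e_1 \}$ is attained at a basic feasible solution, whose support is a linearly independent and hence minimally spanning set; its value therefore equals $\min_{\mathcal{S}} \phi(\mathcal{S}) = \phi([N])$. Linear programming duality for this $\ell_1$ problem then gives
\[
\phi([N]) = \max \left\{ \langle e_1, y \rangle : \lvert \langle c_i, y \rangle \rvert \le 1 \text{ for all } i \right\}.
\]
Fix a maximizer $y^*$, so that $\langle e_1, y^* \rangle = \phi([N])$ and $\langle c_i, y^* \rangle^2 \le 1$ for every $i$. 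Now take any $\lambda \in \Delta^{N-1}$, set $M := C'\Lambda C \succeq 0$, and for $\epsilon > 0$ write $M_\epsilon := M + \epsilon I_K \succ 0$. Applying Cauchy--Schwarz in the inner product induced by $M_\epsilon$,
\[
\phi([N]) = \langle e_1, y^* \rangle = \langle M_\epsilon^{-1/2} e_1, \, M_\epsilon^{1/2} y^* \rangle \le \sqrt{e_1' M_\epsilon^{-1} e_1} \cdot \sqrt{(y^*)' M_\epsilon y^*}.
\]
The right-hand factor is controlled since $(y^*)' M_\epsilon y^* = \sum_i \lambda_i \langle c_i, y^* \rangle^2 + \epsilon \lVert y^* \rVert^2 \le 1 + \epsilon \lVert y^* \rVert^2$, using $\langle c_i, y^* \rangle^2 \le 1$ and $\sum_i \lambda_i = 1$. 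Rearranging yields $e_1' M_\epsilon^{-1} e_1 \ge \phi([N])^2 / (1 + \epsilon \lVert y^* \rVert^2)$.

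Finally I would pass to the limit $\epsilon \to 0$. By the continuity extension established earlier, namely $[X^{-1}]_{11} = \lim_{\epsilon \to 0_+} [(X + \epsilon I_K)^{-1}]_{11}$, we have $V^*(\lambda) = [M^{-1}]_{11} = \lim_{\epsilon \to 0_+} e_1' M_\epsilon^{-1} e_1 \ge \phi([N])^2$, so the lower bound holds for every $\lambda \in \Delta^{N-1}$, with the convention $V^*(\lambda) = \infty$ when $e_1 \notin \mathrm{range}(M)$ making the inequality trivial there. Combining with the upper bound gives (\ref{eq:f*=Asd}). The one delicate point --- and the reason a naive argument over $M^{-1}$ would stumble --- is that the information matrix $C'\Lambda C$ is singular precisely for the frequency vectors of interest in the overabundant regime (those not spanning all of $\mathbb{R}^K$); regularizing by $\epsilon I_K$ and invoking the established continuity of $[X^{-1}]_{11}$ is exactly what makes the Cauchy--Schwarz bound robust to this degeneracy, and it is the crux I would be most careful about.
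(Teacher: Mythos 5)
Your proof is correct, and the lower bound is obtained by a genuinely different route from the paper's. The paper first establishes $[(C'\Lambda C)^{-1}]_{11} \geq \phi([N])^2$ only for ``generic'' coefficient matrices\textemdash those for which $\phi$ is uniquely minimized by a set of exactly $K$ signals\textemdash via the partial-derivative computation in Lemma \ref{lemmUniqueMinimizer} (resting on Lemma \ref{lemmLinearCombination}), and then extends to arbitrary $C$ by approximating with a sequence of generic matrices $C_m$ and proving $\liminf_m \phi_m([N]) \geq \phi([N])$ from the $\ell_1$ characterization of $\phi([N])$. You instead produce a single dual certificate: strong LP duality for the $\ell_1$ program gives $y^*$ with $\langle e_1, y^* \rangle = \phi([N])$ and $\lvert \langle c_i, y^* \rangle \rvert \leq 1$, and Cauchy--Schwarz in the $M_\epsilon$-inner product combined with $\sum_i \lambda_i = 1$ yields the bound for every $\lambda$ at once, with the $\epsilon$-regularization correctly invoking the paper's continuity extension of $[X^{-1}]_{11}$ to handle singular information matrices. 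Your identification of the LP value with $\min_{\mathcal{S}} \phi(\mathcal{S})$ (basic feasible solutions have linearly independent, hence minimally spanning, support) is also sound, and is the same characterization the paper invokes. What each approach buys: yours is shorter, self-contained, requires no genericity or perturbation, and is essentially the Elfving-type duality from $c$-optimal design theory; the paper's longer detour reuses machinery (the coefficient comparison of Lemma \ref{lemmLinearCombination}, the explicit gradient formulas) that is needed anyway for the \emph{uniqueness} of the minimizer in Theorem \ref{thm:totalOpt} and for the quantitative bound (\ref{eq:V*lowerBound}) used in the $O(1)$ strengthening\textemdash neither of which your argument delivers, though neither is claimed by this lemma.
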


\begin{proof}
Because society can choose to focus on any minimal spanning set, it is clear that $\min_{\lambda} V^*(\lambda) \leq \phi([N])^2 = \min_{\mathcal{S}} (\phi(\mathcal{S}))^2$. It remains to prove $V^*(\lambda) \geq \phi([N])^2$ for any $\lambda \in \Delta^{N-1}$. By Lemma \ref{lemmAsympVar}, we need to show $[(C' \Lambda C)^{-1}]_{11} \geq \phi([N])^2$. 

This was already proved for \emph{generic} coefficient matrices $C$ (specifically, those for which $\phi(\mathcal{S})$ is minimized by a set of $K$ signals). But even if $C$ is ``non-generic", we can approximate it by a sequence of ``generic" matrices $C_m$.\footnote{First, we may add repetitive signals to ensure $N \geq K$. This does not affect the value of $\min V^*(\lambda)$ or $\phi([N])$. Whenever $N \geq K$, it is generically true that every minimal spanning set contains exactly $K$ signals. Moreover, the equality $\phi(\mathcal{S}) = \phi(\tilde{\mathcal{S}})$ for $\mathcal{S} \neq \tilde{\mathcal{S}}$ induces a non-trivial polynomial equation over the entries in $C$. This means we can always find $C^{(m)}$ close to $C$ such that for each coefficient matrix $C^{(m)}$, different subsets $\mathcal{S}$ (of size $K$) attain different values of $\phi$.} Along this sequence, we have 
\[
[(C_m' \Lambda C_m)^{-1}]_{11} \geq \phi_m([N])^2
\]
where $\phi_m$ is the speed of learning from the $N$ signals given by coefficient matrix $C_m$. As $m \rightarrow \infty$, the LHS above approaches $[(C' \Lambda C)^{-1}]_{11}$. Thus the lemma will follow once we show that $\limsup_{m \rightarrow \infty} \phi_m([N]) \geq \phi([N])$. 

For this we invoke the following characterization 
\[
\phi([N]) = \min_{\beta \in \mathbb{R}^{N}} \sum_{i = 1}^{N} \lvert \beta_i \rvert ~ \text{ s.t. } ~ e_1 = \sum_{i = 1}^{N} \beta_i \cdot c_i. 
\]
If $e_1 =  \sum_{i} \beta_i^{(m)} \cdot c_i^{(m)}$ along the convergent sequence, then $e_1 = \sum_{i} \beta_i \cdot c_i$ for any limit point $\beta$ of $\beta^{(m)}$. This enables us to conclude $\liminf_{m \rightarrow \infty} \phi_m([N]) \geq \phi([N])$, which is more than what we need. 
\end{proof}

\subsubsection{Proof of Theorem \ref{thm:totalOpt} when $\lvert \mathcal{S}^* \rvert < K$}
Let $\mathcal{S}^* = \{1, \dots, k\}$. We will now show that even if $k < K$, $\lambda^*$ is still the unique minimizer of $V^*(\cdot)$. This will imply Theorem \ref{thm:totalOpt} via Lemma \ref{lemmReductionToAsympVar}. Since $V^*(\lambda^*) = \phi(\mathcal{S}^*)^2 = \phi([N])^2$ by definition, we know from Lemma \ref{lemmAsd} that $\lambda^*$ does minimize $V^*$. It remains to show that $\lambda^*$ is the \emph{unique} minimizer. 

To do this, we will consider a perturbed informational environment in which signals $k+1, \dots, N$ are made slightly more precise. Specifically, let $\eta > 0$ be a small positive number. Consider an alternative signal coefficient matrix $\tilde{C}$ with $\tilde{c_i} = c_i$ for $i \leq k$ and $\tilde{c_i} = (1+\eta) c_i$ for $i > k$. Let $\tilde{\phi}(\mathcal{S})$ represent the speed of learning function for this alternative problem. Then, it is clear that $\tilde{\phi}(\mathcal{S}^*) = \phi(\mathcal{S}^*)$, while $\tilde{\phi}(\mathcal{S})$ is slightly smaller than $\phi(\mathcal{S})$ for $\mathcal{S} \neq \mathcal{S}^*$. Thus with sufficiently small $\eta$, the subset $\mathcal{S}^*$ remains the best set in this perturbed environment, and $\lambda^*$ remains the optimal frequency vector. 

Let $\tilde{V}^*$ be the asymptotic posterior variance function here, then our previous analysis shows that $\tilde{V}^*$ has minimum value $\phi(\mathcal{S}^*)^2$ on the simplex. Taking advantage of the connection between $V^*$ and $\tilde{V}^*$, we thus have
\begin{equation*}
\begin{split}
V^*(\lambda_1, \dots, \lambda_N) &= \tilde{V}^*\left(\lambda_1, \dots, \lambda_k, \frac{\lambda_{k+1}}{(1+\eta)^2}, \dots, \frac{\lambda_{N}}{(1+\eta)^2}\right) \\ 
&\geq \frac{\phi(\mathcal{S}^*)^2}{\sum_{i \leq k} \lambda_i + \frac{1}{(1+\eta)^2}\sum_{i > k} \lambda_i}.
\end{split}
\end{equation*}
The equality uses (\ref{eq:f*}) and $C' \Lambda C = \sum_{i} \lambda_i c_i c_i' = \sum_{i \leq k} \lambda_i c_i c_i' + \sum_{i > k} \frac{\lambda_i}{(1+\eta)^2} \tilde{c_i} \tilde{c_i}'$. The inequality follows from the homogeneity of $\tilde{V}^*$. 

The above display implies 
\begin{equation}\label{eq:V*lowerBound}
\forall \lambda \in \Delta^{N-1}, \quad V^*(\lambda) \geq \frac{\phi(\mathcal{S}^*)^2}{1 - \frac{2\eta + \eta^2}{(1+\eta)^2} \sum_{i > k} \lambda_i} \geq \frac{\phi(\mathcal{S}^*)^2}{1 - \eta \sum_{i > k} \lambda_i} \quad \text{for some} ~\eta > 0.
\end{equation}
Hence $V^*(\lambda) > \phi(\mathcal{S}^*)^2 = V^*(\lambda^*)$ whenever $\lambda$ puts positive weight outside of the best set. But we already know that $V^*(\lambda)$ is uniquely minimized at $\lambda^*$ when $\lambda$ is restricted to the best set. Hence $\lambda^*$ is the unique minimizer of $V^*$ over the whole simplex. This completes the proof of Theorem \ref{thm:totalOpt}. 

\subsubsection{Strengthening of Theorem \ref{thm:totalOpt}} \label{appx:n(t)finite}
In this section we show $n_i(t) = \lambda^*_i \cdot t + O(1), \forall i$, which improves upon the conclusion of Theorem \ref{thm:totalOpt}. Note that for any $(q_1, \dots, q_N)$ to be a $t$-optimal division, it is necessary that $tV(q_1, \dots, q_N) \leq tV(\lambda^* t)$. A straightforward refinement of Lemma \ref{lemmAsympVar} gives that whenever $V^*(\lambda)$ is finite, $t \cdot V(\lambda t)$ approaches $V^*(\lambda)$ at the rate of $\frac{1}{t}$. Thus we must have 
\[
V^*\left(\frac{q_1}{t}, \dots, \frac{q_N}{t}\right) \leq V^*(\lambda^*) + O\left(\frac{1}{t}\right). 
\]
Together with (\ref{eq:V*lowerBound}), this implies $\sum_{i > k} \frac{q_i}{t} = O(\frac{1}{t})$, so that signals outside of the best set are only observed finitely many times. Conditional on \emph{these} signal counts, Proposition \ref{prop:N=K} shows that society's optimal allocation over the first $k$ signals must satisfy $n_i(t) = \lambda^*_i\cdot t + O(1), \forall 1 \leq i \leq k$. This proves what we want.

%%%%
\subsection{Proof of Theorem \ref{thm:general} Part (a)}\label{appx:myopicExistence}
Let signals $1, \dots, k$ (with $k \leq K$) be a minimally spanning set that is optimal in its subspace. We will demonstrate an open set of prior beliefs given which \emph{all agents} observe these $k$ signals. Since these signals are minimally spanning, they must be linearly independent. Thus we can consider linearly transformed states $\tilde{\theta}_1, \dots, \tilde{\theta}_K$ such that these $k$ signals are simply $\tilde{\theta}_1, \dots, \tilde{\theta}_k$ plus standard Gaussian noise. This linear transformation is invertible, so any prior over the original states is bijectively mapped to a prior over the transformed states. Thus it is without loss to work with the transformed model and look for prior beliefs over the transformed states. 

The payoff-relevant state $\omega$ becomes a linear combination $\lambda^*_1 \tilde{\theta}_1 + \dots + \lambda^*_k \tilde{\theta}_k$ (after scaling). Since the first $k$ signals are optimal in their subspace, Lemma \ref{lemmLinearCombination} implies that any other signal that belongs to this subspace can be written as
\[
\sum_{i = 1}^{k} \alpha_i \tilde{\theta}_i \ + \ \mathcal{N}(0,1)
\]
with $\lvert \sum_{i = 1}^{k} \alpha_i \rvert < 1$. On the other hand, if a signal does not belong to this subspace, it must take the form of 
\[
\sum_{i = 1}^{K} \beta_i \tilde{\theta}_i \ + \ \mathcal{N}(0,1)
\]
with $\beta_{k+1}, \dots, \beta_{K}$ not all equal to zero. 

\bigskip

Now consider a prior belief such that $\tilde{\theta}_1, \dots, \tilde{\theta}_K$ are \emph{independent} from each other. Given prior variances $v_1, \dots, v_K$, the reduction in the variance of $\lambda^*_1 \tilde{\theta}_1 + \dots + \lambda^*_k \tilde{\theta}_k$ by any signal $\sum_{i = 1}^{k} \alpha_i \tilde{\theta}_i + \mathcal{N}(0,1)$ is 
\[
\frac{(\sum_{i = 1}^{k} \alpha_i \lambda^*_i v_i)^2}{1+\sum_{i = 1}^{k} \alpha_i^2 v_i}
\]
If $v_1, \dots, v_k$ are small positive numbers and \emph{if the product $\lambda^*_iv_i$ is approximately constant across $1 \leq i \leq k$}, then the above is approximately $(\sum_{i = 1}^{k} \alpha_i)^2 (\lambda^*_1 v_1)^2$. Since $\lvert \sum_{i = 1}^{k} \alpha_i \rvert < 1$, we deduce that any other signal belonging to the subspace of the first $k$ signals is worse than signal $1$ (in the first period), whose variance reduction is $\frac{(\lambda^*_1 v_1)^2}{1+v_1}$. 

Meanwhile, take any signal that does not belong to the subspace. The variance reduction by such a signal $\sum_{i = 1}^{K} \beta_i \tilde{\theta}_i + \mathcal{N}(0,1)$ is 
\[
\frac{(\sum_{i = 1}^{k} \beta_i \lambda^*_i v_i)^2}{1+\sum_{i = 1}^{K} \beta_i^2 v_i}
\]
As $\beta_{k+1}, \dots, \beta_{K}$ are not all zero, the denominator above is arbitrarily large if $v_{k+1}, \dots, v_{K}$ are chosen to be large. Then, this signal is again worse than signal $1$ for the first agent, similar to the situation in Example \ref{ex:LT}.

\bigskip

To summarize, we have shown that whenever the prior variances $v_1, \dots, v_K$ satisfy the following three conditions, the first agent chooses among the first $k$ signals:
\begin{enumerate}
\item $v_1, \dots, v_k$ are close to $0$; 

\item $\lambda^*_1v_1, \dots, \lambda^*_kv_k$ have pairwise ratios close to $1$;

\item $v_{k+1}, \dots, v_{K}$ are large.\footnote{Formally, we require that for some fixed constant $\epsilon > 0$, it holds that $v_1, \dots, v_k < \epsilon$; $\max_{1 \leq i \leq k} \lambda^*_i v_i \leq (1+\epsilon) \cdot \min_{1 \leq i \leq k} \lambda^*_i v_i$; and $v_{k+1}, \dots, v_{K} > \frac{1}{\epsilon}$.} 
\end{enumerate}

To show \emph{every agent} chooses among the first $k$ signals, it suffices to check that starting from any prior belief satisfying the above conditions, the posterior beliefs after observing a signal continue to satisfy these conditions. Since variances decrease over time, the first condition is obviously satisfied. By independence, learning about $\tilde{\theta}_1, \dots, \tilde{\theta}_k$ does not affect the variances of the remaining states. So $v_{k+1}, \dots, v_{K}$ are unchanged, and the third condition is verified. Finally, the second condition holds by Proposition \ref{prop:N=K}: Since each signal $i \leq k$ is sampled according to $\lambda^*_i$, the variance $v_i$ after $t$ periods is approximately $\frac{1}{\lambda^*_i t}$. Hence part (a) of Theorem \ref{thm:general} is proved.\footnote{Strictly speaking, the above construction does not provide an \emph{open set} of prior beliefs given which agents always observe the first $k$ signals. This is because we restricted attention to priors that are independent over $\tilde{\theta}_1, \dots, \tilde{\theta}_K$. But it can be shown that the argument extends to mild prior correlation across the transformed states. We omit the somewhat cumbersome details, which do not add further intuition.}

%%%%
\subsection{Proof of Theorem \ref{thm:general} Part (b)}\label{appx:myopicConvergence}

\subsubsection{Preliminary Steps}
Given any prior, let $\mathcal{A} \subset [N]$ be the set of all signals that are observed by infinitely many agents. We first show that $\mathcal{A}$ is a spanning set. 

Indeed, by definition we can find some period $t$ after which agents only observe signals in $\mathcal{A}$. Note that the variance reduction of any signal approaches zero as its signal count gets large. Thus, along society's signal path, the variance reduction is close to zero at sufficiently late periods. 

If $\mathcal{A}$ is not spanning, society's posterior variance remains bounded away from zero. Thus in the limit where each signal in $\mathcal{A}$ has infinite signal counts, there still exists some signal $j$ outside of $\mathcal{A}$ whose variance reduction is strictly positive.\footnote{To see this, let $s_1, \dots, s_N$ denote the limit signal counts, where $s_i = \infty$ if and only if $i \in \mathcal{A}$. We need to find some signal $j$ such that $V(s_j+1, s_{-j}) < V(s_j, s_{-j})$. If such a signal does not exist, then all partial derivatives of $V$ at $s$ are zero. Since $V$ is differentiable, this would imply that all directional derivatives of $V$ are also zero. By the convexity of $V$, $V$ must be minimized at $s$. However, the minimum value of $V$ is zero because there exists a spanning set. This contradicts $V(s) > 0$.} By continuity, we deduce that at any sufficiently late period, observing signal $j$ is better than observing any signal in $\mathcal{A}$. This contradicts our assumption that later agents only observe signals in $\mathcal{A}$. 

\bigskip

Now that $\mathcal{A}$ is spanning, we can take $\mathcal{S}$ to be the optimal minimal spanning set in the subspace spanned by $\mathcal{A}$. To prove Theorem \ref{thm:general} part (b), we will show that long-run frequencies are positive precisely for the signals in $\mathcal{S}$. By ignoring the initial periods, we assume without loss that \emph{only signals in $\overline{\mathcal{A}}$ are available}. It suffices to show that whenever the signals observed infinitely often span a subspace, agents eventually sample from the optimal subset $\mathcal{S}$ in that subspace. To ease notation, we assume this subspace is the entire $\mathbb{R}^{K}$, and prove the following result: 

\bigskip

\noindent \textbf{Theorem \ref{thm:general} part (b) Restated.} \emph{Suppose that the signals observed infinitely often span $\mathbb{R}^{K}$. Then society eventually observes signals in $\mathcal{S}^*$ with frequencies $\lambda^*$.}

\bigskip

The next sections are devoted to the proof of this restatement. 

\subsubsection{Estimates of Derivatives}
We introduce a few technical lemmata: 
\begin{lemma}\label{lemmSecondDerSmall}
For any $q_1, \dots, q_N$, we have 
\[
\left\lvert \frac{\partial_{jj} V(q_1, \dots, q_N)}{\partial_{j} V(q_1, \dots, q_N)} \right\rvert \leq \frac{2}{q_j}.
\]
\end{lemma}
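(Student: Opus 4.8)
The plan is to compute both derivatives explicitly from the closed form $V(q_1,\dots,q_N) = [M^{-1}]_{11}$ of Lemma \ref{lemmVar}, writing $M := (\Sigma^0)^{-1} + C'QC = (\Sigma^0)^{-1} + \sum_i q_i\, c_i c_i'$, and then to collapse the ratio to a single scalar quantity that can be bounded directly. The only dependence of $M$ on $q_j$ is through the rank-one term $q_j c_j c_j'$, so $\partial_j M = c_j c_j'$. Applying the matrix identity $\partial_j(M^{-1}) = -M^{-1}(\partial_j M)M^{-1}$ and abbreviating the scalar $u_j := e_1' M^{-1} c_j$, I obtain $\partial_j V = -u_j^2$, which re-derives the monotonicity of Lemma \ref{lemmVar}.

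For the second derivative I differentiate once more. Since $c_j$ is constant in $q_j$, one finds $\partial_j u_j = -u_j w_j$ where $w_j := c_j' M^{-1} c_j \ge 0$, and hence
\[
\partial_{jj} V = \partial_j(-u_j^2) = -2 u_j\, \partial_j u_j = 2 u_j^2 w_j.
\]
Taking the ratio, the factor $u_j^2$ cancels and yields the clean identity
\[
\left\lvert \frac{\partial_{jj} V}{\partial_j V} \right\rvert = 2 w_j = 2\, c_j' M^{-1} c_j.
\]
So the entire lemma reduces to the single scalar estimate $c_j' M^{-1} c_j \le 1/q_j$. (When $u_j = 0$ both derivatives vanish, so the inequality, read in its product form $\lvert\partial_{jj}V\rvert \le \tfrac{2}{q_j}\lvert\partial_j V\rvert$, holds as $0 \le 0$; the identity above is thus asserted on the set where $\partial_j V \neq 0$.)

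To establish the scalar bound I isolate the $q_j$-dependence by writing $M = A + q_j c_j c_j'$ with $A := (\Sigma^0)^{-1} + \sum_{i \neq j} q_i c_i c_i' \succ 0$, positive definite because the prior precision $(\Sigma^0)^{-1}$ is (recall $\Sigma^0$ has full rank). Applying the Sherman--Morrison formula and setting $a := c_j' A^{-1} c_j \ge 0$, a short computation collapses to
\[
c_j' M^{-1} c_j = a - \frac{q_j a^2}{1 + q_j a} = \frac{a}{1 + q_j a} = \frac{1}{q_j + 1/a} \le \frac{1}{q_j},
\]
the last step using $a \ge 0$ (with the case $a=0$ giving $c_j' M^{-1} c_j = 0$). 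Combining with the identity above proves the claim.

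I expect no serious obstacle: the argument is a direct computation once the ratio is reduced to $2\,c_j' M^{-1} c_j$. The Sherman--Morrison step is the crux, since it is what converts the rank-one perturbation structure into the explicit $1/(q_j + 1/a)$ form that makes the bound transparent; the only point requiring mild care is the degenerate case $u_j=0$, handled by reading the stated inequality in its division-free product form.
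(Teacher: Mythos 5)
Your proof is correct and follows the same core reduction as the paper: both compute $\partial_j V = -u_j^2$ and $\partial_{jj}V = 2u_j^2\,c_j'M^{-1}c_j$ (the paper writes $\gamma_j$ for your $u_j$), so that the lemma collapses to the scalar bound $c_j'M^{-1}c_j \le 1/q_j$. The only divergence is in that last step: the paper argues from the Loewner inequality $M \succeq q_j c_j c_j'$ and then handles the non-invertibility of the rank-one matrix by an $\epsilon$-regularization and trace/eigenvalue computation in a footnote, whereas you apply Sherman--Morrison to $M = A + q_j c_j c_j'$ with $A \succ 0$ and obtain the exact value $a/(1+q_j a) = 1/(q_j + 1/a)$, from which the bound is immediate. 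Your route is a touch cleaner since it avoids the degenerate-inverse issue entirely, and your explicit handling of the $u_j=0$ case (reading the inequality in product form) is a sensible precaution the paper leaves implicit; both arguments are sound.
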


\begin{proof}
Recall that $V(q_1, \dots, q_N) =  e_1' \cdot [(\Sigma^0)^{-1} + C'QC]^{-1} \cdot e_1$. Thus
\[
\partial_j V = -e_1' \cdot [(\Sigma^0)^{-1} + C'QC]^{-1} \cdot c_j \cdot c_j' \cdot [(\Sigma^0)^{-1} + C'QC]^{-1} \cdot e_1,
\]
and 
\[
\partial_{jj} V = 2 e_1' \cdot [(\Sigma^0)^{-1} + C'QC]^{-1} \cdot c_j \cdot c_j' \cdot [(\Sigma^0)^{-1} + C'QC]^{-1} \cdot c_j \cdot c_j' \cdot [(\Sigma^0)^{-1} + C'QC]^{-1} \cdot e_1.
\]

Let $\gamma_j = e_1' \cdot [(\Sigma^0)^{-1} + C'QC]^{-1} \cdot c_j$, which is a number. Then the above becomes
\[
\partial_j f = - \gamma_j^2; \quad \quad \partial_{jj} f = 2 \gamma_j^2 \cdot c_j' \cdot [(\Sigma^0)^{-1} + C'QC]^{-1} \cdot c_j. 
\]
Note that $(\Sigma^0)^{-1} + C'QC \succeq q_j \cdot c_j c_j'$ in matrix norm. Thus the number $c_j' \cdot [(\Sigma^0)^{-1} + C'QC]^{-1} \cdot c_j$ is bounded above by $\frac{1}{q_j}$.\footnote{Formally, we need to show that for any $\epsilon > 0$, the number $c_j' [c_j c_j' + \epsilon I_K]^{-1} c_j$ is at most $1$. Using the identify $Trace(AB) = Trace(BA)$, we can rewrite this number as 
\[
Trace([c_j c_j' + \epsilon I_K]^{-1} c_jc_j') = Trace(I_K - [c_j c_j' + \epsilon I_K]^{-1} \epsilon I_K) = K - \epsilon \cdot Trace([c_j c_j' + \epsilon I_K]^{-1}). 
\] 
The matrix $c_j c_j'$ has rank $1$, so $K-1$ of its eigenvalues are zero. Thus the matrix $[c_j c_j' + \epsilon I_K]^{-1}$ has eigenvalue $1/\epsilon$ with multiplicity $K-1$, and the remaining eigenvalue is positive. This implies $\epsilon \cdot Trace([c_j c_j' + \epsilon I_K]^{-1}) > K-1$, and then the above display yields $c_j' \cdot [(\Sigma^0)^{-1} + C'QC]^{-1} \cdot c_j < 1$ as desired. 
} This proves the lemma.
\end{proof}

Since the second derivative is small compared to the first derivative, we deduce that the variance reduction of any \emph{discrete} signal can be approximated by the partial derivative of $f$. This property is summarized in the following lemma: 

\begin{lemma}\label{lemmDiscreteDer}
For any $q_1, \dots, q_N$, we have\footnote{Note that the convexity of $V$ gives $V(q) - V(q_j + 1, q_{-j}) \leq \lvert \partial_j V(q) \rvert$. This lemma provides a converse that we need for the subsequent analysis.}
\[
V(q) - V(q_j + 1, q_{-j}) \geq \frac{q_j}{q_j+1} \lvert \partial_j V(q) \rvert.
\]
\end{lemma}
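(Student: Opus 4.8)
The plan is to interpolate the discrete decrement by a one-dimensional integral of the partial derivative, and then use the bound on the ratio of second to first derivatives from Lemma \ref{lemmSecondDerSmall} to control how fast the marginal value decays along the $j$-th coordinate. Fix all coordinates except the $j$-th and set $g(x) := V(q_j + x, q_{-j})$ for $x \in [0,1]$. By Lemma \ref{lemmVar} the function $g$ is convex and decreasing, so $g'(x) = \partial_j V(q_j+x, q_{-j}) \le 0$; writing $\psi(x) := \lvert \partial_j V(q_j+x, q_{-j}) \rvert = -g'(x) \ge 0$, the fundamental theorem of calculus gives
\[
V(q) - V(q_j+1, q_{-j}) = g(0) - g(1) = \int_0^1 \psi(x)\, dx.
\]
It therefore suffices to establish the lower bound $\int_0^1 \psi(x)\,dx \ge \frac{q_j}{q_j+1}\,\psi(0)$. (If $\partial_j V(q)=0$ the claimed inequality is trivial, since its left-hand side is nonnegative by monotonicity, so I may assume $\psi(0) > 0$.)

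The next step is to convert Lemma \ref{lemmSecondDerSmall} into a differential inequality for $\psi$. Applied at the point $(q_j+x, q_{-j})$ and using that $\partial_{jj}V \ge 0$ by convexity (so the absolute value may be dropped), the lemma reads $\partial_{jj}V(q_j+x,q_{-j}) \le \frac{2}{q_j+x}\,\psi(x)$. Since $\psi'(x) = -g''(x) = -\partial_{jj}V(q_j+x, q_{-j})$ by definition, this becomes
\[
\psi'(x) \ge -\frac{2}{q_j+x}\,\psi(x).
\]
To turn this into a pointwise lower bound on $\psi$ robustly, I would study the auxiliary function $h(x) := (q_j+x)^2\,\psi(x)$ rather than take logarithms (which sidesteps any concern about $\psi$ vanishing). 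Differentiating and substituting the displayed inequality yields $h'(x) = (q_j+x)\big[(q_j+x)\psi'(x) + 2\psi(x)\big] \ge 0$, so $h$ is non-decreasing and hence $h(x) \ge h(0) = q_j^2\,\psi(0)$. This gives the clean decay estimate
\[
\psi(x) \ge \psi(0)\cdot\frac{q_j^2}{(q_j+x)^2}, \qquad x \in [0,1].
\]

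Finally, I would integrate this bound over $[0,1]$:
\[
\int_0^1 \psi(x)\, dx \ge \psi(0)\,q_j^2 \int_0^1 \frac{dx}{(q_j+x)^2} = \psi(0)\,q_j^2\left(\frac{1}{q_j} - \frac{1}{q_j+1}\right) = \frac{q_j}{q_j+1}\,\psi(0),
\]
which is exactly the claimed inequality since $\psi(0) = \lvert \partial_j V(q)\rvert$. All of these computations are routine; the only point requiring care—and the step I would flag as the crux—is correctly producing the differential inequality for $\psi$ from the algebraic bound on $\lvert \partial_{jj}V/\partial_j V\rvert$ (matching signs and invoking convexity to remove the absolute value on $\partial_{jj}V$), and then integrating it through the auxiliary function $h$ rather than naively through $\log\psi$.
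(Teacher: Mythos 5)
Your proof is correct and is essentially the paper's own argument run in the opposite direction: the paper verifies the target inequality at $x=0$ and differentiates twice until it reduces to Lemma \ref{lemmSecondDerSmall}, whereas you integrate Lemma \ref{lemmSecondDerSmall} twice (via the monotone auxiliary function $(q_j+x)^2\,\psi(x)$, which is exactly the paper's intermediate inequality $-(q_j+x)^2\,\partial_j V(q_j+x,q_{-j})\geq -q_j^2\,\partial_j V(q)$, and then the fundamental theorem of calculus). The key ingredients and the structure of the estimate are identical.
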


\begin{proof}
We will show the more general result: 
\[
V(q) - V(q_j + x, q_{-j}) \geq \frac{q_j x}{q_j + x} \cdot \lvert \partial_j V(q) \rvert, \forall x \geq 0.
\] 
This clearly holds at $x = 0$. Differentiating with respect to $x$, we only need to show
\[
- \partial_j V(q_j + x, q_{-j}) \geq \frac{q_j^2}{(q_j+x)^2} \lvert \partial_j V(q) \rvert, \forall x \geq 0.
\]
Equivalently, we need to show 
\[
- (q_j+x)^2 \cdot \partial_j V(q_j + x, q_{-j}) \geq -q_j^2 \cdot \partial_j V(q), \forall x \geq 0.
\]
Again, this inequality holds at $x = 0$. Differentiating with respect to $x$, it becomes 
\[
- 2(q_j+x) \cdot \partial_j V(q_j + x, q_{-j}) - (q_j+x)^2 \cdot \partial_{jj} V(q_j + x, q_{-j}) \geq 0. 
\]
This is exactly the result of Lemma \ref{lemmSecondDerSmall}. 
\end{proof}

\subsubsection{Lower Bound on Variance Reduction}
Our next result gives a lower bound on the directional derivative of $V$ along the ``optimal" direction $\lambda^*$:
\begin{lemma}\label{lemmDirectionalDer}
For any $q_1, \dots, q_N$, we have
\[
\lvert \partial_{\lambda^*} V(q) \rvert \geq \frac{V(q)^2}{\phi(\mathcal{S}^*)^2}. 
\]
\end{lemma}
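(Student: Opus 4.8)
The plan is to reduce everything to the partial-derivative formula already derived inside the proof of Lemma \ref{lemmSecondDerSmall}, and then apply a single weighted Cauchy–Schwarz inequality calibrated so that the extremal weights are exactly $\lambda^*$. Write $M := (\Sigma^0)^{-1} + C'QC$ and set $w := M^{-1} e_1$, so that $V(q) = e_1' M^{-1} e_1 = \langle e_1, w \rangle$. For each signal $j$, let $\gamma_j := \langle c_j, w \rangle = e_1' M^{-1} c_j$; the computation in Lemma \ref{lemmSecondDerSmall} shows $\partial_j V(q) = -\gamma_j^2$. Since $\lambda^*$ is treated as a direction vector, the directional derivative decomposes as $\partial_{\lambda^*} V(q) = \sum_j \lambda_j^* \partial_j V(q) = -\sum_{i \in \mathcal{S}^*} \lambda_i^* \gamma_i^2$, the sum collapsing to $\mathcal{S}^*$ because $\lambda_i^* = 0$ off the best set. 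Hence $\lvert \partial_{\lambda^*} V(q) \rvert = \sum_{i \in \mathcal{S}^*} \lambda_i^* \gamma_i^2$, and the task becomes a lower bound on this quantity.

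The key identity is that $V(q)$ is itself an expression in the very same numbers $\gamma_i$. Using the decomposition $e_1 = \sum_{i \in \mathcal{S}^*} \beta_i^{\mathcal{S}^*} c_i$, I would compute
\[
V(q) = \langle e_1, w \rangle = \sum_{i \in \mathcal{S}^*} \beta_i^{\mathcal{S}^*} \langle c_i, w \rangle = \sum_{i \in \mathcal{S}^*} \beta_i^{\mathcal{S}^*} \gamma_i.
\]
Then apply Cauchy–Schwarz to the factorization $\beta_i^{\mathcal{S}^*} \gamma_i = (\beta_i^{\mathcal{S}^*}/\sqrt{\lambda_i^*}) \cdot (\sqrt{\lambda_i^*}\,\gamma_i)$:
\[
V(q)^2 = \left(\sum_{i \in \mathcal{S}^*} \beta_i^{\mathcal{S}^*} \gamma_i\right)^2 \leq \left(\sum_{i \in \mathcal{S}^*} \frac{(\beta_i^{\mathcal{S}^*})^2}{\lambda_i^*}\right) \left(\sum_{i \in \mathcal{S}^*} \lambda_i^* \gamma_i^2\right).
\]
Rearranging yields $\sum_{i \in \mathcal{S}^*} \lambda_i^* \gamma_i^2 \geq V(q)^2 \big/ \big(\sum_{i \in \mathcal{S}^*} (\beta_i^{\mathcal{S}^*})^2/\lambda_i^*\big)$.

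The last step is to evaluate $\sum_{i \in \mathcal{S}^*} (\beta_i^{\mathcal{S}^*})^2/\lambda_i^*$. By (\ref{eq:lambda*}) and the definition $\phi(\mathcal{S}^*) = \sum_{j \in \mathcal{S}^*} \lvert \beta_j^{\mathcal{S}^*} \rvert$, we have $\lambda_i^* = \lvert \beta_i^{\mathcal{S}^*} \rvert / \phi(\mathcal{S}^*)$, so each term equals $(\beta_i^{\mathcal{S}^*})^2 \cdot \phi(\mathcal{S}^*)/\lvert \beta_i^{\mathcal{S}^*} \rvert = \lvert \beta_i^{\mathcal{S}^*} \rvert \cdot \phi(\mathcal{S}^*)$. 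Summing gives $\phi(\mathcal{S}^*) \sum_{i \in \mathcal{S}^*} \lvert \beta_i^{\mathcal{S}^*} \rvert = \phi(\mathcal{S}^*)^2$, and substituting back delivers exactly $\lvert \partial_{\lambda^*} V(q) \rvert \geq V(q)^2/\phi(\mathcal{S}^*)^2$.

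I expect no serious obstacle once $\partial_j V = -\gamma_j^2$ is in hand; the content is entirely the choice of Cauchy–Schwarz weights, tuned so that the extremal case matches $\lambda^*$. The only point requiring care is sign and absolute-value bookkeeping, but it is benign here: the Cauchy–Schwarz step squares both sides, so the signs of the $\beta_i^{\mathcal{S}^*}$ are irrelevant there, and the cancellation $(\beta_i^{\mathcal{S}^*})^2/\lambda_i^* = \lvert \beta_i^{\mathcal{S}^*} \rvert \, \phi(\mathcal{S}^*)$ holds regardless of sign. (If one prefers, one may invoke the same positivity normalization used in the proof of Lemma \ref{lemmUniqueMinimizer}, replacing $c_i$ by $-c_i$ where needed, which alters neither $V$ nor the dynamics.)
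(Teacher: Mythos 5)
Your proof is correct, and it takes a genuinely different route from the paper's. The paper proves this lemma by an information-combination (Blackwell) argument: it treats the increment $\lambda^*$ as acquiring a fractional amount $\lambda^*_i$ of each signal, rescales and sums these to produce the single unbiased signal $\omega/\phi(\mathcal{S}^*) + \mathcal{N}(0,1)$, concludes that the posterior \emph{precision} $1/V$ increases by at least $1/\phi(\mathcal{S}^*)^2$ along the direction $\lambda^*$, and then converts this into the stated bound on $\partial_{\lambda^*} V$ via the chain rule. You instead work directly with the closed form $\partial_j V(q) = -\gamma_j^2$ from Lemma \ref{lemmSecondDerSmall}, observe the identity $V(q) = \sum_{i \in \mathcal{S}^*} \beta_i^{\mathcal{S}^*} \gamma_i$ (which the paper never writes down), and close with a weighted Cauchy--Schwarz whose weights are tuned so that $\sum_i (\beta_i^{\mathcal{S}^*})^2/\lambda_i^* = \phi(\mathcal{S}^*)^2$. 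The two arguments are essentially dual: for Gaussian signals, the statement that a linear combination of observations is weakly less informative than the full set \emph{is} Cauchy--Schwarz at the level of precisions. What your version buys is explicitness and self-containment\textemdash every step is a finite algebraic identity, with no appeal to fractional signal draws or to the equivalence between variance reduction and Blackwell dominance; it also makes transparent exactly when the bound is tight (when $\gamma_i/\,(\beta_i^{\mathcal{S}^*}/\lambda_i^*)$ is constant across $i \in \mathcal{S}^*$). What the paper's version buys is a physical interpretation that it reuses almost verbatim elsewhere (e.g.\ in the proof of Proposition \ref{prop:batch}, where the same garbling argument is applied to integer allocations approximating $\lambda^*$, a situation your exact-derivative computation would not cover without modification). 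All the steps you flag as requiring care\textemdash differentiability of $V$ (guaranteed by the full-rank prior, as the paper notes in the proof of Lemma \ref{lemmVarReduction}), positivity of $\lambda_i^*$ on $\mathcal{S}^*$, and the sign bookkeeping for the $\beta_i^{\mathcal{S}^*}$\textemdash check out.
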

 
\begin{proof}
To compute this directional derivative, we think of agents acquiring signals in fractional amounts, where a fraction of a signal is just the same signal with precision multiplied by that fraction. Consider an agent who draws $\lambda^*_i$ realizations of each signal $i$. Then he essentially obtains the following signals:
\[
Y_i = \langle c_i, \theta \rangle + \mathcal{N}\left(0, \frac{1}{\lambda^*_i}\right), \forall i. 
\]
This is equivalent to 
\[
\lambda^*_i Y_i = \langle \lambda^*_i c_i, \theta \rangle + \mathcal{N}(0, \lambda^*_i), \forall i. 
\]
Such an agent receives at least as much information as the sum of these signals:
\[
\sum_{i} \lambda^*_i Y_i = \sum_{i} \langle \lambda^*_i c_i, \theta \rangle + \sum_{i} \mathcal{N}(0, \lambda^*_i) = \frac{\omega}{\phi(\mathcal{S}^*)} + \mathcal{N}(0,1).
\]
Hence the agent's posterior precision about $\omega$ (which is the inverse of his posterior variance $V$) must increase by at least $\frac{1}{\phi(\mathcal{S}^*)^2}$ along the direction $\lambda^*$. The chain rule of differentiation yields the lemma. 
\end{proof}

We can now bound the variance reduction at late periods: 
\begin{lemma}\label{lemmVarReduction}
Fix any $q_1, \dots, q_N$. Suppose $M$ is a positive real number such that $(\Sigma^0)^{-1} + C'QC \succeq M c_j c_j'$ holds for each signal $j \in \mathcal{S}^*$. Then we have
\[
\min_{j \in \mathcal{S}^*} V(q_j + 1, q_{-j}) \leq V(q) - \frac{M}{M+1} \cdot \frac{V(q)^2}{\phi(\mathcal{S}^*)^2}. 
\]
\end{lemma}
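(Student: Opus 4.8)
The plan is to combine the directional lower bound of Lemma \ref{lemmDirectionalDer} with a discrete comparison in the spirit of Lemma \ref{lemmDiscreteDer}, but sharpened to use the hypothesized constant $M$ in place of the raw count $q_j$. First I would exploit that $\lambda^*$ is a probability vector supported on $\mathcal{S}^*$ together with the monotonicity of $V$ (so $\partial_j V(q) \le 0$ by Lemma \ref{lemmVar}). Writing the directional derivative as a convex combination of coordinate derivatives gives
\[
\lvert \partial_{\lambda^*} V(q) \rvert = \sum_{j \in \mathcal{S}^*} \lambda^*_j \, \lvert \partial_j V(q) \rvert \le \max_{j \in \mathcal{S}^*} \lvert \partial_j V(q) \rvert,
\]
where the inequality holds because the weights $\lambda^*_j$ sum to one. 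Combined with Lemma \ref{lemmDirectionalDer}, this produces a particular signal $j^* \in \mathcal{S}^*$ with $\lvert \partial_{j^*} V(q)\rvert \ge V(q)^2/\phi(\mathcal{S}^*)^2$. This disposes of the ``the best direction is spread across $\mathcal{S}^*$'' issue: at least one coordinate must carry the full directional derivative.

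The second step converts this partial derivative into a genuine discrete reduction, and this is where the hypothesis on $M$ enters. I would re-run the argument of Lemma \ref{lemmDiscreteDer} with $q_{j^*}$ replaced everywhere by $M$. The only input to that argument is Lemma \ref{lemmSecondDerSmall}, whose proof used solely the matrix inequality $(\Sigma^0)^{-1}+C'QC \succeq q_j\, c_j c_j'$; under the present hypothesis $(\Sigma^0)^{-1}+C'QC \succeq M\, c_j c_j'$ the identical computation yields $c_j'[(\Sigma^0)^{-1}+C'QC]^{-1}c_j \le 1/M$ and hence $\lvert \partial_{jj}V(q)/\partial_j V(q)\rvert \le 2/M$. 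I would then establish
\[
V(q) - V(q_{j^*}+x,\, q_{-j^*}) \ge \frac{Mx}{M+x}\,\lvert\partial_{j^*}V(q)\rvert, \quad \forall\, x \ge 0,
\]
exactly as in Lemma \ref{lemmDiscreteDer}: the inequality holds at $x=0$, and after differentiating twice in $x$ it reduces to $2\lvert \partial_{j^*}V\rvert \ge (M+x)\,\partial_{j^*j^*}V$ along the path, which is precisely the $M$-bound above. Setting $x=1$ yields the factor $\tfrac{M}{M+1}$.

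The one point that needs care is that the bound $\lvert \partial_{jj}V/\partial_j V\rvert \le 2/(M+x)$ must hold along the entire path $q_{j^*}+x$, not merely at $q$. This is automatic: incrementing the $j^*$-count adds $x\, c_{j^*}c_{j^*}'$ to the precision matrix, so $(\Sigma^0)^{-1}+C'(Q+x\Delta_{j^*j^*})C \succeq (M+x)\, c_{j^*}c_{j^*}'$, which strengthens the bound as $x$ grows. This monotonicity along the path is exactly what makes the double-differentiation argument go through verbatim, and I expect the bookkeeping here (the monotonicity and the sign of the $x$-derivatives) to be the only real obstacle, since the first step is essentially a one-line averaging argument.

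Finally I would assemble the two pieces. For the specific signal $j^*$,
\[
\min_{j\in\mathcal{S}^*} V(q_j+1,\, q_{-j}) \le V(q_{j^*}+1,\, q_{-j^*}) \le V(q) - \frac{M}{M+1}\,\lvert\partial_{j^*}V(q)\rvert \le V(q) - \frac{M}{M+1}\cdot\frac{V(q)^2}{\phi(\mathcal{S}^*)^2},
\]
where the middle inequality is the discrete bound with $x=1$ and the last uses the choice of $j^*$ from the first step. This is precisely the claim of the lemma.
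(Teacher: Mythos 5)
Your proof is correct and follows essentially the same route as the paper's: both combine the convex-combination bound $\max_{j\in\mathcal{S}^*}\lvert\partial_j V(q)\rvert \ge \lvert\partial_{\lambda^*}V(q)\rvert$ with Lemma \ref{lemmDirectionalDer} and an $M$-variant of Lemma \ref{lemmDiscreteDer}. The only (cosmetic) difference is that you justify that variant by re-running the double-differentiation argument with the hypothesis $(\Sigma^0)^{-1}+C'QC \succeq (M+x)\,c_{j^*}c_{j^*}'$ along the path, whereas the paper obtains it by reformulating as a modified problem in which signal $j$ has been observed $M$ times; both justifications are valid.
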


\begin{proof}
Fix any signal $j \in \mathcal{S}^*$. Using the condition $(\Sigma^0)^{-1} + C'QC \succeq M c_j c_j'$, we can deduce the following variant of Lemma \ref{lemmDiscreteDer}:\footnote{Even though we are not guaranteed $q_j \geq M$, we can modify the prior and signal counts such that the precision matrix $(\Sigma^0)^{-1} + C'QC$ is unchanged, and signal $j$ has been observed at least $M$ times. This is possible thanks to the condition $(\Sigma^0)^{-1} + C'QC \succeq M c_j c_j'$. Then, applying Lemma \ref{lemmDiscreteDer} to this modified problem yields the result here.}
\[
V(q) - V(q_j + 1, q_{-j}) \geq \frac{M}{M+1} \lvert \partial_j V(q) \rvert.
\]
Since $V$ is always differentiable, $\partial_{\lambda^*} V(q)$ is a convex combination of the partial derivatives of $V$.\footnote{While this may be a surprising contrast with $V^*$, the difference arises because the formula for $V$ always involves a full-rank prior covariance matrix, whereas its asymptotic variant $V^*$ corresponds to a flat prior.} Thus
\[
\max_{j \in \mathcal{S^*}} ~\lvert \partial_j V(q) \rvert \geq \lvert \partial_{\lambda^*} V(q) \rvert
\]
These inequalities together with Lemma \ref{lemmDirectionalDer} complete the proof. 
\end{proof}

\subsubsection{Proof of the Restatement of Theorem \ref{thm:general} Part (b)}
We will show $V(m(t)) \sim \frac{\phi(\mathcal{S}^*)^2}{t}$, so that society eventually approximates the optimal speed of learning. Since $\lambda^*$ is the unique minimizer of $V^*$, this will imply the desired conclusion $m(t)/t \rightarrow \lambda^*$. 

To estimate $V(m(t))$, we note that for any fixed $M$, society's acquisitions $m(t)$ eventually satisfy the condition  $(\Sigma^0)^{-1} + C'QC \succeq M c_j c_j'$. This is due to our assumption that the signals observed infinitely often span $\mathbb{R}^K$, which implies that $C'QC$ becomes arbitrarily large in matrix norm. Hence, we can apply Lemma \ref{lemmVarReduction} to find that
\[
V(m(t+1)) \leq V(m(t)) - \frac{M}{M+1} \cdot \frac{V(m(t))^2}{\phi(\mathcal{S}^*)^2}
\]
for all $t \geq t_0$, where $t_0$ depends only on $M$. 

We introduce the auxiliary function $g(t) = V(m(t)) \cdot \frac{M}{(M+1) \phi(\mathcal{S}^*)^2}$. Then the above simplifies to 
\[
g(t+1) \leq g(t) - g(t)^2. 
\]
Inverting both sides, we have
\begin{equation}\label{eq:precisionIncrease}
\frac{1}{g(t+1)} \geq \frac{1}{g(t)(1-g(t))} = \frac{1}{g(t)} + \frac{1}{1-g(t)} \geq \frac{1}{g(t)} + 1.
\end{equation}
This holds for all $t \geq t_0$. Thus by induction, $\frac{1}{g(t)} \geq t - t_0$ and so $g(t) \leq \frac{1}{t-t_0}$. Going back to the posterior variance function $V$, this implies 
\begin{equation}\label{eq:posteriorVarBound}
V(m(t)) \leq \frac{M+1}{M} \cdot \frac{\phi(\mathcal{S}^*)^2}{t-t_0}.
\end{equation}
Hence, by choosing $M$ sufficiently large in the first place and then considering large $t$, we find that society's speed of learning is arbitrarily close to the optimal speed $\phi(\mathcal{S}^*)^2$. This completes the proof.

\bigskip

We comment that the above argument leaves open the possibility that some signals outside of $\mathcal{S}^*$ are observed \emph{infinitely often}, yet with \emph{zero long-run frequency}. Similar to Appendix \ref{appx:n(t)finite}, we can show this does not happen. The proof is involved, and we defer it to Appendix \ref{appx:m(t)finite}.

%%%%
\subsection{Proof of Proposition \ref{prop:batch}}\label{appx:batch}
Given any history of observations, an agent can always allocate his $B$ observations as follows: He draws $\lfloor B \cdot \lambda^*_i \rfloor$ realizations of each signal $i$, and samples arbitrarily if there is any capacity remaining. Here $\lfloor ~ \rfloor$ denotes the floor function. 

Fix any $\epsilon > 0$. If $B$ is sufficiently large, then the above strategy acquires at least $(1 - \epsilon) \cdot B \cdot \lambda^*_i$ observations of each signal $i$. Adapting the proof of Lemma \ref{lemmDirectionalDer}, we see that the agent's posterior precision about $\omega$ must increase by $\frac{(1-\epsilon) B}{\phi(\mathcal{S}^*)^2}$ under this strategy. Thus the same must hold for his optimal strategy, so that society's posterior precision at time $t$ is at least $\frac{ (1-\epsilon) B t}{\phi(\mathcal{S}^*)^2}$. This implies that society's speed of learning (per signal) is at least $\frac{\phi(\mathcal{S}^*)^2}{ (1-\epsilon)}$, which can be arbitrarily close to the optimal speed $\phi(\mathcal{S}^*)^2$ with appropriate choice of $\epsilon$.

Since $\lambda^*$ is the unique minimizer of $V^*$, society's long-run frequencies must be close to $\lambda^*$. In particular, with $\epsilon$ sufficiently small, we can ensure that each signal in $\mathcal{S}^*$ are observed with positive frequencies. The restated Theorem \ref{thm:general} part (b) extends to the current setting and implies that society eventually samples from $\mathcal{S}^*$. This yields the proposition.\footnote{This proof also suggests that how small $\epsilon$ (and how large $B$) need to be depends on the distance between the optimal speed of learning and the ``second-best" speed of learning from any other minimal spanning set. Intuitively, in order to achieve long-run efficient learning, agents need to allocate $B$ observations in the best set to approximate the optimal frequencies. If another set of signals offers a speed of learning that is only slightly worse, we will need $B$ sufficiently large for the approximately optimal frequencies in the best set to beat this other set.}

%%%%
\subsection{Proof of Proposition \ref{prop:freeSignal}}\label{appx:freeSignal}
Suppose without loss that the best set is $\{1, \dots, k\}$. By taking a linear transformation, we can further assume each signal $i$ with $1 \leq i \leq k$ only involves $\omega$ and the first $k-1$ confounding terms $b_1, \dots, b_{k-1}$. We claim that whenever $k-1$ sufficiently precise signals are provided about each of these confounding terms, long-run frequencies must converge to $\lambda^*$. 

Fix any positive real number $M$. Since the $k-1$ free signals are very precise, it is as if the prior precision matrix satisfies 
\[
(\Sigma^0)^{-1} \succeq M^2\sum_{i = 2}^{k}\Delta_{ii}
\]
where $\Delta_{ii}$ be the $K \times K$ matrix that has one at the $(i,i)$ entry and zero otherwise. Recall also that society eventually learns $\omega$. Thus at some late period $t_0$, society's acquisitions must satisfy 
\[
C'QC \succeq M^2 \Delta_{11}.
\]
Adding up the above two displays, we have 
\[
(\Sigma^0)^{-1} + C'QC \succeq M^2 \sum_{i = 1}^{k} \Delta_{ii} \succeq M c_j c_j', \forall 1 \leq j \leq k. 
\]
The last inequality uses the fact that each $c_j$ only involves the first $k$ coordinates. 

Now this is exactly the condition we need in order to apply Lemma \ref{lemmVarReduction}: Whether or not the condition is met for $j \notin \mathcal{S}^*$ does not affect the argument. Thus we can follow the proof of the restated Theorem \ref{thm:general} part (b) to deduce (\ref{eq:posteriorVarBound}). That is, for fixed $M$ and corresponding free information, society's long-run speed of learning cannot be slower than $(1+1/M) \cdot \phi(\mathcal{S}^*)^2$. This can be made arbitrarily close to the optimal speed, in which case we use Theorem \ref{thm:general} part (b) to conclude that society eventually samples according to $\lambda^*$.

%%%%
\subsection{Proof of Proposition \ref{prop:highDelta}}\label{appx:highDelta}
Recall that $\lambda^*$ uniquely minimizes the function $V^*$. Thus the proposition is equivalent to the following: Fix $\epsilon > 0$, then for any $\delta$ close to $1$, $V(d_{\delta}(t)) \leq \frac{(1+\epsilon)\phi(\mathcal{S}^*)^2}{t}$ holds for sufficiently large $t$. That is, we only need to show that as $\delta \rightarrow 1$, the achieved speed of learning is close to the optimal speed.

Suppose for contradiction that $V(d_{\delta}(t)) > \frac{(1+\epsilon)\phi(\mathcal{S}^*)^2}{t}$ at some large $t$. Let $\tau < t$ be the last period with $V(d_{\delta}(\tau)) \leq \frac{(1+ \epsilon/2)\phi(\mathcal{S}^*)^2}{\tau}$. Below we first assume such a period $\tau$ exists; later we will show how to modify the proof when it does not. Consider the following deviation: 
\begin{enumerate}
\item Agents $i \leq \tau$ choose signals according to $d_{\delta}$ (i.e., they do not deviate); 
\item Starting in period $\tau+1$, the next $Mk$ agents sample each signal in the best set (of size $k$) exactly $M$ times, in an arbitrary order;
\item Starting in period $\tau+Mk+1$, each future agent chooses the signal that maximizes his own expected payoff, as in our main model. 
\end{enumerate}
In what follows we will show that for appropriately chosen $M$ as well as sufficiently large $\delta$ and $t$, this deviation yields a higher $\delta$-discounted payoff than the original strategy $d_{\delta}$. 

\bigskip

By construction, the deviation strategy achieves the same payoff as the original strategy in the first $\tau$ periods. Next we consider those periods $\tau + 1$ through $t$. For $1 \leq j \leq t - \tau$, let $\tilde{V}(\tau+j)$ denote the posterior variance at time $\tau+j$ under the deviation strategy. We can bound it from above as follows: Our previous analysis in Appendix \ref{appx:freeSignal} (specifically (\ref{eq:precisionIncrease})) gives that for $j > Mk$, 
\begin{equation}\label{eq:devPostVar1}
\frac{(M+1)\phi(\mathcal{S}^*)^2}{M \cdot \tilde{V}(\tau+j)} \geq \frac{(M+1)\phi(\mathcal{S}^*)^2}{M \cdot \tilde{V}(\tau+Mk)} + j-Mk,
\end{equation}
Using $\tilde{V}(\tau+Mk) \leq \tilde{V}(\tau) \leq \frac{(1+\epsilon/2)\phi(\mathcal{S}^*)^2}{\tau}$, the above inequality further yields
\begin{equation}\label{eq:devPostVar2}
\frac{(M+1)\phi(\mathcal{S}^*)^2}{M \cdot \tilde{V}(\tau+j)} \geq \frac{(M+1) \tau}{M (1+\epsilon/2)} + j - Mk.
\end{equation}
With slight algebra, we obtain from the above
\begin{equation}\label{eq:devPostVar3}
\frac{\tilde{V}(\tau+j)}{\phi(\mathcal{S}^*)^2} \leq \frac{1}{\frac{\tau}{1+\epsilon/2} + \frac{j-Mk}{1 + 1/M}}.
\end{equation}
Fixing $\epsilon$, we now choose $M$ so that $\frac{1}{M} < \frac{\epsilon}{4}$. Then there exists $\underline{j}$ (depending only on $\epsilon, M$ and $K$) such that for $j > \underline{j}$, it holds that 
\[
\frac{j - Mk}{1 + 1/M} \geq \frac{j+1}{1+\epsilon/2}.
\]
Thus, (\ref{eq:devPostVar3}) implies 
\begin{equation}\label{eq:devPostVar4}
\tilde{V}(\tau + j) \leq \frac{(1+\epsilon/2)\phi(\mathcal{S}^*)^2}{\tau + j +1}, ~~\forall \underline{j}+1 \leq j \leq t - \tau. 
\end{equation}
On the other hand, for small $j$ we have the following crude estimate: 
\begin{equation}\label{eq:devPostVar5}
\tilde{V}(\tau + j) \leq \tilde{V}(\tau) \leq \frac{(1+\epsilon/2)\phi(\mathcal{S}^*)^2}{\tau}, ~~\forall 1 \leq j \leq \underline{j}.
\end{equation}

Now we go back to the original strategy and make payoff comparisons. Our choice of $\tau$ ensures that posterior variance under the \emph{original} strategy is at least $\frac{(1+\epsilon/2)\phi(\mathcal{S}^*)^2}{\tau+j}$, for $1 \leq j \leq t - \tau$. Hence by deviating, the payoff gain in periods $\tau + 1 \sim t$ is at least\footnote{In this derivation we use (\ref{eq:devPostVar4}), (\ref{eq:devPostVar5}) as well as the identities $\frac{1}{\tau+j} - \frac{1}{\tau+j+1} = \frac{1}{(\tau+j)(\tau + j +1)}$ and $\frac{1}{\tau} - \frac{1}{\tau+j} = \frac{j}{\tau(\tau+j)}$.}
\[
\delta^{\tau} \cdot \underbrace{\left( \sum_{j = \underline{j}+1}^{t - \tau} \delta^{j-1} \left[\frac{(1+\epsilon/2)\phi(\mathcal{S}^*)^2}{(\tau+j)(\tau + j +1)} \right] - \sum_{ j = 1}^{\underline{j}} \delta^{j-1}\left[\frac{(1+\epsilon/2)\phi(\mathcal{S}^*)^2 j}{\tau(\tau+j)} \right]  \right)}_{(**)}.
\]
Note that $\underline{j}$ has already been fixed. So as $\delta \rightarrow 1$ and $t - \tau \rightarrow \infty$,\footnote{Since $\frac{(1+\epsilon)\phi(\mathcal{S}^*)^2}{t} \leq V(d_{\delta}(t)) \leq V(d_{\delta}(\tau)) \leq \frac{(1+\epsilon/2)\phi(\mathcal{S}^*)^2}{\tau}$, we have $\tau \leq \frac{1+\epsilon/2}{1+\epsilon} t$. So as $t$ becomes large, the difference $t - \tau$ also necessarily becomes large.} the term $(**)$ above converges to 
\[
(1+\epsilon/2)\phi(\mathcal{S}^*)^2 \cdot \left[ \frac{1}{\tau+\underline{j}+1} - \sum_{ j = 1}^{\underline{j}} \frac{j}{\tau(\tau+j)} \right].
\]
For large $\tau$, the above expression is larger than $\phi(\mathcal{S}^*)^2/\tau$.

Summarizing the above, we have shown that whenever $\tau > \underline{\tau}$, the deviation strategy achieves payoff gain in periods $\tau + 1$ through $t$ of at least $\delta^{\tau}\phi(\mathcal{S}^*)^2 / \tau$ (for $\delta$ close to 1). Although the deviation strategy might do worse in periods $t+1$ onwards, the potential payoff loss is at most $O(\frac{\delta^{t}}{1- \delta})$, which is smaller than the aforementioned payoff gain $\delta^{\tau}\phi(\mathcal{S}^*)^2 / \tau$ as $t - \tau \rightarrow \infty$  (since $\tau \leq \frac{1+\epsilon/2}{1+\epsilon} t$). Hence whenever $\tau > \underline{\tau}$, the deviation we constructed is a profitable deviation, and the proposition holds in these situations. 

\bigskip

Finally, we need to address the case where the previously-defined $\tau$ is weakly less than $\underline{\tau}$. This covers the case in which $\tau$ does not exist according to our earlier definition (simply let $\tau = 0$). Instead of (\ref{eq:devPostVar2}), we use the following weaker inequality
\[
\frac{(M+1)\phi(\mathcal{S}^*)^2}{M \cdot \tilde{V}(\tau+j)} \geq j - Mk.
\]
That is, $\tilde{V}(\tau+j) \leq \frac{(1+\frac{1}{M})\phi(\mathcal{S}^*)^2}{j-Mk}$. Recall that $\frac{1}{M} < \frac{\epsilon}{4}$ and $\tau \leq \underline{\tau}$ is now bounded. Thus for $j > \overline{j}$ (where $\overline{j}$ may need to be larger than $\underline{j}$), we would have 
\begin{equation}\label{eq:devPostVar6}
\tilde{V}(\tau+j) \leq \frac{(1+\epsilon/4)\phi(\mathcal{S}^*)^2}{\tau + j}, ~~ \forall \overline{j}+1 \leq j \leq t - \tau.
\end{equation}
And for small $j$ we can simply bound the posterior variance by the prior:
\begin{equation}\label{eq:devPostVar7}
\tilde{V}(\tau+j) \leq c, ~~ \forall 1 \leq j \leq \overline{j}.
\end{equation}
Using the estimates (\ref{eq:devPostVar6}) and (\ref{eq:devPostVar7}) in place of (\ref{eq:devPostVar4}) and (\ref{eq:devPostVar5}), we find that the deviation strategy achieves payoff gain in periods $\tau+1$ through $t$ of at least 
\[
\delta^{\tau} \cdot \left( \sum_{j = \overline{j}+1}^{t - \tau} \delta^{j-1} \left[\frac{\epsilon/4 \cdot \phi(\mathcal{S}^*)^2}{\tau+j} \right] - \sum_{ j = 1}^{\overline{j}} \delta^{j-1} c \right).
\]
Importantly, because (\ref{eq:devPostVar6}) improves upon (\ref{eq:devPostVar4}), we now have a harmonic sum in (the first part of) the parenthesis, which becomes arbitrarily large for $\delta$ close to $1$. Hence the above payoff gain is at least $\delta^{\tau}$ as $\delta \rightarrow 1$ and $t \rightarrow \infty$. Once again, this payoff gain dominates any potential loss after period $t$, showing that the deviation strategy is profitable. The proof of Proposition \ref{prop:highDelta} is complete.

%%%%%%%%%%%%%%%%%%%%%%%%%%%%%%%%%%%%%%%%%%%%%%%%%%%%%%%%
\pagebreak
\begin{center}
\huge{For Online Publication}
\end{center}
\normalsize
\section{Online Appendix}

%%%%
\subsection{Example Failing Unique Minimizer}\label{appx:examples}

There are $K=3$ states $\omega, b_1, b_2$ independently drawn with prior variances $\frac{1}{\alpha}, \frac{1}{\beta}, \frac{1}{\gamma}$. $N=4$ signals are available, and they are respectively 
\begin{align*}
X_1 &= \omega + b_1 + \epsilon_1 \\
X_2 &= b_1 + \epsilon_2 \\
X_3 &= \omega + b_2 + \epsilon_3 \\
X_4 &=  b_2 + \epsilon_4
\end{align*}
with standard normal errors. Note that the former two signals and the latter two signals are both spanning, and these two sets generate the same asymptotic variance. Thus Assumption \ref{assumptionUniqueMin} is not satisfied. 

The posterior variance about $\omega$ as a function of the number of observations $q_1,q_2,q_3,q_4$ of each signal type can be derived as follows. First, given $q_2$ observations of signal $X_2$ and $q_4$ observations of signal $X_4$, posterior variance about $\theta_2$ and $\theta_3$ are $1/(q_2+\beta)$ and $1/(q_4+\gamma)$ respectively. Consider now $q_1$ additional observations of $X_1$; this provides the same information about the payoff-relevant state $\omega$ as the signal $\omega + \epsilon'$, where $\epsilon'$ is an independent noise term with variance $ \frac{1}{q_1} + \frac{1}{q_2+\beta}$.  Similarly, $q_3$ additional observations of $X_3$ are equivalent to the signal $\omega + \epsilon''$, where $\epsilon''$ is an independent noise term with variance $ \frac{1}{q_3} + \frac{1}{q_4+\gamma}$. From this we deduce that posterior variance about $\omega$ is 
\[
V(q_1, q_2, q_3, q_4) = 1 \Bigg/ \left(\alpha+\frac{1}{\frac{1}{q_1} + \frac{1}{q_2+\beta}}+\frac{1}{\frac{1}{q_3} + \frac{1}{q_4+\gamma}} \right).
\]
The optimal division vector thus seeks to \emph{maximize} 
\begin{equation}\label{eq:q1q2q3q4}
\frac{1}{\frac{1}{q_1} + \frac{1}{q_2+\beta}}+\frac{1}{\frac{1}{q_3} + \frac{1}{q_4+\gamma}} 
\end{equation}
It is useful to rewrite (\ref{eq:q1q2q3q4}) in the following way:
\[\frac14 \left(q_1+q_2+\beta+q_3+q_4+\gamma - \frac{(q_1 - q_2 - \beta)^2}{q_1+q_2+\beta} - \frac{(q_3-q_4-\gamma)^2}{q_3+q_4+\gamma}\right).
\]
Then, since $q_1+q_2+\beta+q_3+q_4+\gamma=t+\beta+\gamma$ is fixed at any time $t$, it is equivalent to choose $q_1, q_2,q_3, q_4$ to minimize the sum of ratios
\[\frac{(q_1 - q_2 - \beta)^2}{q_1+q_2+\beta} + \frac{(q_3-q_4-\gamma)^2}{q_3+q_4+\gamma}.\]

\noindent Ideally,  if signals were perfectly divisible, the optimum would be to choose $q_1 = q_2 + \beta$ and $q_3 = q_4 + \gamma$. But as each $q_i$ is restricted to integer values, this continuous optimum is not feasible whenever $\beta$ and $\gamma$ are not integers.

The solution to this integer optimization problem is involved, and we need some additional notation. Let $r$ be the integer that minimizes $\lvert r - \beta \rvert$ (the distance to $\beta$) and let $s$ be the integer that minimizes $\lvert s - \gamma \rvert$. Further, let $\langle \beta \rangle$ and $\langle \gamma \rangle$ be the value of these absolute differences. 

\begin{claim} When the period $t$ has the same parity as $r+s$, the $t$-optimal $(q_1,q_2,q_3,q_4)$ satisfy
\[
q_1, q_2 \approx \frac{\langle\beta\rangle}{2\langle\beta\rangle + 2\langle\gamma\rangle} \cdot t; \quad q_3, q_4 \approx \frac{\langle\gamma\rangle}{2\langle\beta\rangle + 2\langle\gamma\rangle} \cdot t.
\] 
Otherwise the $t$-optimal $(q_1, q_2, q_3, q_4)$ satisfy
\[
q_1, q_2 \approx \frac{\langle\beta\rangle}{2\langle\beta\rangle + 2- 2\langle\gamma\rangle} \cdot t; \quad q_3, q_4 \approx \frac{1 - \langle\gamma\rangle}{2\langle\beta\rangle + 2 - 2 \langle\gamma\rangle} \cdot t.
\] 
\end{claim}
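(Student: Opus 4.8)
The plan is to decouple the two summands by introducing the pair-totals $a := q_1+q_2$ and $b := q_3+q_4$, so that $a+b=t$, together with the within-pair imbalances $d := q_1-q_2$ and $d' := q_3-q_4$. The key observation is that each denominator depends only on the corresponding total ($q_1+q_2+\beta = a+\beta$, $q_3+q_4+\gamma = b+\gamma$), while each numerator depends only on the imbalance. Hence for a fixed split $(a,b)$ the objective separates, and the inner problem is simply to choose integers $d \equiv a \pmod 2$ and $d' \equiv b \pmod 2$ minimizing $(d-\beta)^2$ and $(d'-\gamma)^2$ respectively. First I would solve this inner problem: the minimizer $d$ is the integer of parity $a$ closest to $\beta$, and the resulting minimal numerator equals $\langle\beta\rangle^2$ when $a\equiv r\pmod 2$ and $(1-\langle\beta\rangle)^2$ otherwise (using $\langle\beta\rangle\le 1/2$); symmetrically for $d'$ in terms of $\gamma$ and $s$. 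Note that for large $t$ the feasibility constraint $|d|\le a$ is automatic, since the optimal $d$ is bounded.

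This reduces the problem to a choice of parities together with a continuous split. Writing $n_1 \in \{\langle\beta\rangle,\,1-\langle\beta\rangle\}$ and $n_2 \in \{\langle\gamma\rangle,\,1-\langle\gamma\rangle\}$ for the numerator roots determined by the parities of $a$ and $b$, the objective becomes $\tfrac{n_1^2}{a+\beta}+\tfrac{n_2^2}{b+\gamma}$. For fixed parities I would optimize the split by treating $x:=a+\beta$ and $y:=b+\gamma$ as continuous with $x+y = t+\beta+\gamma =: T$ fixed; by Cauchy–Schwarz (or a Lagrange computation) the minimum equals $\tfrac{(n_1+n_2)^2}{T}$, attained at $x/y = n_1/n_2$, i.e. $a+\beta \propto n_1$ and $b+\gamma \propto n_2$.

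It remains to choose the parities. Since $a+b=t$, the parities $p_a,p_b$ satisfy $p_a+p_b\equiv t\pmod 2$, which couples the two choices. When $t\equiv r+s$, the feasible options are ``match both'' ($n_1=\langle\beta\rangle$, $n_2=\langle\gamma\rangle$) and ``mismatch both'' ($n_1=1-\langle\beta\rangle$, $n_2=1-\langle\gamma\rangle$); because $\langle\beta\rangle,\langle\gamma\rangle\le 1/2$ we have $\langle\beta\rangle+\langle\gamma\rangle\le 2-\langle\beta\rangle-\langle\gamma\rangle$, so ``match both'' minimizes $n_1+n_2$ and hence the objective. When $t\not\equiv r+s$, the feasible options are the two ``one match, one mismatch'' configurations, and comparing $\langle\beta\rangle+(1-\langle\gamma\rangle)$ with $(1-\langle\beta\rangle)+\langle\gamma\rangle$ selects the one matching the parameter with smaller fractional distance; assuming without loss of generality $\langle\beta\rangle\le\langle\gamma\rangle$, this is $n_1=\langle\beta\rangle$, $n_2=1-\langle\gamma\rangle$. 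Substituting the optimal ratios $a+\beta\propto n_1$, $b+\gamma\propto n_2$, and using $q_1=(a+d)/2$, $q_2=(a-d)/2$ with $|d|=O(1)$, yields $q_1,q_2=\tfrac{n_1}{2(n_1+n_2)}\,t+O(1)$ and $q_3,q_4=\tfrac{n_2}{2(n_1+n_2)}\,t+O(1)$, which are exactly the two displayed formulas.

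The main obstacle is the integrality bookkeeping: the continuous optimizer for $a$ must be rounded to an integer of the prescribed parity while keeping $0\le a\le t$, and I must verify that this rounding perturbs $a$ (and hence each $q_i$) by only $O(1)$, so that the stated frequencies hold up to the bounded error absorbed into ``$\approx$''. A secondary point requiring care is the symmetry in the second case: the literal displayed formula corresponds to $\langle\beta\rangle\le\langle\gamma\rangle$, while the opposite ordering gives the mirror-image allocation (match $\gamma$, mismatch $\beta$), so I would state this WLOG explicitly. I would also flag the non-generic boundary subtleties when $\langle\beta\rangle$ or $\langle\gamma\rangle$ equals $0$ or $1/2$, where ties across parities can occur.
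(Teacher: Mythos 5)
Your proposal is correct and follows essentially the same route as the paper's proof: both reduce to the parity-constrained minimization of the within-pair imbalances (yielding numerator roots $\langle\beta\rangle$ or $1-\langle\beta\rangle$, coupled across the two pairs by the parity of $t$), and then split $t$ between the pairs in proportion to those roots. Your organization via the totals $a,b$ and imbalances $d,d'$ with a Cauchy--Schwarz step is a slightly cleaner packaging of the same argument, and your explicit flags about the WLOG ordering $\langle\beta\rangle\le\langle\gamma\rangle$ and the boundary cases match the assumptions the paper imposes ($2\beta,2\gamma\notin\mathbb{Z}$ and $\langle\beta\rangle\neq\langle\gamma\rangle$).
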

\bigskip
\noindent Thus, all four signals are observed with positive frequency in the long run according to the optimal criterion. 

\begin{proof} To solve the integer maximization problem (\ref{eq:q1q2q3q4}), let $r$ be the integer that minimizes $\lvert r - \beta \rvert$ (the distance to $\beta$) and let $s$ be the integer that minimizes $\lvert s - \gamma \rvert$. Further, let $\langle \beta \rangle$ and $\langle \gamma \rangle$ be the value of these absolute differences. We assume $2 \beta, 2 \gamma$ are not integers, so that $0 < \langle \beta \rangle, \langle \gamma \rangle < \frac{1}{2}$. We also assume $\langle \beta \rangle \neq \langle \gamma \rangle$, and by symmetry focus on the case of $\langle \beta \rangle < \langle \gamma \rangle$. 

With these assumptions, it is clear that when $q_1, q_2$ are integers, the minimum value of $\lvert q_1 - q_2 - \beta \rvert$ is $\langle \beta \rangle$, achieved if and only if $q_1 = q_2 + r$. Similarly the minimum value of $\lvert q_3 - q_4 - \gamma \rvert$ is $\langle \gamma \rangle$, achieved when $q_3 = q_4 + s$. Now if the total number of observations $t$ has the \emph{same parity} as $r + s$, it is possible to choose $q_1, q_2, q_3, q_4$ such that their sum is $t$ and $q_1 = q_2 + r$, $q_3 = q_4 + s$\textemdash any pair $q_2, q_4$ with sum $\frac{t - r - s}{2}$ leads to such a solution. Given these constraints, then, the optimum is to choose $q_2, q_4$ to minimize $\frac{\langle \beta \rangle^2}{2q_2+r+\beta} + \frac{\langle \gamma \rangle^2}{2q_4+s+\gamma}$. The optimal $q_2$ and $q_4$ satisfy $q_2/q_4 \approx \langle \beta \rangle / \langle \gamma \rangle$, which together with $q_2 + q_4 = \frac{t - r - s}{2}$ implies 
\[
q_1, q_2 \approx \frac{\langle\beta\rangle}{2\langle\beta\rangle + 2\langle\gamma\rangle} \cdot t; \quad q_3, q_4 \approx \frac{\langle\gamma\rangle}{2\langle\beta\rangle + 2\langle\gamma\rangle} \cdot t.
\] 

On the other hand, suppose $t$ has the \emph{opposite parity} to $r + s$. In this case $q_1 = q_2 + r$ and $q_3 = q_4 + s$ cannot both hold, thus $\lvert q_1 - q_2 - \beta \rvert$ and $\lvert q_3 - q_4 - \gamma \rvert$ cannot both take their minimum values $\langle \beta \rangle$ and $\langle \gamma \rangle$. It turns out that the best one can do is choose $q_1 = q_2 + r$ and $q_3 = q_4 + s \pm 1$ so that $\lvert q_1 - q_2 - \beta \rvert = \langle \beta \rangle$ and $\lvert q_3 - q_4 - \gamma \rvert = 1 - \langle \gamma \rangle$. Then, the optimal choice of $q_2, q_4$ with sum $\frac{t - r - s \mp 1}{2}$ to minimize $\frac{\langle \beta \rangle^2}{2q_2+r+\beta} + \frac{(1-\langle \gamma \rangle)^2}{2q_4+s+\gamma \pm 1}$. This yields
\[
q_1, q_2 \approx \frac{\langle\beta\rangle}{2\langle\beta\rangle + 2- 2\langle\gamma\rangle} \cdot t; \quad q_3, q_4 \approx \frac{1 - \langle\gamma\rangle}{2\langle\beta\rangle + 2 - 2 \langle\gamma\rangle} \cdot t\]
as desired.
\end{proof}

\bigskip

The intuition for the conclusion above is simple: We would most prefer to set $q_1 = q_2 + \beta$ and $q_3 = q_4 + \gamma$, but this is not feasible when $\beta$ and $\gamma$ are not integers. Thus, there is inevitably some loss from the ideal case where signals are perfectly divisible. This loss turns out to be convex in signal counts, so both groups of signals are observed infinitely often to minimize total loss.

%%%%
\subsection{Strengthening of Theorem \ref{thm:general} part (b)} \label{appx:m(t)finite}
Here we show the following result, which strengthens the restated Theorem \ref{thm:general} part (b) (see Appendix \ref{appx:myopicConvergence}). It implies that under Assumption \ref{assumptionUniqueMinStrong}, any signal that is observed with zero long-run frequency must in fact be observed only finitely often. 

\bigskip

\noindent \textbf{Stronger Version of Theorem \ref{thm:general} part (b).} \emph{Suppose that the signals observed infinitely often span $\mathbb{R}^{K}$. Then $m_i(t) = \lambda^*_i \cdot t + O(1), \forall i$. }

\bigskip

The proof is divided into two subsections below.

\subsubsection{A Weaker Result}
Recall that we have previously shown $m_i(t) \sim \lambda^*_i \cdot t$. We can first improve the estimate of the residual term to $m_i(t) = \lambda^*_i \cdot t + O(\ln t)$. Indeed, Lemma \ref{lemmVarReduction} yields that for some constant $L$ and every $t \geq L$, 
\begin{equation}\label{eq4}
V(m(t+1)) \leq V(m(t)) - \left(1- \frac{L}{t}\right) \cdot \frac{V(m(t))^2}{\phi(\mathcal{S^*})^2}. 
\end{equation}
This is because we may apply Lemma \ref{lemmVarReduction} with $M = \min_{j \in \mathcal{S}^*} m_j(t)$, which is at least $\frac{t}{L}$. 

Let $g(t) = \frac{V(m(t))}{\phi(\mathcal{S}^*)^2}$. Then the above simplifies to
\[
g(t+1) \leq g(t) - \left(1 - \frac{L}{t}\right) g(t)^2. 
\]
Inverting both sides, we have
\begin{equation}\label{eq5}
\frac{1}{g(t+1)} \geq \frac{1}{g(t)} + \frac{1 - L/t}{1 - (1-L/t)g(t)} \geq \frac{1}{g(t)} + 1 - \frac{L}{t}. 
\end{equation}
This enables us to deduce 
\[
\frac{1}{g(t)} \geq \frac{1}{g(L)} + \sum_{x = L}^{t - 1}\left(1 - \frac{L}{x}\right) \geq t - O(\ln t).
\]
Thus $g(t) \leq \frac{1}{t - O(\ln t)} \leq \frac{1}{t} + O(\frac{\ln t}{t^2})$. That is,
\[
V(m(t)) \leq \frac{\phi(\mathcal{S}^*)^2}{t} + O\left(\frac{\ln t}{t^2}\right). 
\]
Since $t \cdot V(\lambda t)$ approaches $V^*(\lambda)$ at the rate of $\frac{1}{t}$, we have 
\begin{equation}\label{eq6}
V^*\left(\frac{m(t)}{t}\right) \leq t \cdot V(m(t)) + O\left(\frac{1}{t}\right) \leq \phi(\mathcal{S}^*)^2 + O\left(\frac{\ln t}{t}\right). 
\end{equation}
Suppose $\mathcal{S}^* = \{1, \dots, k\}$. Then the above estimate together with (\ref{eq:V*lowerBound}) implies $\sum_{j > k} \frac{m_j(t)}{t} = O(\frac{\ln t}{t})$. Hence $m_j(t) = O(\ln t)$ for each signal $j$ outside of the best set. 

Now we turn attention to those signals in the best set. If these were the only available signals, then the analysis in \citet{LiangMuSyrgkanis} gives $\partial_i V(m(t)) = -\left(\frac{\beta_i^{\mathcal{S}^*}}{m_i(t)}\right)^2$. In our current setting, signals $j > k$ affect this marginal value of signal $i$, but the influence is limited because $m_j(t) = O(\ln t)$. Specifically, we can show that
\[
\partial_i V(m(t)) = - \left(\frac{\beta_i^{\mathcal{S}^*}}{m_i(t)}\right)^2 \cdot \left(1 + O\left(\frac{\ln t}{t}\right)\right).
\]
This then implies $m_i(t) \leq \lambda^*_i \cdot t + O(\ln t)$.\footnote{Otherwise, consider $\tau+1 \leq t$ to be the last period in which signal $i$ was observed. Then $m_i(\tau)$ is larger than $\lambda^*_i \cdot \tau$ by several $\ln (\tau)$, while there exists some other signal $\hat{i}$ in the best set with $m_{\hat{i}}(\tau) < \lambda^*_i \cdot \tau$. But then $\lvert \partial_i V(m(\tau)) \rvert < \lvert \partial_{\hat{i}} V(m(\tau)) \rvert$, meaning that the agent in period $\tau+1$ should not have chosen signal $i$.} Using $\sum_{i \leq k} m_i(t) = t - O(\ln t)$, we deduce that $m_i(t) \geq \lambda^*_i \cdot t - O(\ln t)$ must also hold. Hence $m_i(t) = \lambda^*_i \cdot t + O(\ln t)$ for each signal $i$. 

\subsubsection{Getting Rid of the Log}
In order to remove the $\ln t$ residual term, we need a refined analysis. The reason we ended up with $\ln t$ is because we used (\ref{eq4}) and (\ref{eq5}) at \emph{each} period $t$; the ``$\frac{L}{t}$" term in those equations adds up to $\ln t$. In what follows, instead of quantifying the variance reduction in each period (as we did), we will lower-bound the variance reduction over multiple periods. This will lead to better estimates and enable us to prove $m_i(t) = \lambda^*_i \cdot t + O(1)$. 

To give more detail, let $t_1 < t_2 < \dots$ denote the periods in which some signal $j > k$ is chosen. Since $m_j(t) = O(\ln t)$ for each such signal $j$, $t_l \geq 2^{\epsilon \cdot l}$ holds for some positive constant $\epsilon$ and each positive integer $l$. Continuing to let $g(t) = \frac{V(m(t))}{\phi(\mathcal{S}^*)^2}$, our goal is to estimate the difference between $\frac{1}{g(t_{l+1})}$ and $\frac{1}{g(t_{l})}$. 

Ignoring period $t_{l+1}$ for the moment, we are interested in $\frac{\phi(\mathcal{S}^*)^2}{V(m(t_{l+1}-1))} - \frac{\phi(\mathcal{S}^*)^2}{V(m(t_{l}))}$, which is just the difference in the \emph{precision} about $\omega$ when the division vector changes from $m(t_l)$ to $m(t_{l+1}-1)$. From the proof of Lemma \ref{lemmDirectionalDer}, we can estimate this difference if the change were along the direction $\lambda^*$:
\begin{equation}\label{eq7}
\frac{\phi(\mathcal{S}^*)^2}{V(m(t_l) + \lambda^*(t_{l+1}-1-t_l))} - \frac{\phi(\mathcal{S}^*)^2}{V(m(t_{l}))} \geq t_{l+1} - 1 - t_l.
\end{equation}
Now, the vector $m(t_{l+1}-1)$ is not exactly equal to $m(t_l) + \lambda^*(t_{l+1}-1-t_l)$, so the above estimate is not directly applicable. However, by our definition of $t_l$ and $t_{l+1}$, any difference between these vectors must be in the first $k$ signals. In addition, the difference is bounded by $O(\ln  t_{l+1})$ by what we have shown. This implies\footnote{By the mean-value theorem, the difference can be written as $O(\ln  t_{l+1})$ multiplied by a certain directional derivative. Since the coordinates of $m(t_{l+1}-1)$ and of $m(t_l) + \lambda^*(t_{l+1}-1-t_l)$ both sum to $t_{l+1}-1$, this directional derivative has a direction vector whose coordinates sum to zero. Combined with $\partial_i V(m(t)) = - (\frac{\phi(\mathcal{S}^*)^2}{t}) \cdot (1 + O(\frac{\ln t}{t}))$ (which we showed before), this directional derivative has size $O(\frac{\ln t}{t^3})$. 
}
\[
V(m(t_{l+1}-1)) - V(m(t_l) + \lambda^*(t_{l+1}-1-t_l))= O\left (\frac{\ln^2 t_{l+1}}{t_{l+1}^3} \right).
\]
Since $V(m(t_{l+1}-1))$ is on the oder of $\frac{1}{t_{l+1}}$, we thus have (if the constant $L$ is large)
\begin{equation}\label{eq8}
\frac{\phi(\mathcal{S}^*)^2}{V(m(t_{l+1}-1))}-\frac{\phi(\mathcal{S}^*)^2}{V(m(t_l) + \lambda^*(t_{l+1}-1-t_l))} \geq -\frac{L \ln^2t_{l+1}}{t_{l+1}}. 
\end{equation}
(\ref{eq7}) and (\ref{eq8}) together imply 
\[
\frac{1}{g(t_{l+1}-1)} \geq \frac{1}{g(t_l)} + (t_{l+1} - 1 - t_l) - \frac{L \ln^2t_{l+1}}{t_{l+1}}.
\]
Finally, we can apply (\ref{eq5}) to $t = t_{l+1} - 1$. Altogether we deduce
\[
\frac{1}{g(t_{l+1})} \geq \frac{1}{g(t_l)} + (t_{l+1} - t_l) - \frac{2L \ln^2t_{l+1}}{t_{l+1}}. 
\]

Now observe that $\sum_{l} \frac{2L \ln^2t_{l+1}}{t_{l+1}}$ converges (this is the sense in which our estimates here improve upon (\ref{eq5}), where $\frac{L}{t}$ leads to a divergent sum). Thus we are able to conclude
\[
\frac{1}{g({t_l})} \geq t_l - O(1), \quad \forall l. 
\]
In fact, this holds also at periods $t \neq t_l$. Therefore $V(m(t)) \leq \frac{\phi(\mathcal{S}^*)^2}{t} + O(\frac{1}{t^2})$, and 
\begin{equation}\label{eq9}
V^*\left(\frac{m(t)}{t}\right) \leq t \cdot V(m(t)) + O\left(\frac{1}{t}\right) \leq \phi\left(\mathcal{S}^*\right)^2 + O\left(\frac{1}{t}\right).
\end{equation}
This equation (\ref{eq9}) improves upon the previously-derived (\ref{eq6}). Hence by (\ref{eq:V*lowerBound}) again, $m_j(t) = O(1)$ for each signal $j > k$. And once these signal counts are fixed, Proposition \ref{prop:N=K} implies $m_i(t) = \lambda^*_i \cdot t + O(1)$ also holds for signal $i \leq k$. This completes the proof. 

%%%%
\subsection{Example of a Learning Trap with Non-Normal Signals} \label{appx:non-normalLT}
The payoff-relevant state $\theta \in \{\theta_1, \theta_2\}$ is binary and agents have a uniform prior. There are three available information sources. The first, $X_1$, is described by the information structure
\[
\begin{array}{ccc}
& \theta_1 & \theta_2 \\
s_1 & p & 1-p \\
s_2 & 1-p & p
\end{array}
\]
with $p>1/2$. Information sources 2 and 3 provide perfectly correlated signals (conditional on $\theta$) taking values in $\{a,b\}$: In state $\theta_1$, there is an equal probability that $X_2=a$ and $X_3=b$ or $X_2=b$ and $X_3=a$. In state $\theta_2$, there is an equal probability that $X_2=X_3=a$ and $X_2=X_3=b$. 

In this environment, every agent chooses to acquire the noisy signal $X_1$, even though one observation of each of $X_2$ and $X_3$ would perfectly reveal the state.\footnote{We thank Andrew Postlewaite for this example.}

%%%%
\subsection{Example Mentioned in Section \ref{sec:preciseInfo}} \label{appx:preciseInfo}
Suppose the available signals are 
\begin{align*}
X_1 &= 10 x + \epsilon_1 \\
X_2 &= 10 y + \epsilon_2 \\
X_3 &= 4 x + 5 y + 10 b \\
X_4 &= 8 x + 6 y -20 b
\end{align*}
where $\omega = x+y$ and $b$ is a payoff-irrelevant unknown. Set the prior to be
\[(x,y,b)'\sim \mathcal{N}\left(\left(\begin{array}{c} 0 \\ 0 \\ 0 \end{array}\right), \left(\begin{array}{ccc}
0.1 & 0 & 0 \\
0 & 0.1 & 0 \\
0 & 0 & 0.039\end{array}\right)\right).
\]

It can be computed that agents observe only the signals $X_1$ and $X_2$, although the set $\{X_3,X_4\}$ is optimal with $\phi(\{X_1,X_2\})=1/5>3/16=\phi(\{X_3,X_4\})$. Thus, the set $\{X_1, X_2\}$ constitutes a learning trap for this problem. But if each signal choice were to produce ten independent realizations, agents starting from the above prior would observe only the signals $X_3$ and $X_4$. This breaks the learning trap.

%%%%
\subsection{Supplementary Material to Section \ref{sec:extensions}} 

\subsubsection{General Payoff Functions} \label{appx:genpayoff}
We comment here on the possibilities for (and limitations to) generalizing Proposition \ref{prop:highDelta} beyond the quadratic loss payoff function. Proposition \ref{prop:highDelta} does extend to some other ``prediction problems,'' in which every agent's payoff function $u(a, \omega)$ is the same and depends only on $\lvert a - \omega \rvert$. For example, the proposition holds for $u(a, \omega) = \lvert a - \omega \rvert^{\gamma}$ with any exponent $\gamma \in (0, 2]$; such an extension only requires minor changes to our proof in Appendix \ref{appx:highDelta}. 

Nonetheless, even restricting to prediction problems, Proposition \ref{prop:highDelta} does \emph{not} hold in general. For a counterexample, consider $u(a, \omega) = -\mathbf{1}_{\{\lvert a - \omega \rvert > 1\}}$, which punishes the agent for any prediction that differs from the true state by more than $1$.\footnote{We thank Alex Wolitzky for this example.} Intuitively,  the payoff gain from further information decreases sharply (indeed, exponentially) with the amount of information that has already been acquired. Thus, even with a forward-looking objective function, the range of future payoffs is limited and each agent cares mostly to maximize his own payoff. This results in an optimal procedure that resembles myopic behavior (and differs from the procedure that would maximize speed of learning).
%It can be verified that such failure of Proposition \ref{prop:highDelta} does occur in a variant of our Example \ref{ex:LT}, where the coefficient ``3" in signal $X_2$ is modified to ``2 + $\epsilon$" for sufficiently small $\epsilon$. This modification ensures that always observing $X_1$ is \emph{not too much worse} in future payoffs than alternating between $X_2$ and $X_3$. Since $X_1$ yields much higher payoff in the current period, it is chosen by each agent. We omit the computational details. 

The above counterexample illustrates the difficulty in estimating the value of information when working with an arbitrary payoff function. In order to make intertemporal payoff comparisons, we need to know how much payoff is gained/lost when the posterior variance is decreased/increased by a certain amount.\footnote{This difficulty becomes more salient if we try to go beyond prediction problems: The value of information in that case will depend on signal realizations, which brings another challenge of stochasticity.} This can be challenging in general, see \citet{ChadeSchlee} for a related discussion.

Finally, while it is more than necessary to assume that agents have the same payoff function, the truth of Proposition \ref{prop:highDelta} does require some restrictions on how the payoff functions differ. Otherwise, suppose for example that payoffs take the form $-\alpha_t (a_t-\omega)^2$, where $\alpha_t$ decreases exponentially fast. Then even with the $\delta$-discounted objective, the social planner puts most of the weight on earlier agents, resembling the myopic behavior of individual agents. 

\subsubsection{Low Altruism} \label{appx:lowDelta}
Here we argue that part (a) of Theorem \ref{thm:general} generalizes to agents who are not completely myopic, but are sufficiently impatient. That is, we will show that if signals $1, \dots, k$ are subspace-optimal, then there exist priors given which agents with low $\delta$ always observe these signals in equilibrium. 

We follow the construction in Appendix \ref{appx:myopicExistence}. The added difficulty here is to show that if any agent ever chooses a signal $j > k$, the payoff loss in that period (relative to myopically choosing among the first $k$ signals) is of the same magnitude as possible payoff gains in future periods. Then, for sufficiently small $\delta$, such a deviation is not profitable. 

Suppose that agents sample only from the first $k$ signals in the first $t-1$ periods, with frequencies close to $\lambda^*$. Then, the posterior variances $v_1, \dots, v_k$ (which are also the prior for period $t$) are on the order of $\frac{1}{t}$. Thus any signal acquisition in period $t$ leads to a variance reduction on the order of $\frac{1}{t^2}$. However, using the computation in Appendix \ref{appx:myopicExistence}, we can show that for some positive constant $\xi$ (independent of $t$), the variance reduction of any signal $j > k$ is smaller than the variance reduction of any of the first $k$ signals by $\frac{\xi}{t^2}$. This is the amount of payoff loss in period $t$ under a deviation to signal $j$. 

Of course, this deviation could improve the posterior variance in future periods. But even for the best continuation strategy, the posterior variance in period $t + m$ can at most be reduced by $O(\frac{m}{t^2})$.\footnote{This is because over $m$ periods, the increase in the \emph{precision} matrix is $O(m)$.} Thus if we choose $\xi$ to be small enough, the payoff gain in period $t+m$ is bounded above by $\frac{m}{\xi t^2}$. Note that for $\delta$ sufficiently small, 
\[
- \frac{\xi}{t^2} + \sum_{m \geq 1} \delta^m \cdot \frac{m}{\xi t^2} < 0.
\]
Hence the deviation is not profitable and the proof is complete. 

\subsubsection{Multiple Payoff-Relevant States} \label{appx:multiStates}
Let $V(q_1,\dots,q_N)$ be a weighted sum of  posterior variances about the $r$ payoff-relevant states. As before, define $V^*$ to be a normalized, asymptotic version of $V$. Let $n(t)$ continue to represent any allocation of $t$ observations that minimizes $V$. Then, under a modification of the Unique Minimizer assumption\textemdash we require $V^*$ to be \emph{uniquely} minimized\textemdash the optimal frequency vector $\lambda^{OPT}:= \lim_{t \rightarrow \infty} n(t)/t$ is well-defined. Nonetheless, we emphasize that with $r > 1$, these optimal allocations generally involve more than $K$ signals. A theorem of \citet{Chaloner} shows that $\lambda^{OPT}$ is supported on at most $\frac{r(2K+1-r)}{2}$ signals.

We can generalize the notion of ``minimal spanning sets" as follows: A set of signals $\mathcal{S}$ is minimally-spanning if optimal sampling from $\mathcal{S}$ puts positive frequency on \emph{every} signal in $\mathcal{S}$. When $r = 1$, this definition agrees with the definition in our main model. But for $r > 1$, we no longer know of a simple method for checking whether a set is minimally-spanning. 

Similarly, we say that a minimal spanning set $\mathcal{S}$ is ``subspace-optimal" if, when agents are constrained to choose from $\overline{\mathcal{S}}$, the optimal frequency vector is supported on $\mathcal{S}$. With these definitions, Theorem \ref{thm:general} and its proof generalize without change.

\end{document}